\newtheorem{lemma}{{Lemma}}
\newtheorem*{assumption*}{{ Assumption}}
\newtheorem{theorem}{{Theorem}}
\def\tran{^{\mathsf{T}}}
\def\one{\mathds{1}}
\newcommand{\bp}{\small \begin{proof}}
	\newcommand{\ep}{\end{proof} \normalsize}
\newcommand{\Ex}{\mathbb{E}\hspace{0.05cm}}
\newcommand{\Exf}{\mathbb{E}_{\mathcal{F}_{i}}\hspace{0.05cm}}
\newcommand{\bm}[1]{\mbox{\boldmath $#1$}}
\newcommand{\be}{\begin{equation}}
\newcommand{\ee}{\end{equation}}
\newcommand{\bal}{\begin{align}}
\newcommand{\eal}{\end{align}}
\newcommand{\bq}{\begin{eqnarray}}
\newcommand{\eq}{\end{eqnarray}}
\newcommand{\bqn}{\begin{eqnarray*}}
	\newcommand{\eqn}{\end{eqnarray*}}
\newcommand{\nn}{\nonumber}
\newcommand{\ba}{\left[ \begin{array}}
	\newcommand{\ea}{\\ \end{array} \right]}
\newcommand{\qd}{\hfill{$\blacksquare$}}
\newcommand{\define}{\;\stackrel{\Delta}{=}\;}
\newcommand{\aseq}{\;\stackrel{a.s.}{=}\;}
\def\bxi  	{{\boldsymbol \xi}}
\def\bmu  	{{\boldsymbol \mu}}
\def\bpsi	{{\boldsymbol \psi}}
\def\Z{{\boldsymbol{Z}}}
\def\m{{\boldsymbol{m}}}
\def\Zint{{\mathchoice{\setbox1=\hbox{\sf Z}\copy1\kern-.75\wd1\box1}
		{\setbox1=\hbox{\sf Z}\copy1\kern-.75\wd1\box1}
		{\setbox1=\hbox{\scriptsize\sf Z}\copy1\kern-.75\wd1\box1}
		{\setbox1=\hbox{\scriptsize\sf Z}\copy1\kern-.75\wd1\box1}}}
\def\hlinewd#1{%
	\noalign{\ifnum0=`}\fi\hrule \@height #1 \futurelet
	\reserved@a\@xhline}
\begin{document}
	\def\helvetica{phvr7t.tfm}
	\def\helveticaoblique{phvro7t.tfm}
	\def\helveticabold{phvb7t.tfm}
	\def\helveticaboldoblique{phvbo7t.tfm}
	
	\font\sfb=\helveticabold
	=\helveticaboldoblique
	\title{Social Learning over Weakly-Connected Graphs}
	
	\author{Hawraa~Salami,~\IEEEmembership{Student Member,~IEEE,}
		Bicheng~Ying,~\IEEEmembership{Student Member,~IEEE,}\\
		and~Ali~H.~Sayed,~\IEEEmembership{Fellow,~IEEE}
	
		\thanks{This work was supported in part by NSF grants CCF-1524250 and ECCS-1407712, DARPA project N66001--14--2-4029, and by a Visiting Professorship from the Leverhulme Trust, United Kingdom. An early short version of this work appears in the conference publication \cite{icassp2}.}
		\thanks{The authors are with Department of Electrical Engineering, University of California, Los Angeles, CA 90025. Emails: \{hsalami, ybc, sayed\}@ucla.edu}
	}
	\maketitle
	
	\begin{abstract}
		In this paper, we study diffusion social learning over weakly-connected graphs. We show that the asymmetric flow of information hinders the learning abilities of certain agents regardless of their local observations. Under some circumstances that we clarify in this work, a scenario of {\em{total influence}} (or ``mind-control'') arises where a set of influential agents ends up shaping the beliefs of non-influential agents. We derive useful closed-form expressions that characterize this influence, and which can be used to motivate design problems to control it. We provide simulation examples to illustrate the results.
	\end{abstract}
	
	\begin{IEEEkeywords}
		Weakly-connected networks, social learning, Bayesian update, diffusion strategy, leader-follower relationship.
	\end{IEEEkeywords}
	
	\section{Introduction and Related Works}
	Social interactions among networked agents influence the beliefs of agents about
	the state of nature. For example, in deciding whether the
	state of nature, denoted by $\bm{\theta}$, is either $\bm{\theta}=1$ or $\bm{\theta}=0$, an agent $k$
	observes some data whose probability distribution is dependent on the unknown $\bm{\theta}$ and,
	additionally, consults with its neighboring agents about their opinion on the most plausible value for $
	\bm{\theta}$. By combining their local measurements with the information from their neighbors, agents
	update their belief about $\bm{\theta}$ continuously.
	
	There are two main categories of models that have
	been proposed to examine this evolving interaction process \cite{chamley2004rational},\cite{OpinionDynamic}. In the first (Bayesian learning) category, the agents rely on some priors and on Bayes' rule to update their beliefs \cite{chamley2004rational},\cite{smith2000pathological},\cite{5717483},\cite{AcemogluDah},\cite{krishna1},\cite{krishna2}. In the second (non-Bayesian learning) category, agents interact with their neighbors and aggregate their beliefs into their own \cite{degroot,misinformation,epstein2010non,jackson,jadbabaie2012non,molavi2013reaching,zhao2012learning}. One notable example of non-Bayesian learning is \cite{jadbabaie2012non} where the authors proposed a consensus-type construction to update the agents' beliefs. In this construction, each agent follows the Bayes' rule to obtain an intermediate belief and subsequently combines it with the {\em old} beliefs of its neighbors. Under some technical assumptions, it was shown in \cite{jadbabaie2012non} that agents following this model can asymptotically learn the true state.
	
	Motivated by this study, an alternative to the consensus mechanism was proposed in \cite{zhao2012learning} by relying on diffusion strategies due to their enhanced performance and stability ranges, especially in scenarios that involve continuous learning \cite{Sayed,sayed2014adaptive}. In the diffusion-based model, each agent combines its intermediate belief with the \textit{updated} (rather than old) beliefs of its neighbors. Results in \cite{zhao2012learning} established that agents are also able to asymptotically learn the underlying state under the diffusion strategy.
	
	The models of social interaction studied in \cite{jadbabaie2012non,zhao2012learning} assume {\em strongly-connected} graphs whereby a path with positive weights
	connecting any two agents is always possible and at least one agent has a self-loop. Over such graphs, social influences diffuse over time and all agents are able to learn asymptotically the true state of the environment. This is possible even when the local observations at the agents may be of varying quality with some agents being more informed than others. 
	
	\subsection{Weakly-Connected Networks}
	In this work, we examine social learning over {\em weakly-connected} graphs, as opposed to strongly-connected graphs. Over a weak topology, there exist some select edges over which information flows in one direction only, with information never flowing back from the receiving agents to the originating agents. This scenario is common in practice, especially over social networks. For example, in Twitter networks, it is not unusual for some influential agents (e.g., celebrities) to have a large number of followers, while the influential agent itself may not consult information from most of these followers. A similar effect arises when social networks operate in the presence of stubborn agents \cite{misinformation,mouraa,ying2014information};
	these agents insist on their opinion regardless of the evidence provided by local observations or by neighboring agents. It turns out that weak graphs influence the evolution of the agents' beliefs in a critical manner. The objective of this work is to clarify this effect, its origin, and to quantify its implications by means of closed-form expressions.
	
	\subsection{Social Disagreement}
	In the previous works \cite{ying2014information,icassp}, the authors examined the influence of weak graphs on the solution of distributed {\em inference} problems, where agents are interested in learning a parameter of interest that minimizes an aggregate cost function. It was shown there that a leader-follower relationship develops among the agents with the performance of some agents being fully controlled by the performance of other agents. In the different context of social learning, this type of weak connectivity was briefly discussed in \cite{molavi2013reaching} where consensus social learning was analyzed over non-strongly connected networks. This work considered only the special case in which all agents in the network are interested in the {\em same} state of nature. A richer and more revealing dynamics arises when different clusters within the network monitor different state variables. 
	
	For example, consider a situation in which a weak graph consists of four sub-graphs (see future Fig. \ref{fig.WCFigure}): the two top graphs are strongly-connected while the other two are weakly-connected to them. In this case, each of the first two sub-graphs is able to learn its truth asymptotically. However, the agents in the lower sub-graphs will be shown to reach a state of disarray in relation to their true state, with different agents reaching in general different conclusions and, moreover, with each of these conclusions being directly determined by the separate states of the two top sub-graphs. In this work we carry out a detailed analysis to show how influential agents dictate the performance of weak components in the network, and arrive at closed-form expressions that describe this influence in analytical form (suitable for subsequent design purposes). We will find that, under some conditions, non-influential agents will be forced to adopt beliefs centered around the true states of the influential agents. This situation is similar to the leader-follower relationship discussed in \cite{ying2014information,icassp} in the context of decentralized inference and continuous adaptation. We will also find that these beliefs differ from one agent to another, which results in a disturbing form of social disagreement. In some applications, the influential agents my be malicious as in \cite{malicious1,malicious2}. In contrast to these works, in our development, influential agents do not alter the information they are fusing, but the nature of what they are sending need not be consistent with the true state of the receiving agents.
	\subsection{Enhancing Self-Awareness}
	Motivated by the results in the next sections, we will also incorporate an element of self-awareness
	into the social learning process of the network through the introduction of a scaling factor --- see Eq. (\ref{model2}). This factor will enable agents in the network to assign more or less weight to their local information in comparison to the information received from their neighbors. This variation helps infuse into the network some elements of human behavior. For example, in an interactive social setting, a human agent may not be satisfied or convinced by an observation and prefers to give more weight to their prior belief based on accumulated experiences. This mode of operation was studied for {\em single} stand-alone agents in \cite{epstein2010non, epstein2} and was studied there as a mechanism for self-control. We will instead examine the influence of self-awareness in the challenging network setting, where the behavior of the various agents are coupled together. In particular, we will show that self-awareness helps agents converge towards a fixed belief distribution, rather than have their beliefs exhibit an undesired oscillatory behavior, which reflects their inability to settle on a decision --- see Fig. \ref{motiv}.
	
	\textit{Notation:} We use lowercase letters to denote vectors, uppercase letters for matrices, plain letters for deterministic variables, and boldface for random variables. We also use $(.)\tran$ for transposition, $(.)^{-1}$ for matrix inversion, and $\rho(.)$ for the spectral radius of a matrix. We use $\preceq$ and $\succeq$ for vector element-wise comparisons.
	
	\section{Strongly-Connected Networks}
	We first review strongly-connected networks and summarize the results already obtained over this graph topology. Then, we explain how the results are affected when the underlying topology happens to be weak and show how a leader-follower relationship develops. We characterize in some detail the limiting behavior of this relation and identify the factors that influence the ability of the social agents to learn the truth or to follow other influential agents. 
		\begin{figure}[h!]
			\centering
			\includegraphics[scale=0.35]{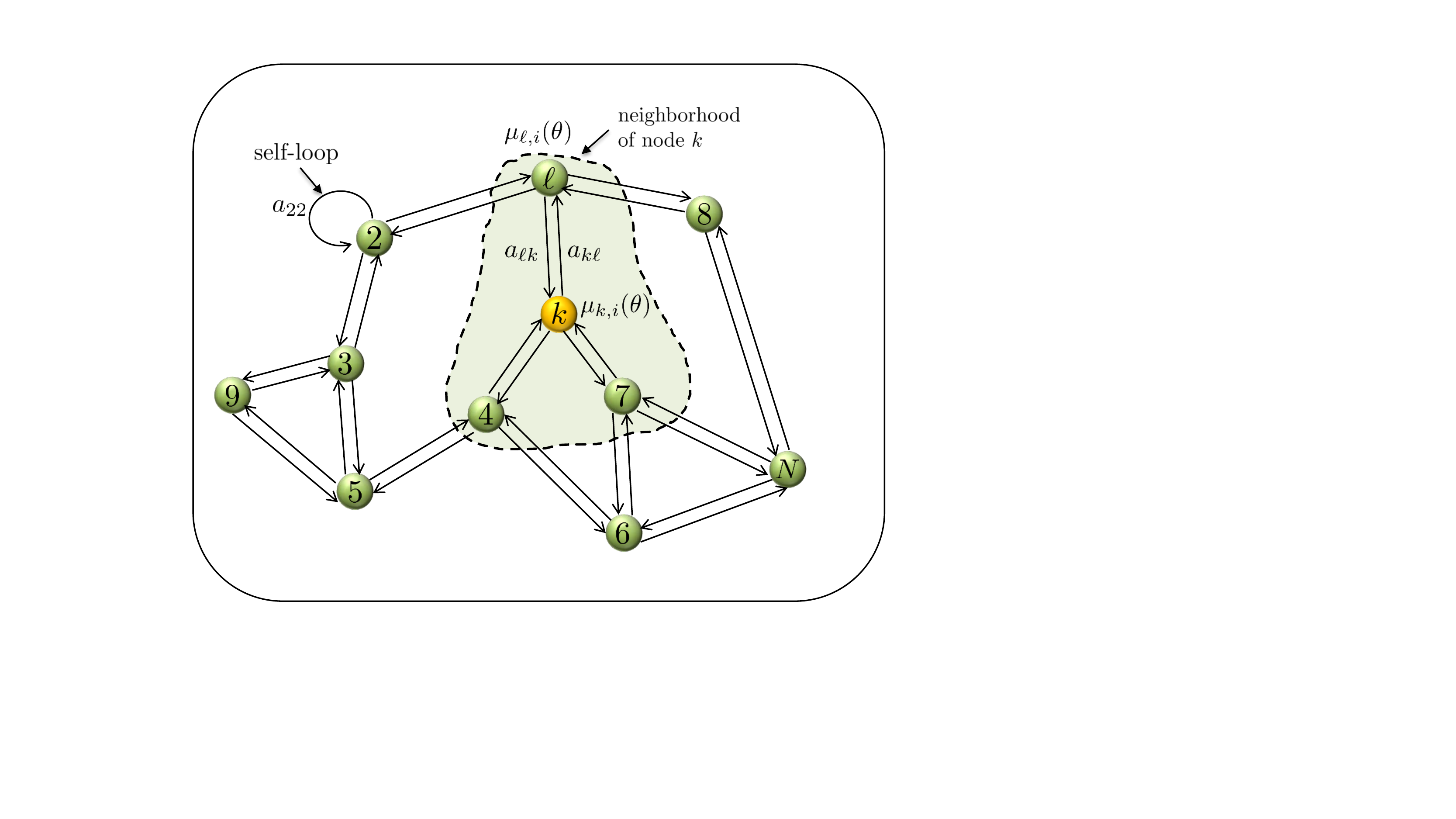} 
			\caption{\small An example of a strongly-connected network where $\mu_{k,i}(\theta)$ denotes the belief (pdf) of agent $k$ at time $i$.}
			\label{fig.SCFigure}
		\end{figure}
	\subsection{Network Model}
	Thus, consider a network of $N$ agents connected by some graph. Let $\mathcal{N}=\{1,2,\dots,N\}$ denote the indexes of the agents in the network. We assign a pair of non-negative weights, $\{a_{k\ell},a_{\ell k}\}$, to the edge connecting any two agents $k$ and $\ell$. The scalar $a_{\ell k}$ represents the weight with which agent $k$ scales the data arriving from agent $\ell$ and, similarly, for $a_{k\ell}$ -- see Fig. \ref{fig.SCFigure}. The network is said to be strongly-connected if there exists a path with non-zero weights connecting any two agents and, moreover, there is at least one self-loop, i.e., $a_{kk}>0$ for some agent $k$. Let $\mathcal{N}_k$ denote the neighborhood of agent $k$, which consists of all agents connected to $k$. Each agent $k$ scales data arriving from its neighbors in a convex manner, i.e.,
	\be
	a_{\ell k}\geq 0, \quad \sum_{\ell \in \mathcal{N}_k}a_{\ell k}=1, \quad a_{\ell k}=0 \text{ if } \ell\notin\mathcal{N}_k \label{convexCond}
	\ee
	We collect the weights $\{a_{\ell k}\}$ into an $N \times N$ matrix $A$. From condition (\ref{convexCond}), $A$ is a left-stochastic matrix so that its spectral radius is equal to one, $\rho(A)=1$. Since the network is strongly-connected, $A$ is also a primitive matrix \cite{Sayed}. It then follows from the Perron-Frobenius Theorem \cite{matrixAnalysis},\cite{perron} that $A$ has a single eigenvalue at one while all other eigenvalues are strictly inside the unit disc. We denote the right-eigenvector of $A$ that corresponds to the eigenvalue at one by $y$, and all entries of this vector will be strictly positive.  We normalize the entries of $y$ to add up to one, so that $y$ satisfies the following conditions:
	\be
	Ay=y,\quad \one\tran y=\one,\quad y \succ 0
	\label{perronFrob}
	\ee
	We refer to $y$ as the Perron eigenvector of $A$. This network structure plays an important role in diffusing information across the network and helps agents in learning the true state. We describe next the mechanism of this learning.
	
	\subsection{Diffusion Social Learning}
	Let $\Theta$ denote a finite set of all possible events that can be detected by the network. Let $\theta^\circ\in\Theta$ denote the \emph{unknown} true event that has happened, while the other elements in $\Theta$ represent possible variations of that event. The objective of the network is to learn the true state, $\theta^\circ$. For this purpose, agents will be continually updating their beliefs about the true state through a localized cooperative process. Initially, at time $i=0$, each agent $k$ starts from some prior belief, denoted by the function $\mu_{k,0}(\theta)\in[0,1]$. This function represents the probability distribution over the events $\theta\in\Theta$. For instance, if $\theta_1\in\Theta$ then
	\be \mu_{k,0}(\theta_1) = \mbox{\rm Prob}(\bm\theta = \theta_1),\;\;\;\;\;\mbox{\rm at time $i=0$}\ee
	For subsequent time instants $i\ge1$, the private belief of agent $k$ is denoted by $\mu_{k,i}(\theta)\in[0,1]$. All beliefs across all agents must be valid probability measures over $\Theta$. That is, they must obey the normalization:
	\be
	\sum_{\theta\in\Theta}\mu_{k,i}(\theta) = 1,\;\;\;\;\mbox{\rm for any $i\geq 0$ and $k \in \mathcal{N}$}
	\ee
	Figure \ref{fig.pmf} presents an example of a belief distribution $\mu_{k,i}(\theta)$ defined over $\Theta=\{\theta_1,\theta_2,\theta_3,\theta_4\}$. The agents will update their private beliefs $\{\mu_{k,i}(\theta)\}$ over time based on the private signals they observe from the environment and the information shared by their social neighbors. We assume that, at each time $i\ge1$, every agent $k$ observes a realization of some signal, $\bm \xi_{k,i}$, whose probability distribution is dependent on the true event $\theta^o$, namely, the process $\{\bm{\xi}_{k,i}\}$ is generated according to some known likelihood function $L_k(\cdot|\theta^{\circ} )$ -- see Fig.\ref{fig.observationGen}. We further assume that for each agent $k$, the signals $\{\bm\xi_{k,i}\}$ belong to a {\emph{finite}} signal space denoted by $Z_k$ and that these signals are independent over time.
	\begin{figure}[h!]
		\centering
		\includegraphics[scale=0.45]{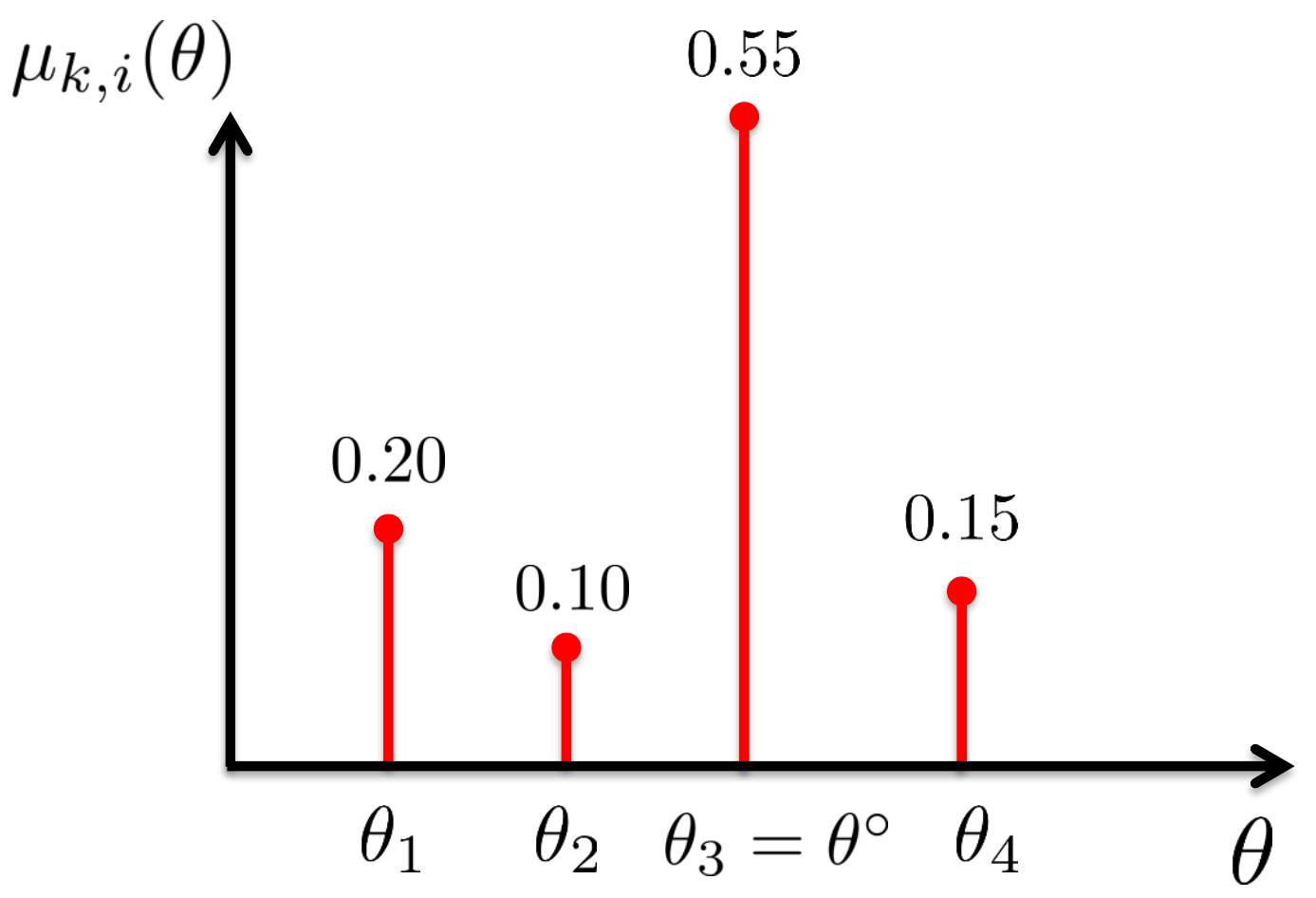} 
		\caption{\small An example of a belief distribution $\mu_{k,i}(\theta)$.}
		\label{fig.pmf}
	\end{figure}
	\begin{figure}[h!]
		\centering
		\includegraphics[scale=0.3]{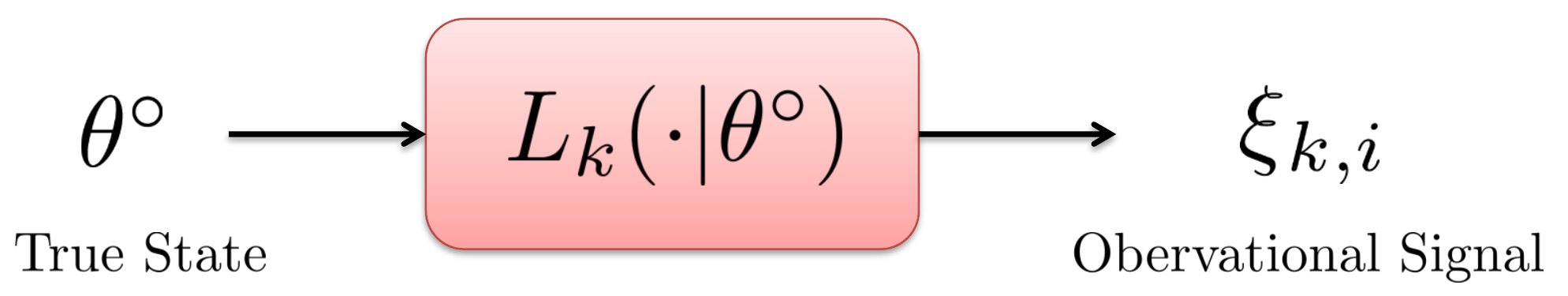} 
		\caption{\small Generation of observational signals.}
		\label{fig.observationGen}
	\end{figure}
	
	Diffusion social learning, described in \cite{zhao2012learning}, provides a mechanism by which agents can process the information they receive from their private signals and from their neighbors. A consensus-based strategy can also be employed, as was done in \cite{jadbabaie2012non}. We focus on the diffusion strategy due to its enhanced performance, as observed in \cite{zhao2012learning} and as further explained in the treatments \cite{Sayed,sayed2014adaptive}. In diffusion learning, at every time $i\ge1$, each agent $k$ first updates its belief, $
	\mu_{k,i-1}(\theta)$, based on its observed private signal $\xi_{k,i}$ by means of the Bayesian rule: 
	\be
	\label{eqn:bayesianupdate}
	\psi_{k,i}(\theta) = \frac{\mu_{k,i-1}(\theta)L_k(\xi_{k,i}|\theta)}{\sum_{\theta'\in\Theta}\mu_{k,i-1}
		(\theta')L_k(\xi_{k,i}|\theta')}
	\ee
	This step leads to an intermediate belief $\psi_{k,i}(\theta)$. After learning from their observed signals,
	agents can then learn from their social neighbors through cooperation to compute:
	\be
	\label{eqn:con_combine}
	\mu_{k,i}(\theta) = \sum_{\ell\in\mathcal{N}_k}a_{\ell k}\,\psi_{\ell,i}(\theta)
	\ee
	\noindent Subsequently, agent $k$ can use its updated belief, $\mu_{k,i}(\theta)$, to predict the probability of a certain signal $\zeta_k\in Z_k$ occurring in the next time instant $i+1$. This prediction or forecast is based on the following calculation:
	\be
	m_{k,i}(\zeta_k)\define\sum_{\theta\in\Theta}\mu_{k,i}
	(\theta)L_k(\zeta_{k}|\theta)=\mbox{\rm Prob}(\bm\xi_{k,i+1}=\zeta_k)
	\ee
	
	In the sequel, we will be interpreting the diffusion learning model as a {\em stochastic} system of interacting agents, especially since the operation of this mechanism is driven by the random observational signals. Thus, we rewrite  (\ref{eqn:bayesianupdate}) and (\ref{eqn:con_combine}) as follows by using boldface letters to refer to random variables.
	\be
	\left\{
	\begin{aligned}
		\bm\psi_{k,i}(\theta) & = \frac{\bmu_{k,i-1}(\theta)L_k(\bm\xi_{k,i}|\theta)}{\sum_{\theta'\in\Theta}\bmu_{k,i-1}
			(\theta')L_k(\bm\xi_{k,i}|\theta')} \\\\
		\bmu_{k,i}(\theta) & = \sum_{\ell\in\mathcal{N}_k}a_{\ell k}\,\bm\psi_{\ell,i}(\theta)
		\label{eqn:diffusion} 
	\end{aligned}
	\right.
	\ee

	\subsection{Correct Forecasting}
	When agents in strongly-connected networks follow model (\ref{eqn:diffusion}) to update their beliefs, the agents will eventually learn the truth according to the results established in \cite{zhao2012learning}. The argument there is based on an identifiability condition similar to the one used in \cite{jadbabaie2012non}, and which is motivated as follows. We assume first that the agents' private signals $\{\bm{\xi}_{k,i}\}$ do not hold enough information about the true state, so that individual agents cannot rely solely on their observations to identify $\theta^\circ$ and are motivated to cooperate. More specifically, this requirement amounts to assuming that each agent $k$ has a subset of states $\Theta_k \subseteq \Theta$ for which:
	\be 
	L_k(\zeta_k|\theta)=L_k(\zeta_k|\theta^\circ),\;\;\theta \in \Theta_k
	\label{eqn:setUndis}
	\ee
	for any $\zeta_k \in Z_k$. We refer to $\Theta_k$ as the set of indistinguishable states for agent $k$. We subsequently assume that through cooperation with their neighbors, agents are able to identify the true state by imposing the identifiability condition:
	\be
	\underset{k\in\mathcal{N}}{\bigcap}\Theta_k=\{\theta^\circ\} 
	\label{assm1}
	\ee
	We refer to this case as $\theta^\circ$ being globally identifiable. To prove that agents are able to learn the true state, the analysis in \cite{zhao2012learning} is based on first  showing that agents are able to learn the correct distribution of incoming signals.
	
	\begin{lemma}[Correct Forecasting \cite{zhao2012learning}] Assume that there exists at least one agent with a positive prior belief about the true state $\theta^\circ$, i.e., $\bmu_{k,0}(\theta^\circ)>0$ for some $k\in\mathcal{N}$. Then, agents are able to correctly predict the distribution of the incoming signals, namely, for any $\zeta_k \in Z_k$ and $k \in \mathcal{N}$:
		\be
		\lim_{i \to \infty} \bm{m}_{k,i}(\zeta_{k})\aseq L_k(\zeta_k|\theta^\circ)
		\label{CorrectForecast}
		\ee
	where $\aseq$ denotes almost-sure convergence. \qd
		\label{lemma1} 
	\end{lemma}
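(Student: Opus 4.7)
The plan is to track the Perron-weighted log-belief on the truth, $\sum_{k}y_{k}\log\bmu_{k,i}(\theta^{\circ})$, show that it is a submartingale bounded above by zero, and extract from the submartingale drift that the Perron-weighted sum of KL divergences between $L_{\ell}(\cdot\mid\theta^{\circ})$ and the one-step forecasts $\bm{m}_{\ell,i-1}$ is summable. Combined with $y_{\ell}>0$, this will force $D_{\mathrm{KL}}\bigl(L_{\ell}(\cdot\mid\theta^{\circ})\,\|\,\bm{m}_{\ell,t}\bigr)\to 0$ a.s., which on the finite alphabet $Z_{\ell}$ is equivalent to (\ref{CorrectForecast}).

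First I would use strong connectivity of $A$ together with $\bmu_{k,0}(\theta^{\circ})>0$ for some $k$ to argue that after at most $N$ iterations of (\ref{eqn:diffusion}) every agent satisfies $\bmu_{k,i}(\theta^{\circ})>0$, so the log-belief is well defined beyond a finite burn-in. To handle the combination step (\ref{eqn:con_combine}) in the log-domain, concavity of $\log$ gives the Jensen bound $\log\bmu_{k,i}(\theta^{\circ})\geq\sum_{\ell}a_{\ell k}\log\bpsi_{\ell,i}(\theta^{\circ})$; multiplying by $y_{k}$, summing over $k$, and invoking $\sum_{k}y_{k}a_{\ell k}=y_{\ell}$ (which is $Ay=y$ read coordinate-wise) yields
\begin{equation}
\sum_{k}y_{k}\log\bmu_{k,i}(\theta^{\circ})\;\geq\;\sum_{\ell}y_{\ell}\log\bpsi_{\ell,i}(\theta^{\circ}).
\end{equation}
Substituting the Bayesian step (\ref{eqn:bayesianupdate}) and recognizing the normalizer as $\bm{m}_{\ell,i-1}(\bxi_{\ell,i})$ then produces the drift bound
\begin{equation}
\sum_{k}y_{k}\log\bmu_{k,i}(\theta^{\circ})-\sum_{k}y_{k}\log\bmu_{k,i-1}(\theta^{\circ})\;\geq\;\sum_{\ell}y_{\ell}\bigl[\log L_{\ell}(\bxi_{\ell,i}\mid\theta^{\circ})-\log\bm{m}_{\ell,i-1}(\bxi_{\ell,i})\bigr].
\end{equation}

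Conditioning on the natural filtration $\mathcal{F}_{i-1}$ and using that $\bxi_{\ell,i}\sim L_{\ell}(\cdot\mid\theta^{\circ})$, the conditional expectation of the right-hand side equals $\sum_{\ell}y_{\ell}D_{\mathrm{KL}}\bigl(L_{\ell}(\cdot\mid\theta^{\circ})\,\|\,\bm{m}_{\ell,i-1}\bigr)\geq 0$, which both certifies the submartingale property and identifies the KL-based drift. Since $\bmu_{k,i}\leq 1$ gives the uniform upper bound $\sum_{k}y_{k}\log\bmu_{k,i}(\theta^{\circ})\leq 0$, Doob's submartingale convergence theorem delivers an a.s. finite limit, and telescoping in expectation yields $\sum_{t\geq 0}\sum_{\ell}y_{\ell}D_{\mathrm{KL}}\bigl(L_{\ell}(\cdot\mid\theta^{\circ})\,\|\,\bm{m}_{\ell,t}\bigr)<\infty$ a.s. Positivity of every $y_{\ell}$ then forces each KL term to vanish almost surely, and by Pinsker's inequality (or the direct argument for finitely supported distributions) the forecasts converge pointwise on $Z_{\ell}$, which is the claim.

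I expect the main obstacle to be the one-sided nature of Jensen's inequality at the combination step: only a lower bound on $\log\bmu_{k,i}(\theta^{\circ})$ is available in terms of the combined logs of the $\bpsi_{\ell,i}(\theta^{\circ})$, and one must verify that the direction of this bound aligns with the direction needed to obtain a submartingale. The Perron identity $\sum_{k}y_{k}a_{\ell k}=y_{\ell}$ is what allows the Jensen slack to telescope cleanly across the network into a single nonnegative drift term; without this specific weighting, the slack would not aggregate. Secondary technicalities--keeping the Perron-weighted log-belief strictly above $-\infty$ after the burn-in, and verifying integrability of the log-likelihood increments so the conditional expectation step is valid--are handled by strong connectivity, the finiteness of $Z_{k}$, and the standard positivity assumption on the $L_{k}$.
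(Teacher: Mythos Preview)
Your proposal is correct and uses essentially the same machinery as the paper's own argument (Appendix~\ref{app.A}, which proves the self-aware generalization and specializes to Lemma~\ref{lemma1} when $\gamma_{k,i}\equiv 1$): the Perron-weighted aggregate $\sum_k y_k \log\bmu_{k,i}(\theta^\circ)$, Jensen's inequality at the combination step together with $Ay=y$, and identification of the drift as a KL divergence. The only methodological difference is that the paper works with the deterministic sequence of expectations $J(\bmu_i)=-\sum_k y_k\,\Ex_{\mathcal{F}_i}\log\bmu_{k,i}(\theta^\circ)$ and shows it is monotone and bounded, whereas you run the argument pathwise as a bounded submartingale and invoke Doob; your route delivers the almost-sure conclusion more directly, while the paper's expectation-based version has to make an additional (somewhat informal) step at the end to upgrade convergence of $\Ex\,D_{\mathrm{KL}}$ to almost-sure convergence.
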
 
	This lemma does not require the identifiability condition (\ref{assm1}). It explores forms of learning that were studied in \cite{jadbabaie2012non,epstein2010non} and also in \cite{Bayes,Bayes2}, which dealt with either learning the true parameter $\theta^\circ$ (similar to the setting we are considering) or learning the distribution of the incoming signal itself.
	
	Correct forecasting does not always imply the ability of agents to learn the true parameter, $\theta^\circ$. However, in the case of strongly-connected networks, this conclusion is true under some conditions mentioned next (the same implication will not hold for weakly-connected networks; there, we will show that correct forecasting does not imply the ability of agents to learn the truth). 
	
	\begin{theorem}[Truth Learning \cite{zhao2012learning}]\label{theo.1}
		Under the same conditions of Lemma \ref{lemma1}, assume that there exists at least one prevailing signal $\zeta_k^\circ$ for each agent $k$, namely, that
		\be
		L_k(\zeta_k^\circ|\theta^\circ)-L_k(\zeta_k^\circ|\theta) \geq 0, \quad \forall \theta \in \Theta\setminus\Theta_k
		\label{prev}
		\ee
		and assume as well that the true state $\theta^\circ$ is globally identifiable as in (\ref{assm1}). Then, all agents asymptotically learn the truth, i.e., for any $k\in\mathcal{N}$:
		\be
		\lim_{i\rightarrow\infty} \bm \mu_{k,i}(\theta^\circ) \aseq 1
		\label{truthLearning}
		\ee
	 \qd
		\label{theorem1}
	\end{theorem}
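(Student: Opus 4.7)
The plan is to bootstrap from Lemma \ref{lemma1} (correct forecasting) and convert its almost-sure statement about the forecast distributions $\bm m_{k,i}$ into an almost-sure statement about the beliefs $\bmu_{k,i}(\theta)$ themselves. Using the definition $\bm m_{k,i}(\zeta_k) = \sum_\theta \bmu_{k,i}(\theta)L_k(\zeta_k|\theta)$ together with $\sum_\theta \bmu_{k,i}(\theta) = 1$, the conclusion of Lemma \ref{lemma1} can be rewritten as
\be
\sum_{\theta\in\Theta}\bmu_{k,i}(\theta)\,\big[L_k(\zeta_k|\theta^\circ) - L_k(\zeta_k|\theta)\big] \xrightarrow{a.s.} 0
\label{planlim}
\ee
for every $k\in\mathcal{N}$ and every $\zeta_k\in Z_k$. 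Evaluating (\ref{planlim}) at the prevailing signal $\zeta_k = \zeta_k^\circ$ produces a sum whose summands are all non-negative: for $\theta\in\Theta_k$ the bracket vanishes by (\ref{eqn:setUndis}), while for $\theta\in\Theta\setminus\Theta_k$ it is $\geq 0$ by (\ref{prev}). Since a sum of non-negative terms tending to zero forces each term to vanish, one immediately obtains $\bmu_{k,i}(\theta)\xrightarrow{a.s.} 0$ for every $\theta$ at which that bracket is strictly positive.

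The remaining step is to extend the conclusion to every $\theta\neq\theta^\circ$ and every agent $k$. Global identifiability (\ref{assm1}) guarantees that for each $\theta\neq\theta^\circ$ there is at least one ``informed'' agent $k^\star$ with $\theta\notin\Theta_{k^\star}$, and the preceding argument already chips away at $\bmu_{k^\star,i}(\theta)$ at that agent. To carry the vanishing to every other agent I would invoke the strongly-connected topology: iterating the combination step (\ref{eqn:con_combine}) together with the primitivity of $A$ and the Perron--Frobenius structure (\ref{perronFrob}) expresses each $\bmu_{k,i}(\theta)$ as a Perron-weighted aggregate of intermediate beliefs distributed across the whole network, so that a belief on $\theta$ that is repeatedly suppressed by the Bayesian updates at a well-placed agent is gradually transported to every other agent. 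Combined with the fact that at $k^\star$ the Bayesian step (\ref{eqn:bayesianupdate}) amplifies, in the long run, the likelihood ratio in favor of $\theta^\circ$ over $\theta$, this forces $\bmu_{k,i}(\theta)\xrightarrow{a.s.} 0$ at every $k$. The normalization $\sum_{\theta\in\Theta}\bmu_{k,i}(\theta)=1$ then yields $\bmu_{k,i}(\theta^\circ)\xrightarrow{a.s.} 1$, which is exactly (\ref{truthLearning}).

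The main obstacle is the borderline case in which some $\theta\in\Theta\setminus\Theta_k$ happens to satisfy $L_k(\zeta_k^\circ|\theta)=L_k(\zeta_k^\circ|\theta^\circ)$ with equality at the prevailing signal: condition (\ref{prev}) then produces a zero coefficient, so the elementary non-negativity argument alone cannot exclude this $\theta$ from the asymptotic support at agent $k$. The resolution is that $\theta\notin\Theta_k$ forces the existence of \emph{some} other signal $\zeta_k'\in Z_k$ with $L_k(\zeta_k'|\theta)\neq L_k(\zeta_k'|\theta^\circ)$; exploiting (\ref{planlim}) at $\zeta_k'$ together with the repeated Bayesian updates over time introduces a strictly positive asymptotic Kullback--Leibler contribution that, once propagated through the network via the Perron weights, is enough to drive the residual belief on $\theta$ to zero at every agent. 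Making this aggregation of per-signal information across both time and network precise is the most technical part of the proof, and it is here that the strong connectivity of the graph and the infinite-horizon accumulation of likelihood-ratio information are simultaneously indispensable.
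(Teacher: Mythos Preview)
The paper does not actually prove Theorem~\ref{theorem1}: it is quoted from \cite{zhao2012learning} and stated without argument (the $\blacksquare$ immediately follows the statement). The only thing the paper offers is the informal paragraph around Fig.~\ref{fig.prevailing}, which says that the prevailing signal lets each agent $k$ drive its belief on $\bar\Theta_k=\Theta\setminus\Theta_k$ to zero, after which cooperation plus global identifiability (\ref{assm1}) lets it eliminate the remaining wrong states in $\Theta_k$. Your sketch follows exactly this two-stage intuition, so at the level of strategy you are aligned with what the paper (and presumably \cite{zhao2012learning}) has in mind.

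That said, there is a genuine gap in your propagation step. Your local argument is clean: rewriting Lemma~\ref{lemma1} as (\ref{planlim}) and evaluating at $\zeta_k^\circ$ does force $\bmu_{k,i}(\theta)\to 0$ for every $\theta\in\bar\Theta_k$ at which (\ref{prev}) is strict. But the transfer from the informed agent $k^\star$ to all other agents is not handled by ``iterating the combination step together with primitivity of $A$.'' The belief recursion (\ref{eqn:diffusion}) is \emph{not} a linear system in $\bmu_{k,i}(\theta)$: the Bayesian adaptation (\ref{eqn:bayesianupdate}) sits between successive combinations and can, in principle, replenish mass on $\theta$ at an uninformed agent even after the combination step has pulled it down. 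So you cannot simply unroll the combination step and invoke $A^n\to A_\infty$ or the Perron vector (\ref{perronFrob}) on the raw beliefs. What the actual proof in \cite{zhao2012learning} does (and what the present paper mimics in Appendices~\ref{App.B} and \ref{app.A} for related statements) is to work instead with a Lyapunov-type quantity such as $-\log\bmu_{k,i}(\theta^\circ)$ or a Perron-weighted aggregate risk, for which the Bayesian step contributes a nonpositive KL term and the combination step is handled by convexity of $-\log(\cdot)$. That device is what makes the ``propagation through the network'' rigorous; your appeal to Perron--Frobenius on the beliefs themselves does not. Your closing paragraph correctly senses that the borderline case and the network aggregation are the hard part, but the fix is not to use (\ref{planlim}) at another signal $\zeta_k'$ --- it is to change the quantity you are tracking.
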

	Figure \ref{fig.prevailing} illustrates what it means for a prevailing signal to exist for an agent $k$. In this example, the true state $\theta^\circ$ is assumed to be $\theta_1$. Assume also that for agent $k$, the set of distinguishable states is $\bar\Theta_k= \Theta\setminus\Theta_k=\{\theta_2,\theta_3\}$ and the space of observational signals is $Z_k=\{\zeta_1,\zeta_2,\zeta_3,\zeta_4\}$. We see in the example that the signal $\zeta_1$ plays the role of a prevailing signal. This is because when the true state is $\theta_1$, the likelihood of $\zeta_1$ is greater than its likelihood when the true state is $\theta_2$ or $\theta_3$, i.e.,
	\be
	L_k(\zeta_1|\theta_1)>L_k(\zeta_1|\theta_2) \text{, }  L_k(\zeta_1|\theta_1)>L_k(\zeta_1|\theta_3)
	\ee
	These two conditions are not jointly satisfied for the other observational signals. The presence of a prevailing signal provides agent $k$ with sufficient information to identify the distinguishable set $\bar\Theta_k=\Theta\setminus \Theta_k$. This means that agent $k$ will be able to assign a zero probability to any $\theta$ in this set. Then, with the help of neighboring agents, and in the presence of the identifiability condition (\ref{assm1}), agent $k$ will be able to discover the true sate $\theta^\circ$ in $\Theta_k$. 
	\begin{figure}[h!]
		\centering
		\includegraphics[scale=0.45]{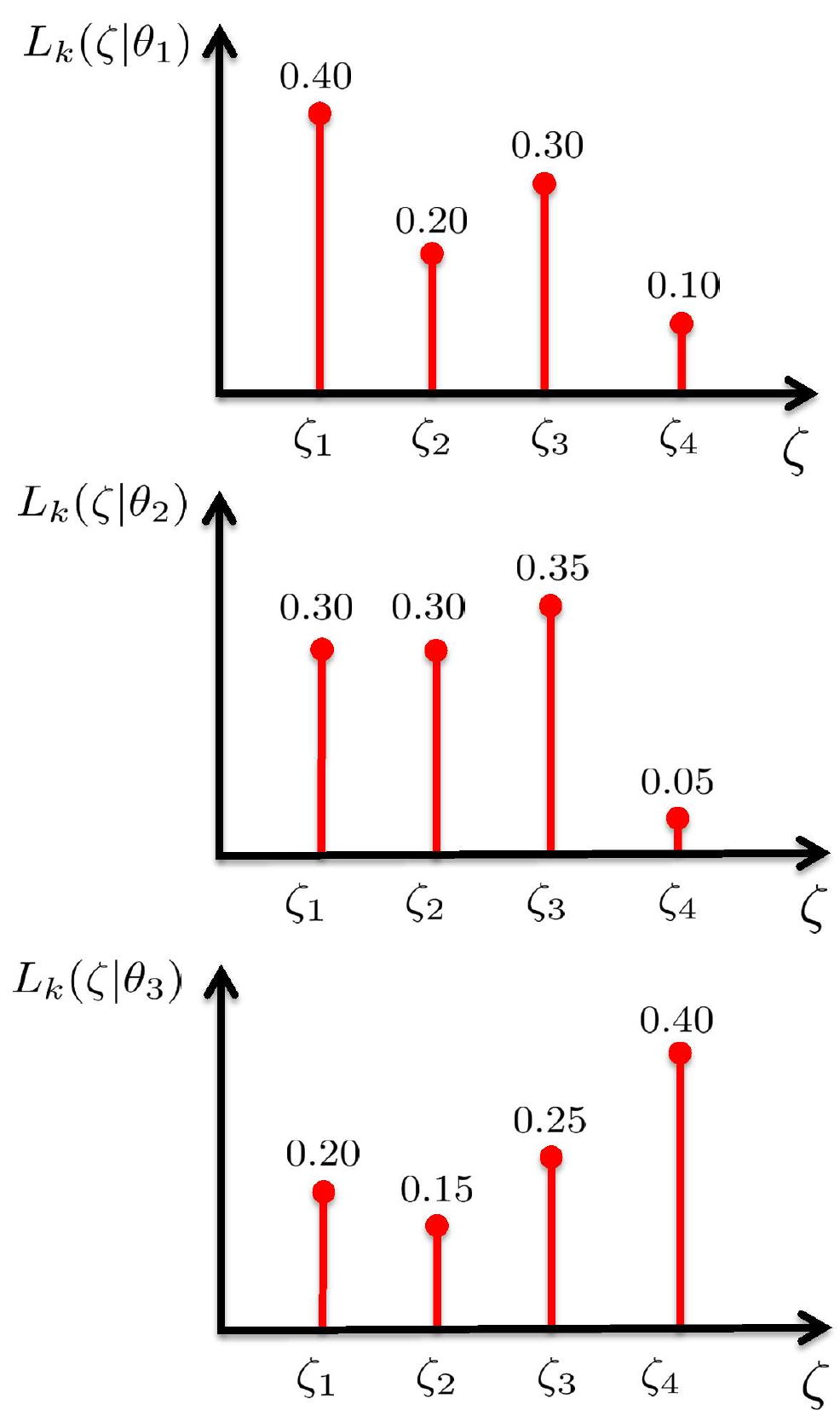} 
		\caption{\small An example showing the existence of a prevailing signal $\zeta_1$ for agent $k$.}
		\label{fig.prevailing}
	\end{figure}
	\section{Diffusion Learning over Weak Graphs}
	We now examine how learning is affected when the agents are connected by a weak topology. In simple terms, weakly-connected networks consist of multiple sub-networks where at least one sub-network feeds information forward to some other sub-network but never receives information back from this sub-network. The example in Fig. \ref{fig.WCFigure} illustrates a situation involving four sub-networks. The agents in each sub-network observe signals related to their own true states denoted by $\theta^\circ_1$, $\theta^\circ_2$, $\theta^\circ_3$, and  $\theta^\circ_4$. For generality, we do not require the true states to be the same across the sub-networks. In the figure, each of the two sub-networks on top is strongly-connected. Therefore, if their agents follow the diffusion social learning model (\ref{eqn:diffusion}), they can asymptotically learn their true states. The third and fourth sub-networks in the bottom receive information from the top sub-networks without feeding information back to them. As the analysis will show, this structure results in the two top sub-networks playing the role of influential entities that impose their beliefs on the agents in the bottom sub-networks, regardless of the local information that is sensed by these latter sub-networks.
	\subsection{Weak Network Model}
	We first review the main features of the weakly-connected network model from \cite{ying2014information, icassp}.
	Consider a network that consists of two types of sub-networks: $S$ sub-networks and $R$ sub-networks. Each sub-network in the $S$ family has a strongly-connected topology. In contrast, each sub-network in the $R$ family is only required to be connected. This means that any receiving sub-network has a path connecting any two agents without requiring any agent to have a self-loop.  Moreover, the interaction between $S$ and $R$ sub-networks is not symmetric: information can flow from $S$ (``sending'') sub-networks to $R$ (``receiving'') sub-networks but not the other way around. We index each strongly-connected sub-network by $s$ where $s=\{1,2,\cdots,S\}$. Similarly, we index each receiving sub-network by $r$ where $r=\{S+1,\cdots,S+R\}$. Each sub-network $s$ has $N_s$ agents, and the total number of agents in the $S$ sub-networks is:
	\be
	N_{gS}\define N_1 +N_2+\dots+N_S
	\ee
	Similarly, each sub-network $r$ has $N_{r}$ agents, and the total number of agents in the $R$ sub-networks is:
	\be
	N_{gR}\define N_{S+1} +N_{S+2}+\dots+N_{S+R}
	\ee
	We still denote by $N$ the total number of agents across all sub-networks, i.e., $N=N_{gS}+N_{gR}$. 
	We continue to denote by $\mathcal{N}=\{1,2,\cdots,N\}$ the indexes of the agents. We assume that the agents are numbered such that the indexes of $\mathcal{N}$ represent first the agents from the $S$ sub-networks, followed by those from the $R$ sub-networks. In this way, the structure of the network is represented by a large $N \times N$ combination matrix $A$, which will have an upper block-triangular structure of the following form \cite{ying2014information, icassp}: 
	
	\begin{footnotesize}
		\begin{eqnarray}
			\nonumber
			\begin{array}{cc}
				\overbrace{\rule{30mm}{0mm}}^{\mathrm{Subnetworks:} 1,2,\ldots, S} &\; \overbrace{\rule{50mm}{0mm}}^{\mathrm{Subnetworks:} S+1, S+2, \ldots,S+R}
			\end{array}
			\\ \nonumber
			\left[
			\begin{array}{cccc|cccc}
				A_{1} 	&	0 	     	&\hdots	& 0 		&	A_{1,S+1} 		& A_{1,S+2}		 &\hdots		 &A_{1,S+R}	\\
				0          	& A_{2}   		&\hdots	& 0 		& 	A_{2,S+1} 		& A_{2,S+2}	 	 &\hdots		 &A_{2,S+R}	\\
				\vdots 	& \vdots		&\ddots	&\vdots	& 	\vdots 			& \vdots			 &\ddots		 &\vdots	 \\
				0          	& 	0  		&\hdots	& A_{S} 	& 	A_{S,S+1} 		& A_{S,S+2}	 	 &\hdots		 &A_{S,S+R}	\\
				\hline
				0          	& 	0  		&\hdots	& 0 		& 	A_{S+1} 			& A_{S+1,S+2}	 	 &\hdots		 &A_{S+1,S+R}	\\
				0          	& 	0  		&\hdots	& 0 		& 	0		 		& A_{S+2}	 	 	 &\hdots		 &A_{S+2,S+R}	\\
				\vdots 	& \vdots		&\ddots	&\vdots	& 	\vdots 			& \vdots			 &\ddots		 &\vdots	 \\
				0          	& 	0  		&\hdots	& 0 		& 	0				& 0			 	 &\hdots		 &A_{S+R}	 \\
			\end{array}
			\right]
		\end{eqnarray}
	\end{footnotesize}\vspace{-0.32cm}
	\be
	\label{matrixA}
	\ee 
	\begin{figure}[h!]
		\centering
		\includegraphics[scale=0.35]{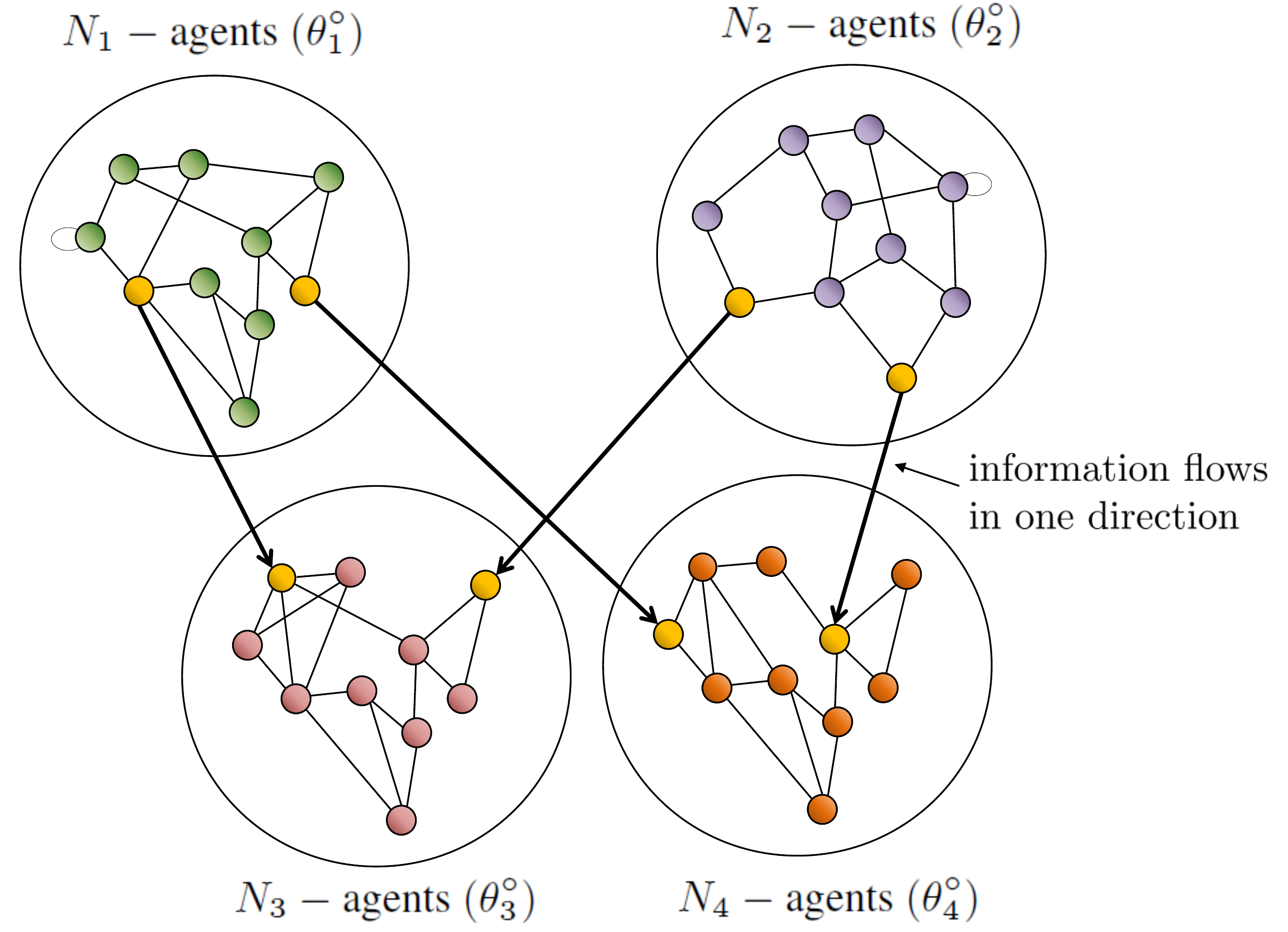} 
		\caption{\small An example of a weakly connected network.}
		\label{fig.WCFigure}
	\end{figure}
	
	The matrices $\{A_1,\cdots,A_S\}$ on the upper left corner are left-stochastic primitive matrices corresponding to the $S$ strongly-connected sub-networks. Each of these matrices has spectral radius equal to one, $\rho(A_s)=1$. Moreover, each $A_s$ has a single eigenvalue at one and the corresponding right eigenvector has positive entries. We denote it by $y_s$ and normalize its entries to add up to one, i.e., $\one\tran y_s=1$. 
	
	Likewise, the matrices $\{A_{S+1},\cdots,A_{S+R}\}$ in the lower right-most block correspond to the internal weights of the $R$ sub-networks. These matrices are not necessarily left-stochastic because they do not include the coefficients over the links that connect the $R$ sub-networks to the $S$ sub-networks. Nevertheless, based on results from \cite{meyer2000matrix}, it was shown in \cite{ying2014information} that for any receiving subnetwork $r$, it holds that $\rho(A_r)<1$. Moreover, since $A_{r}$ has non-negative entries and sub-network $r$ is connected, it follows from the Perron-Frobenius theorem \cite{perron, matrixAnalysis} that $A_r$ has a unique positive real eigenvalue $\lambda_r$, that is equal to its spectral radius $\rho(A_r)$, and the corresponding right eigenvector has positive entries. We denote this eigenvector by $y_r$. We again normalize the entries of $y_r$ to add up to one, $\one\tran y_r=1$:
	\be
	A_{r}y_r=\lambda_ry_r,\quad \one\tran y_r=\one,\quad y_r \succ 0
	\label{perronFrob2}
	\ee
	We denote the block structure of $A$ in (\ref{matrixA}) by:
	\be
	A \define \ba{ccc}T_{SS}&\vline&T_{SR}\\\hline 0&\vline &T_{RR}\ea \label{AStruct}
	\ee 
	This specific structure has one useful property that we will exploit in the analysis.
	
	\begin{lemma}[Limiting Power of $A$ \cite{ying2014information}] It holds that:
		\begin{equation}
			A_{\infty} \stackrel {\triangle}{=}
			\lim_{n\to\infty} A^n =
			\left[
			\begin{array}{ccc}
				E &\vline& E W \\\hline
				0 &\vline& 0 \\
			\end{array}
			\right]
			\label{label.eq30}\end{equation}
		where the $N_{gS} \times N_{gS} $ matrix $E$ and the $N_{gS} \times N_{gR}$ matrix $W$ are given by:
		\bq
		W &\define & T_{SR}(I-T_{RR})^{-1} \label{eq:defW}\\
		E &\define &\mbox{\rm blockdiag}\left\{ y_1\one_{N_1}\tran,\ldots,y_S\one_{N_S}\tran\right\}
		\eq
		The matrix $W$ has non-negative entries and the sum of the entries in each column is equal to one.
		\label{kajd713.lemma}
		\qd
	\end{lemma}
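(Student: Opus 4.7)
The plan is to exploit the block upper-triangular structure of $A$ to compute $A^n$ explicitly, take the limit block by block, and then verify the structural properties of $W$.

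First I would show by induction on $n$ that
\begin{equation*}
A^n = \left[\begin{array}{c|c} T_{SS}^n & \sum_{k=0}^{n-1} T_{SS}^{k}\, T_{SR}\, T_{RR}^{n-1-k} \\ \hline 0 & T_{RR}^n \end{array}\right].
\end{equation*}
The top-left block is easy: since $T_{SS}=\mathrm{blockdiag}\{A_1,\ldots,A_S\}$, we have $T_{SS}^n = \mathrm{blockdiag}\{A_1^n,\ldots,A_S^n\}$. Each $A_s$ is a primitive left-stochastic matrix, so by Perron--Frobenius $A_s^n \to y_s \mathbf{1}_{N_s}^{\mathsf{T}}$, giving $T_{SS}^n \to E$. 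The bottom-right block vanishes because $T_{RR}$ is itself block upper-triangular with diagonal blocks $A_{S+1},\ldots,A_{S+R}$, each of which has been shown (in the paper of \cite{ying2014information}) to satisfy $\rho(A_r)<1$; hence $\rho(T_{RR})<1$ and $T_{RR}^n \to 0$ geometrically.

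The main obstacle is the cross term $C_n \define \sum_{k=0}^{n-1} T_{SS}^{k} T_{SR} T_{RR}^{n-1-k}$. Re-indexing with $j = n-1-k$ gives $C_n = \sum_{j=0}^{n-1} T_{SS}^{n-1-j} T_{SR} T_{RR}^{j}$. The heuristic is that for every fixed $j$, $T_{SS}^{n-1-j}\to E$, while the weights $T_{RR}^{j}$ are summable. To make this rigorous I would split the sum at some cutoff $M$:
\begin{equation*}
C_n = \sum_{j=0}^{M} T_{SS}^{n-1-j} T_{SR} T_{RR}^{j} + \sum_{j=M+1}^{n-1} T_{SS}^{n-1-j} T_{SR} T_{RR}^{j}.
\end{equation*}
Since $T_{SS}$ is left-stochastic, $\|T_{SS}^{k}\|_1$ is uniformly bounded, so the tail term is bounded in norm by $c\sum_{j>M}\|T_{RR}^{j}\|$, which can be made arbitrarily small uniformly in $n$ by choosing $M$ large (because $\rho(T_{RR})<1$). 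For fixed $M$, letting $n\to\infty$ sends $T_{SS}^{n-1-j}\to E$ for each $j\le M$, so the head converges to $E\,T_{SR}\sum_{j=0}^{M} T_{RR}^{j}$; finally letting $M\to\infty$ yields $C_n \to E\,T_{SR}(I-T_{RR})^{-1} = EW$, where the Neumann series converges since $\rho(T_{RR})<1$.

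It remains to establish the two structural properties of $W$. Non-negativity is immediate from $W = T_{SR}\sum_{j=0}^{\infty} T_{RR}^{j}$, a sum of products of matrices with non-negative entries. For the column sums, I would use the fact that $A$ is left-stochastic, so $A^n$ is left-stochastic for every $n$ and therefore $A_\infty$ is left-stochastic as well, i.e., $\mathbf{1}^{\mathsf{T}} A_\infty = \mathbf{1}^{\mathsf{T}}$. Restricting attention to the columns indexed by the $R$-subnetworks gives $\mathbf{1}^{\mathsf{T}} E W = \mathbf{1}^{\mathsf{T}}$. A direct computation using $\mathbf{1}^{\mathsf{T}} y_s = 1$ on each diagonal block shows $\mathbf{1}^{\mathsf{T}} E = \mathbf{1}^{\mathsf{T}}$, and combining these yields $\mathbf{1}^{\mathsf{T}} W = \mathbf{1}^{\mathsf{T}}$, which is the column-sum claim. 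The main subtlety throughout is the uniform-in-$n$ control needed to interchange the limit with the finite-but-growing sum in $C_n$; everything else reduces to Perron--Frobenius applied to each $A_s$ and the geometric decay of $T_{RR}^n$.
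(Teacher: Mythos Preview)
Your argument is correct. Note, however, that the paper does not supply its own proof of this lemma: it is quoted verbatim from \cite{ying2014information} and simply marked with a~$\blacksquare$, so there is no in-paper proof to compare against. Your direct approach---induction on the block form of $A^n$, Perron--Frobenius on each primitive $A_s$ to get $T_{SS}^n\to E$, the bound $\rho(T_{RR})<1$ for $T_{RR}^n\to 0$, a head/tail splitting to pass to the limit in the cross term, and left-stochasticity of $A_\infty$ for the column-sum property of $W$---is the standard route and is essentially how the result is established in the cited reference.
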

	
	We now examine the belief evolution of agents in weakly-connected networks. We still denote by $\Theta$ the set of all possible states, and we assume that $\Theta$ is uniform across all sub-networks. However, we allow each sub-network to have its own true state, which may differ from one sub-network to another. We denote by $\theta^\circ_s$ the true state of sending sub-network $s$ and by $\theta_r^\circ$ the true state of receiving sub-network $r$, where both $\theta^\circ_s$ and $\theta^\circ_r$ are in $\Theta$. Therefore, if agent $k$ belongs to a sub-network $s$, its observational signals $\bm\xi_{k,i}$ will be generated according to the likelihood function $L_k(\cdot|\theta^{\circ}_s )
	$. On the other hand, if agent $k$ belongs to a sub-network $r$, its observational signals $\bm\xi_{k,i}$ will be generated according to $L_k(\cdot|\theta^{\circ}_r )
	$.
	
	We already know that the $S-$type sub-networks are strongly-connected, so that their agents can cooperate together to learn the truth. More specifically, according to Theorem \ref{theorem1}, if agent $k$ belongs to sub-network $s$, then it holds that:
	\be
	\lim_{i \to \infty} \bm\mu_{k,i}(\theta^\circ_s)\aseq1 
	\label{eqn:delta}
	\ee
	The question that we want to examine is how the beliefs of the agents in the receiving sub-networks are affected. These agents are now influenced by the beliefs of the $S-$type groups. Since this external influence carries information not related to the true state of each receiving sub-network, the receiving agents may not be able to learn their own true states. We will show that a leader-follower relationship develops.
	\subsection{Diffusion Social Learning over Weak Graphs}
	We consider that all agents are following the diffusion strategy (\ref{eqn:diffusion}) for social learning. In a manner similar to (\ref{eqn:setUndis}), if agent $k$ belongs to sub-network $r$, then we assume that there exists a subset of states $\Theta_k \subseteq \Theta$ such that:
	\be 
	L_k(\zeta_k|\theta)=L_k(\zeta_k|\theta_r^\circ)
	\label{indisSet}
	\ee
	for any $\zeta_k \in Z_k$ and $\theta \in \Theta_k$, i.e., $\Theta_k$ is the set of indistinguishable states for agent $k$. Moreover, we assume a scenario in which the private signals of agents in the receiving sub-networks are not informative enough to let their agents discover that the true states of the sending sub-networks do not represent their own truth. That is, we are assuming \emph{for now} the following condition.
	\begin{assumption*}
		The true state $\theta^\circ_s$, of each sub-network $s\in\{1,2,\ldots,S\}$, belongs to the indistinguishable set $\Theta_k$:
		\be
		\theta^\circ_s \in \Theta_k, \quad \mbox{\rm for any $k>N_{gS}$}
		\label{assumImp}
		\ee 
		\qd
	\end{assumption*}
	
	Under (\ref{assumImp}), we will now verify that the interaction with the $S$ sub-networks ends up forcing the receiving agents to focus their beliefs on the true states of the $S-$type. Later, we will show that a similar conclusion continues to hold even when (\ref{assumImp}) is relaxed. 
	
	Thus, let $\Theta^\bullet=\{\theta^\circ_1,\cdots,\theta_S^\circ\}$ denote the set of all true states of the $S-$type sub-networks. We are assuming, for notational simplicity, that the true states $\{\theta^\circ_s\}$ are distinct from each other. Otherwise, we only include in $\Theta^{\bullet}$ the set of truly distinct states, which will be smaller than $S$ in number. We denote the complement of $\Theta^\bullet$ by $\bar{\Theta}^\bullet$, such that $\Theta^\bullet\cap\bar{\Theta}^\bullet=\emptyset$ and $\Theta^\bullet\cup\bar{\Theta}^\bullet=\Theta$. We first show that as $i\to\infty$, each receiving agent $k$ will assign zero belief to any event $\theta\in\bar{\Theta}^\bullet$. This means that receiving agents will end up searching for the truth within the set $\Theta^{\bullet}$. 
	\begin{lemma}[Focus on True States of $S$ Sub-Networks]
		\label{lemma:vanishing}
		Under (\ref{assumImp}), each agent $k$ of any receiving sub-network $r$ eventually identifies the set $\bar{\Theta}^\bullet$, namely, for any $\theta \in \bar{\Theta}^\bullet$:
		\be
		\lim_{i \to \infty} \bm\mu_{k,i}(\theta)\aseq 0 
		\ee
	\end{lemma}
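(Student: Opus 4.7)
The plan is to propagate the vanishing of $\bmu_{k,i}(\theta)$ on $\bar{\Theta}^\bullet$ from the strongly-connected sending sub-networks to the receiving ones, using three ingredients: the block-upper-triangular structure of $A$ in~(\ref{matrixA}), Theorem~\ref{theorem1} inside each $S$-type block, and the contraction $\rho(T_{RR})<1$ that accompanies weak connectivity (cf.\ Lemma~\ref{kajd713.lemma}).

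Because the top-left block of $A$ equals $T_{SS}=\mathrm{blockdiag}(A_1,\ldots,A_S)$, each sending sub-network evolves autonomously under~(\ref{eqn:diffusion}), so Theorem~\ref{theorem1} yields $\bmu_{\ell,i}(\theta^\circ_s)\aseq 1$, hence $\bmu_{\ell,i}(\theta)\aseq 0$ for every $\theta\in\bar{\Theta}^\bullet$ and every sending $\ell\in s$. Lemma~\ref{lemma1} further forces the Bayes denominator $\bm m_{\ell,i-1}(\bm\xi_{\ell,i})$ to converge almost surely to the positive value $L_\ell(\bm\xi_{\ell,i}|\theta^\circ_s)$, which carries the vanishing to the intermediate beliefs $\bm\psi_{\ell,i}(\theta)$ at sending agents as well.

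For the receiving agents I introduce the tail mass $\bm q_{k,i}\define\sum_{\theta\in\bar{\Theta}^\bullet}\bmu_{k,i}(\theta)$, stack the receiving values into $\bm q_i^R\in\real^{N_{gR}}$, and sum the combination step over $\theta\in\bar{\Theta}^\bullet$ after splitting each receiving neighborhood:
\[
\bm q_{k,i}=\sum_{\ell\in\mathcal{N}_k\cap\mathcal{S}}a_{\ell k}\,\bm\Psi_{\ell,i}+\sum_{\ell\in\mathcal{N}_k\cap\mathcal{R}}a_{\ell k}\,\bm\Psi_{\ell,i},\qquad\bm\Psi_{\ell,i}\define\sum_{\theta\in\bar{\Theta}^\bullet}\bm\psi_{\ell,i}(\theta).
\]
The first sum vanishes almost surely by the previous step. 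For the second, Assumption~(\ref{assumImp}) gives $L_\ell(\cdot|\theta^\circ_s)=L_\ell(\cdot|\theta^\circ_r)$ for every $s$, so $\bm m_{\ell,i-1}(\bm\xi_{\ell,i})\ge L_\ell(\bm\xi_{\ell,i}|\theta^\circ_r)(1-\bm q_{\ell,i-1})$ while the numerator of $\bm\Psi_{\ell,i}$ is at most $L_{\max}\bm q_{\ell,i-1}$, producing a linear bound $\bm\Psi_{\ell,i}\le C\,\bm q_{\ell,i-1}$ whenever $\bm q_{\ell,i-1}$ stays uniformly below one. Together this yields an almost-sure affine recursion
\[
\bm q_i^R\preceq C\,T_{RR}\tran\,\bm q_{i-1}^R+\bm\epsilon_i,\qquad \bm\epsilon_i\aseq 0,
\]
which $\rho(T_{RR})<1$ iterates to $\bm q_i^R\aseq 0$.

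The principal obstacle is the uniform lower bound $1-\bm q_{\ell,i-1}\ge\delta>0$ at receiving agents, since the Bayes ratio is not automatically bounded. I plan to close this with a bootstrap using the fact that every receiving agent is connected in finitely many hops to some sending sub-network through the off-diagonal block $T_{SR}$: the influx $\bm\Pi_{\ell',i}\define\sum_{\theta\in\Theta^\bullet}\bm\psi_{\ell',i}(\theta)\to 1$ at any sending neighbor $\ell'$ propagates through a finite product of entries of $A$ to supply each receiving agent with a persistent positive share of mass on $\Theta^\bullet$, along the lines of the flow identity underlying Lemma~\ref{kajd713.lemma}. Should the resulting constant $C$ threaten the $\rho(T_{RR})$-based contraction, an alternative is to work with the log-ratio $\log[\bmu_{k,i}(\theta)/\bmu_{k,i}(\theta^\bullet)]$ for any reference $\theta^\bullet\in\Theta^\bullet$, which under~(\ref{assumImp}) obeys a supermartingale-type inequality driven to $-\infty$ by the forcing from the sending side, bypassing the explicit constant.
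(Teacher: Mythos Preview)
Your primary route has a real gap, and you already sense it: the affine bound $\bm\Psi_{\ell,i}\le C\,\bm q_{\ell,i-1}$ carries an unavoidable constant $C\ge L_{\max}/L_\ell(\bm\xi_{\ell,i}|\theta^\circ_r)\cdot(1-\bm q_{\ell,i-1})^{-1}$. The bootstrap you propose can at best secure $1-\bm q_{\ell,i-1}\ge\delta$, but the likelihood-ratio factor $L_{\max}/L_\ell(\cdot|\theta^\circ_r)$ is model-dependent and may be arbitrarily large, so in general $\rho(C\,T_{RR})\ge 1$ and the recursion does not contract. This is not a technicality you can patch with sharper constants; a linear bound on the tail mass through the Bayes step simply does not produce a clean $T_{RR}^\mathsf{T}$ recursion.

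The paper's argument is essentially your second fallback, executed cleanly. The key observation you are missing is that under~(\ref{assumImp}) the likelihood $L_\ell(\cdot|\theta)$ is \emph{constant} over $\theta\in\Theta^\bullet$ for every receiving agent $\ell$, so the Bayes step factors exactly:
\[
\sum_{\theta\in\Theta^\bullet}\bm\psi_{\ell,i}(\theta)
=\Bigl(\sum_{\theta\in\Theta^\bullet}\bm\mu_{\ell,i-1}(\theta)\Bigr)\cdot\frac{L_\ell(\bm\xi_{\ell,i}|\theta^\circ_r)}{\bm m_{\ell,i-1}(\bm\xi_{\ell,i})}.
\]
Working with $Q^W(\bm\mu_{k,i})\define-\log\sum_{\theta\in\Theta^\bullet}\bm\mu_{k,i}(\theta)$ and its expectation $J^W(\bm\mu_{k,i})$, the combination step yields $T_{RR}^\mathsf{T}$ via Jensen (convexity of $-\log$), and the Bayes correction becomes $-\Ex\log\bigl(L_\ell/\bm m_{\ell,i-1}\bigr)$, which after conditioning on $\mathcal{F}_{i-1}$ is $-D_{\mathrm{KL}}\bigl(L_\ell(\cdot|\theta^\circ_r)\,\|\,\bm m_{\ell,i-1}\bigr)\le 0$ and can be dropped. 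This gives
\[
J^W(\bmu_{\mathcal{R},i})\preceq T_{RR}^\mathsf{T}J^W(\bmu_{\mathcal{R},i-1})+T_{SR}^\mathsf{T}J^W(\bm\psi_{\mathcal{S},i}),
\]
with \emph{no} extraneous constant, and the forcing term vanishes by Theorem~\ref{theorem1}. The $\log$ transform is therefore not a cosmetic alternative but the device that removes the obstruction your direct approach runs into. Your log-ratio idea $\log[\bmu_{k,i}(\theta)/\bmu_{k,i}(\theta^\bullet)]$ heads in the right direction; replacing the single reference $\theta^\bullet$ by the full $\Theta^\bullet$-mass and taking expectations is what makes the KL sign work out.
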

	\noindent {\em Proof}: See Appendix \ref{App.B}.
	\qd
	
	This lemma implies that the receiving agents are still able to perform correct forecasting.  
	\begin{lemma}[Correct Forecasting]
		\label{lemma:correct}
		Under (\ref{assumImp}), every agent $k$ in sub-network $r$ develops correct forecasting, namely,
		\be
		\lim_{i \to \infty} \bm{m}_{k,i}(\zeta_{k})\aseq L_k(\zeta_k|\theta_r^\circ),\quad \mbox{\rm for any $\zeta_k \in Z_k$}
		\ee
	\end{lemma}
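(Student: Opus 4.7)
The plan is to derive the lemma as a fairly direct consequence of Lemma~\ref{lemma:vanishing} together with the indistinguishability property contained in the Assumption (\ref{assumImp}). The only tool needed beyond these is the normalization of $\bmu_{k,i}(\theta)$ to a probability measure over $\Theta$.

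First I would split the forecast sum along the partition $\Theta=\Theta^{\bullet}\cup\bar{\Theta}^{\bullet}$, writing
\begin{equation}
\bm{m}_{k,i}(\zeta_k)\;=\;\sum_{\theta\in\Theta^{\bullet}}\bmu_{k,i}(\theta)L_k(\zeta_k|\theta)\;+\;\sum_{\theta\in\bar{\Theta}^{\bullet}}\bmu_{k,i}(\theta)L_k(\zeta_k|\theta).
\end{equation}
Since $\bar{\Theta}^{\bullet}$ is a finite set and each likelihood $L_k(\zeta_k|\theta)$ is a bounded constant, Lemma~\ref{lemma:vanishing} immediately implies the second sum vanishes almost surely as $i\to\infty$. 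Thus I only need to control the limit of the first sum, which is supported on $\Theta^{\bullet}=\{\theta_1^\circ,\ldots,\theta_S^\circ\}$.

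Next I would invoke the Assumption (\ref{assumImp}): for every receiving agent $k$ and every $s$, the true state $\theta^\circ_s$ lies in the indistinguishable set $\Theta_k$ of agent $k$. Combined with (\ref{indisSet}), this yields the key identity $L_k(\zeta_k|\theta^\circ_s)=L_k(\zeta_k|\theta_r^\circ)$ for all $s$ and all $\zeta_k\in Z_k$. Substituting into the surviving sum, the likelihood factor $L_k(\zeta_k|\theta_r^\circ)$ can be pulled out, leaving
\begin{equation}
\sum_{\theta\in\Theta^{\bullet}}\bmu_{k,i}(\theta)L_k(\zeta_k|\theta)\;=\;L_k(\zeta_k|\theta_r^\circ)\sum_{\theta\in\Theta^{\bullet}}\bmu_{k,i}(\theta).
\end{equation}
Finally, using again Lemma~\ref{lemma:vanishing} and the probability normalization $\sum_{\theta\in\Theta}\bmu_{k,i}(\theta)=1$, the remaining sum $\sum_{\theta\in\Theta^{\bullet}}\bmu_{k,i}(\theta)$ converges almost surely to $1$. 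Putting the two pieces together gives the desired conclusion $\lim_{i\to\infty}\bm m_{k,i}(\zeta_k)\aseq L_k(\zeta_k|\theta_r^\circ)$.

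I do not anticipate a genuine obstacle here; the lemma is essentially a corollary of Lemma~\ref{lemma:vanishing}. The only subtle point is making sure the almost-sure convergence is preserved under the finite linear combination, which is trivial since $|\Theta|<\infty$ and each $L_k(\zeta_k|\cdot)$ is bounded; no uniform integrability or dominated convergence argument is needed. The real work has already been done in establishing Lemma~\ref{lemma:vanishing}, and this lemma just reads off its statistical consequence for the predictive distribution.
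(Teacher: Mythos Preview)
Your proposal is correct and follows essentially the same route as the paper's own proof: split $\bm m_{k,i}(\zeta_k)$ over $\Theta^\bullet$ and $\bar\Theta^\bullet$, kill the $\bar\Theta^\bullet$ part by Lemma~\ref{lemma:vanishing}, use assumption~(\ref{assumImp}) with (\ref{indisSet}) to pull out $L_k(\zeta_k|\theta_r^\circ)$ from the $\Theta^\bullet$ sum, and then note that $\sum_{\theta\in\Theta^\bullet}\bmu_{k,i}(\theta)\to 1$ almost surely. The paper's proof is terser but the logic is identical; your remarks about finiteness of $\Theta$ and boundedness of the likelihoods simply make explicit what the paper leaves implicit.
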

	\noindent {\em Proof}: See Appendix \ref{App.B1}.
	\qd
	
	Even with the external influence, agent $k$ is still able to attain correct forecasting because any true state $\theta^\circ_s$ of any sending sub-network, $s$, belongs to the indistinguishable set of agent $k$, i.e.,  $L_k(\zeta_k|\theta^\circ_r)=L_k(\zeta_k|\theta^\circ_s)$ from (\ref{assumImp}) and (\ref{indisSet}). Since agents zoom onto the set $\Theta^\bullet$, this fact enables correct forecasting but does not necessarily imply truth learning for weak graphs, as discussed in the sequel.
	
	The previous two lemmas establish that the belief of each agent $k$ in sub-network $r$ will converge to a distribution whose support is limited to $\theta\in\Theta^{\bullet}$. The next question is to evaluate this distribution, which is the subject of the following main result. First let
	  \begin{align}
	   	\hspace{-0.33cm}\bmu_{i}^s (\theta)\define&\;
	   	{\rm col}\left\{
	   	\bmu_{k_s(1),i} (\theta),
	   	\bmu_{k_s(2),i} (\theta),
	   	\hdots,
	   	\bmu_{k_s{(N_s)},i}(\theta) 
	   	\right\} \\
	   	\hspace{-0.33cm}\bmu_{i}^r (\theta)\define&\;
	   	{\rm col} \left\{
	  	\bmu_{k_r(1),i} (\theta),
	  	\bmu_{k_r(2),i} (\theta),
	   	\hdots,
	  	\bmu_{k_r(N_{r}),i}(\theta)
	   	\right\}
	   	\label{muR}
	   \end{align}
	collect all beliefs from agents that belong respectively to sub-network $s$ and sub-network $r$, where the notation $k_s(n)$ denotes the index of the $n${-th} agent within sub-network $s$, i.e.,
	\be 
	k_s(n)=\sum\limits_{v=1}^{s-1}N_v+n
	\ee
	and $n\in \{1,2,\cdots,N_s\}$ and the notation $k_r(n)$ denotes the index of the $n$-th agent within sub-network $r$, i.e.,
	\be
	k_r(n)=N_{gS}+\sum\limits_{v=S+1}^{r-1}N_{v}+n
	\label{Index}
	\ee
	and $n\in \{1,2,\cdots,N_r\}$. Furthermore, let
	\begin{align}
	  	\bmu_{\mathcal{S},i} (\theta)\define&\;
	  	{\rm col}\left\{
	  	\bmu_{i}^1 (\theta),
	  	\bmu_{i}^2 (\theta),
	  	\hdots,
	  	\bmu_{i}^S (\theta)
	  	\right\} \\
	  	\bmu_{\mathcal{R},i} (\theta)\define&\;
	  	{\rm col} \left\{
	  	\bmu_{i}^{S+1} (\theta) ,
	  	\bmu_{i}^{S+2} (\theta) ,
	  	\hdots,
	  		\bmu_{i}^{S+R}(\theta) 
	  	\right\}
	  	\label{muS}
	\end{align}
	collect all belief vectors respectively from all $S-$type sub-networks and from all $R-$type sub-networks.
	\begin{theorem}[Limiting Beliefs for Receiving Agents]
		\label{mainresult1}
		Under (\ref{assumImp}), it holds that
		\be
		\lim_{i \to \infty} \bmu_{\mathcal{R},i} (\theta) = W\tran \left (\lim_{i \to \infty}  \bmu_{\mathcal{S},i} (\theta)\right) 
		\label{result}
		\ee
	\end{theorem}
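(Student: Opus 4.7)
The plan is to split the proof by the set in which $\theta$ lies. When $\theta\in\bar\Theta^\bullet$, Lemma~\ref{lemma:vanishing} gives $\bmu_{k,i}(\theta)\aseq 0$ for every receiving agent, and Theorem~\ref{theorem1} gives the same limit for every sending agent since such a $\theta$ is not the truth of any sub-network $s$. Both sides of (\ref{result}) therefore vanish a.s., and the identity is trivial. The substance of the proof lies in the case $\theta\in\Theta^\bullet$, where I will linearize the Bayesian step and then solve the resulting affine recursion.

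Fix $\theta\in\Theta^\bullet$. For any receiving agent $k$ in sub-network $r$, (\ref{assumImp})--(\ref{indisSet}) give $L_k(\bxi_{k,i}|\theta')=L_k(\bxi_{k,i}|\theta_r^\circ)$ for every $\theta'\in\Theta^\bullet$. Splitting the denominator of (\ref{eqn:bayesianupdate}) across $\Theta^\bullet$ and $\bar\Theta^\bullet$ and using Lemma~\ref{lemma:vanishing} together with the boundedness of the likelihood ratios on the finite alphabet $Z_k$ shows that
\begin{align}
\bpsi_{k,i}(\theta)-\bmu_{k,i-1}(\theta)\aseq 0,
\end{align}
so the Bayesian step is asymptotically the identity on $\Theta^\bullet$ for receiving agents. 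On the sending side, Theorem~\ref{theorem1} delivers a deterministic a.s.\ limit $\mu_{\mathcal{S}}^{\star}(\theta)$ for $\bmu_{\mathcal{S},i}(\theta)$ (entry $k$ equal to one if $\theta$ is the truth of $k$'s sub-network, zero otherwise), and the intermediate beliefs $\bpsi_{k,i}(\theta)$ inherit the same limit. Stacking $\bmu_{k,i}(\theta)=\sum_{\ell}a_{\ell k}\bpsi_{\ell,i}(\theta)$ over the receiving block and invoking the block form (\ref{AStruct}) yields the affine recursion
\begin{align}
\bmu_{\mathcal{R},i}(\theta)=T_{SR}\tran\mu_{\mathcal{S}}^{\star}(\theta)+T_{RR}\tran\bmu_{\mathcal{R},i-1}(\theta)+\boldsymbol{\epsilon}_i(\theta),
\end{align}
with $\boldsymbol{\epsilon}_i(\theta)\aseq 0$ collecting the two approximation residuals above.

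To conclude, set $\tilde\bmu_{\mathcal{R},i}(\theta)\define \bmu_{\mathcal{R},i}(\theta)-W\tran\mu_{\mathcal{S}}^{\star}(\theta)$. Lemma~\ref{kajd713.lemma} gives $W\tran=(I-T_{RR}\tran)^{-1}T_{SR}\tran$, so the recursion collapses to $\tilde\bmu_{\mathcal{R},i}(\theta)=T_{RR}\tran\tilde\bmu_{\mathcal{R},i-1}(\theta)+\boldsymbol{\epsilon}_i(\theta)$. Because $\rho(T_{RR})<1$ (also Lemma~\ref{kajd713.lemma}), the homogeneous dynamics are exponentially stable, and a standard argument for stable linear systems driven by an a.s.-vanishing perturbation gives $\tilde\bmu_{\mathcal{R},i}(\theta)\aseq 0$, i.e.\ (\ref{result}). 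The main obstacle I expect is justifying the degeneracy of the Bayesian step uniformly along almost every sample path: the ratios $L_k(\bxi_{k,i}|\theta')/L_k(\bxi_{k,i}|\theta_r^\circ)$ must be controlled, so the argument will rely on the finiteness of $Z_k$ and positivity of the likelihoods to upgrade the pointwise convergence in Lemma~\ref{lemma:vanishing} into the required residual bound.
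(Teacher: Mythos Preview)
Your approach is essentially the same as the paper's: linearize the Bayesian step on $\Theta^\bullet$ so that $\bpsi_{k,i}(\theta)-\bmu_{k,i-1}(\theta)\to 0$, stack the combination step into the block recursion governed by $T_{SR}\tran$ and $T_{RR}\tran$, and use $\rho(T_{RR})<1$ to conclude. Two minor differences are worth noting. First, the paper carries out the linearization by invoking Lemma~\ref{lemma:correct} (correct forecasting), writing $\bm m_{k,i-1}(\bxi_{k,i})\to L_k(\bxi_{k,i}|\theta_r^\circ)$ and using $L_k(\bxi_{k,i}|\theta)=L_k(\bxi_{k,i}|\theta_r^\circ)$ for $\theta\in\Theta^\bullet$; you instead go back to Lemma~\ref{lemma:vanishing} and split the denominator directly. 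Since Lemma~\ref{lemma:correct} is itself a two-line consequence of Lemma~\ref{lemma:vanishing} plus (\ref{assumImp}), these are the same argument in slightly different packaging, and your anticipated ``obstacle'' about controlling likelihood ratios is exactly what correct forecasting buys. Second, and more substantively, the paper passes to the limit on both sides of the recursion and solves the fixed-point equation $(I-T_{RR}\tran)\lim_i\bmu_{\mathcal R,i}(\theta)=T_{SR}\tran\lim_i\bmu_{\mathcal S,i}(\theta)$, tacitly assuming $\lim_i\bmu_{\mathcal R,i}(\theta)$ exists; your centering $\tilde\bmu_{\mathcal R,i}=\bmu_{\mathcal R,i}-W\tran\mu_{\mathcal S}^\star$ and appeal to a stable linear system driven by an a.s.-vanishing input actually \emph{proves} that existence, so your version is the cleaner one.
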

	\noindent {\em Proof}: See Appendix \ref{App.B2}.
	\qd
	
	We expand (\ref{result}) to clarify its meaning and to show how the beliefs are distributed among the elements of $\Theta^\bullet$.
	We already know from the result in Theorem \ref{theorem1} that, for each agent $k$ of sending sub-network $s$, $\bm{\mu}_{k,i}(\theta)$ converges asymptotically to an impulse of size one at the location $\theta=\theta^\circ_s$. Thus, we write:
	\be
	\lim_{i\to\infty}\bm{\mu}_{i}^s(\theta)=
	e_{\theta,\theta^\circ_s}
	\define \left\{
	\begin{aligned}
		\one_{N_s},&\quad {\rm if}\quad \theta = \theta_s^\circ\\
		\boldsymbol{0}_{N_s},&\quad {\rm otherwise}
	\end{aligned}
	\right. 
	\label{eqn:delta2}
	\ee
	where $\one_{N_s}$ denotes a column vector of length $N_s$ whose elements are all one. Similarly, $\boldsymbol{0}_{Ns}$ denotes a column vector of length $N_s$ whose elements are all zero. Hence,
	\be
	\lim_{i \to \infty} \bm\mu_{\mathcal{S},i}(\theta)=
	{\rm col}\left\{
	e_{\theta,\theta^\circ_1},
	e_{\theta,\theta^\circ_2},
	\hdots,
	e_{\theta,\theta^\circ_S}
	\right\}
	\ee
	Now, let $w_{k}\tran$ denote the row of $W\tran$ that corresponds to agent $k$ in sub-network{\footnote{The real index of the row of $W\tran$ that corresponds to agent $k$ is $k-N_{gS}$.} $r$. We partition it into
		\be
		w_k\tran=\ba{cccc}w_{k,N_1}\tran&w_{k,N_2}\tran&\ldots,w_{k,N_S}\tran\ea
		\ee
		where the $\{N_1,N_2,\ldots,N_S\}$ are the number of agents in each sub-network $s\in\{1,2,\ldots,S\}$.
		By examining (\ref{result}), we conclude that the distribution for each agent $k$ in an $R-$type
		sub-network converges to a combination of the various vectors $\{e_{\theta,\theta^\circ_s}\}$, namely,
		\vspace{-0.5cm}
		\begin{align}
		\lim_{i \to \infty} \bm\mu_{k,i}(\theta)=q_k(\theta)\define \sum_{s=1}^{S}w_{k,N_s}\tran e_{\theta,\theta^\circ_s}
		\label{FinalDistribution}
		\end{align}
		\noindent Observe that, from this equation, to get $q_k(\theta^\circ_s)$, the elements of the corresponding block in $w_k$, i.e., $w_{k,N_s}$, should be summed. Now, if we consider that multiple sending sub-networks have the same true state, then to get $q_k(.)$ at this true state, the elements of all corresponding blocks in $w_k$ will need to be summed. Note that this is a valid probability measure in view of Lemma \ref{kajd713.lemma}, i.e.,
		\bq
		\sum_{\theta \in \Theta^\bullet}q_k(\theta)= 1
		\eq
		Note also that if it happens that $\theta_s^\circ=\theta^\circ$ for all $s$, then $q_k(\theta^\circ)=1$ and
		$q_k(\theta)=0$ for all $\theta\neq \theta^\circ$, and in this case, sending agents can be seen as helping receiving agents to find the true state. We also observe that the beliefs of agents in the receiving sub-networks differ from one agent to another, since for each agent $k$, $q_k(\theta)$ depends on $w_k$. This means that the external influence has created social disagreement in the receiving sub-networks.
		
		We therefore established that the beliefs of receiving agents converge to a distribution whose support is limited to the true states of the sending sub-networks. We will refer to this situation as a {\em total influence} or ``mind-control'' scenario where the learning of the $R-$subnetworks is fully dictated by the $S-$subnetworks. When all agents follow model ({\ref{eqn:diffusion}}) and when assumption (\ref{assumImp}) is satisfied, this total influence scenario arises. Although the private signals of the receiving agents are supposed to hold information regarding their own true state, however, under assumption (\ref{assumImp}), these signals are not informative enough, so that agents are naturally driven to be under the influence of  the sending sub-networks. 
		
		We are interested now in knowing whether this total influence situation can still occur when assumption (\ref{assumImp}) is not satisfied anymore. When this is the case, sending agents may not be able to totally control the beliefs of receiving agents anymore. Before establishing the analytical results, and before showing how self-awareness can alter this dynamics, we provide an illustrative example. 
		
		\subsection{Implications of Violating Condition  (\ref{assumImp})}
		We consider a network consisting of three agents, with the first two playing the role of influential agents and the third one acting as a receiving agent. The combination matrix is chosen as follows:
		
		\begin{small}
			\begin{equation}
				A=\left[
				\begin{array}{cc|c}
					1 & 0 & 0.1 \\
					0 & 1 & 0.2\\
					\hline
					0 & 0 & 0.7
				\end{array}
				\right]
				\label{examComb}
			\end{equation}
		\end{small}
		\begin{figure}[h!]
			\centering
			\includegraphics[scale=0.4]{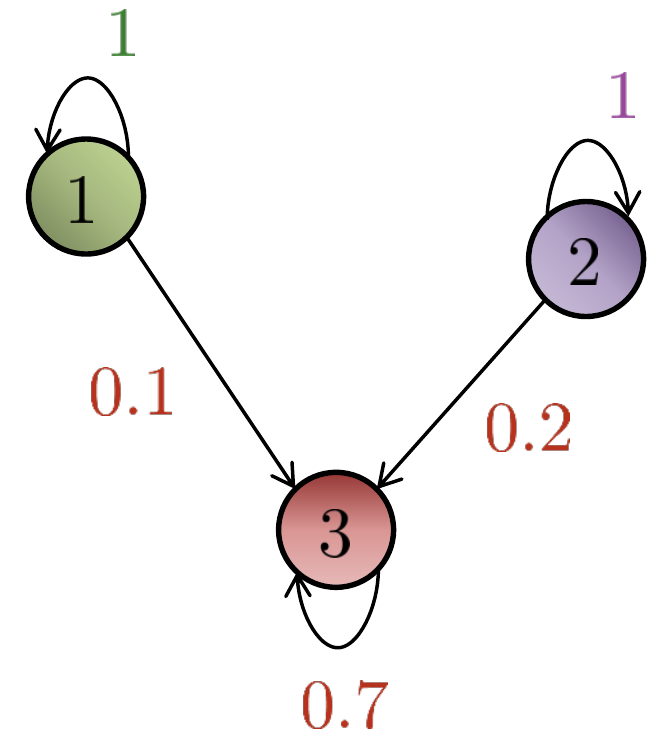} 
			\caption{{\small A weakly connected network and the corresponding combination policy (\ref{examComb}). }}
			\label{network.counter}
		\end{figure}
		
		\noindent We denote by $\theta^\circ_1$ the true state for agent $1$, by $\theta^\circ_2$ the true state for agent $2$, and by $\theta^\circ_3$ the true state for agent $3$ so that $\Theta=\{\theta^\circ_1,\theta^\circ_2,\theta^\circ_3\}$. The observational signal for all three agents is either a head ``H'' or a tail ``T''. In order for agents $1$ and $2$ to learn their true states asymptotically, we need to ensure that the
		conditions of Theorem \ref{theorem1} are satisfied. One of these requirements is the identifiability condition (\ref{assm1}), which requires that the intersection of the indistinguishable sets (\ref{eqn:setUndis}) of all agents in a given sending sub-network $s$ must be the singleton $\{\theta^\circ_s\}$. In this example, each sending sub-network consists of only one agent, so that condition (\ref{assm1}) reduces to $\Theta_1=\{\theta^\circ_1\}$ for the first agent and $\Theta_2=\{\theta^\circ_2\}$ for the second agent. In other words, since agents $1$ and $2$ do not have neighbors to communicate with, they must rely solely on their observational signals to learn the truth. This is feasible when for agents $1$ and $2$ no state is observationally equivalent to their true state (or indistinguishable). Using the definition of the indistinguishable set (\ref{eqn:setUndis}), $\Theta_1=\{\theta^\circ_1\}$ translates into the following requirement for agent $1$:
		\be 
		L_1(\zeta_1|\theta^\circ_1)\neq L_1(\zeta_1|\theta^\circ_2) \; \text{ and } \; L_1(\zeta_1|\theta^\circ_1)\neq L_1(\zeta_1|\theta^\circ_3)
		\label{condi1}
		\ee
		for any $\zeta_1 \in\{H,T\}$. Similarly, $\Theta_2=\{\theta^\circ_2\}$ translates into the following requirement for agent $2$:
		\be 
		L_1(\zeta_2|\theta^\circ_2)\neq L_1(\zeta_2|\theta^\circ_1) \; \text{ and } \; L_1(\zeta_2|\theta^\circ_2)\neq L_1(\zeta_2|\theta^\circ_3)
		\label{condi2}
		\ee
		for any $\zeta_2 \in\{H,T\}$. For this example, we are choosing the likelihood functions arbitrarily but satisfying (\ref{condi1}) for agent $1$ and (\ref{condi2}) for agent $2$. For instance, we select for agent $1$,
		\bq
		L_1(H|\theta^\circ_1)=0.10, \; L_1(H|\theta^\circ_2)=0.35, \; L_1(H|\theta^\circ_3)=0.45 
		\eq
		and set $L_1(T|\theta)=1-L_1(H|\theta)$ for any $\theta \in\Theta$. Likewise for agent $2$, we select
		\bq
		L_2(H|\theta^\circ_1)=0.10, \; L_2(H|\theta^\circ_2)=0.20, \; L_2(H|\theta^\circ_3)=0.30 
		\eq
		and set $L_2(T|\theta)=1-L_2(H|\theta)$ for any $\theta \in\Theta$. Before analyzing the beliefs of agent $3$ when (\ref{assumImp}) is not satisfied, we consider first the case in which this assumption is satisfied. In this way, we will be able to compare what is happening in both cases. More specifically, following (\ref{assumImp}), we consider first that $\theta^o_1$ and $\theta^o_2$ belong to the indistinguishable set of agent $3$ denoted by $\Theta_3$, i.e., $\{\theta^\circ_1,\theta^\circ_2\}\in\Theta_3$. This means, according to the definition of the indistinguishable set (\ref{eqn:setUndis}), that
		\be
		L_3(\zeta_3|\theta^\circ_1)=L_3(\zeta_3|\theta^\circ_3) \;\text{ and }\; L_3(\zeta_3|\theta^\circ_2)=L_3(\zeta_3|\theta^\circ_3)
		\ee 
		for any $\zeta_3\in\{H,T\}$.
		According to model (\ref{eqn:diffusion}), the intermediate belief of agent $3$ is given by:
		\begin{align}
		\bm\psi_{3,i}(\theta) 
		&=\frac{\bmu_{3,i-1}(\theta)L_3(\bm\xi_{3,i}|\theta)}{\left(\sum_{\theta'\in\Theta}\bmu_{3,i-1}
			(\theta')\right)L_3(\bm\xi_{3,i}|\theta)}  \nn \\
		&=\bmu_{3,i-1}(\theta)
		\end{align}
		We observe in this example that the private signals of agent $3$ end up not contributing to its intermediate belief. As a result, it is only the beliefs of agents $1$ and $2$ that affect the belief of agent $3$, so that:
		\begin{align}
		\bmu_{3,i}(\theta)&= a_{13}\bpsi_{1,i}(\theta)+a_{23}\bpsi_{2,i}(\theta)+a_{33}\bpsi_{3,i}(\theta) \nn \\
		{}&=a_{13}\bmu_{1,i}(\theta)+a_{23}\bmu_{2,i}(\theta)+a_{33}\bmu_{3,i-1}(\theta) 
		\label{noContri}
		\end{align}
		In writing (\ref{noContri}), we used the fact that the intermediate beliefs for agents $1$ and $2$ coincide with their updated beliefs since, in this example, agents $1$ and $2$ have no neighbors. Thus, since $a_{33}<1$,
		\be
		\lim_{i\to\infty}\bmu_{3,i}(\theta)= \nn 
		\ee
		\be
		\left(\frac{a_{13}}{1-a_{33}}\right)\lim_{i\to \infty}\bmu_{1,i}(\theta)+\left(\frac{a_{23}}{1-a_{33}}\right)\lim_{i\to\infty}\bmu_{2,i}(\theta)
		\ee
		from which we conclude that
		\begin{align}
		\lim_{i\to\infty}\bmu_{3,i}(\theta^\circ_1)&=\frac{a_{13}}{1-a_{33}}  \\ \lim_{i\to\infty}\bmu_{3,i}(\theta^\circ_2)&=\frac{a_{12}}{1-a_{33}}  \\
		\lim_{i\to\infty} \bmu_{3,i}(\theta^\circ_3)&= 0
		\end{align}
		This total influence result is expected to occur according to Theorem \ref{mainresult1}, when assumption (\ref{assumImp}) is satisfied. 
		
		Let us consider now the case in which assumption (\ref{assumImp}) is not satisfied. This means that $\theta^\circ_1$ and $\theta^\circ_2$ do not need to both belong to the indistinguishable set $\Theta^\circ_3$ of agent 3, i.e.,
		\begin{align} 
		L_3(\zeta_k|\theta^\circ_1) \neq  L_3(\zeta_k|\theta^\circ_3) \quad\text{or}\quad 	L_3(\zeta_k|\theta^\circ_2) \neq  L_3(\zeta_k|\theta^\circ_3)
		\label{condi11}
		\end{align}
		for any $\zeta_k \in \{H,T\}$. In this example, we study the worst case scenario in which both conditions in (\ref{condi11}) are met (even if we consider other situations in which only one of these conditions is met, we still arrive at a similar conclusion, namely, the belief of agent $3$ will not reach a fixed distribution). We select arbitrarily the values for the likelihood function of agent $3$, but in a way that these values satisfy both conditions in (\ref{condi11}). For instance, we select
		\begin{align}
		\hspace{-0.4cm}
		L_3(H|\theta^\circ_1)=0.4, \quad L_3(H|\theta^\circ_2)=0.3, \quad L_3(H|\theta^\circ_3)=0.8
		\end{align}
	   In this case, the belief for agent $3$ will be updated as:
		\begin{align}
		\bmu_{3,i}(\theta)&=a_{13}\bmu_{1,i}(\theta)+a_{23}\bmu_{2,i}(\theta)+a_{33}\bpsi_{3,i-1}(\theta) \nn \\
		{}&=a_{13}\bmu_{1,i}(\theta)+a_{23}\bmu_{2,i}(\theta)+ \nn  \\
		{}&\;a_{33}\frac{ L_3(\bm\xi_{3,i} | \theta) \bmu_{3,i-1} (\theta) } {\sum_{\theta ' \in \Theta} \bmu_{3,i-1} (\theta ') L_3(\bm\xi_{3,i} | \theta')}
		\label{withContri}
		\end{align}
		We see here how this equality is different from (\ref{noContri}), where the last term $\bm\psi_{3,i-1}(\theta)$ holds information about $\theta^\circ_3$ that contradicts with the information held in the other terms. We now show by contradiction that in this case, agent $3$ will not converge to a fixed distribution. Assume, to the contrary, that the beliefs of agent $3$ reach the following distribution: 
		\begin{align}
		\hspace{-0.3cm}
		\lim_{i\to\infty}\bmu_{3,i}(\theta^\circ_1)=b, \;\lim_{i\to\infty}\bmu_{3,i}(\theta^\circ_2)=c, \; \lim_{i\to\infty}\bmu_{3,i}(\theta^\circ_3)=d
		\end{align}
		for some fixed non-negative constants $b$, $c$ and $d$ satisfying
		\be 
		b+c+d=1
		\label{condiProba}
		\ee
		We know that, as $i\to\infty$, agents $1$ and $2$ approach their true states so that by evaluating (\ref{withContri}) at $\theta^\circ_1$ when $i\to\infty$, we get:
		\begin{align}
		b=a_{13}+\frac{a_{33}L_3(\bm\xi_{3,i}|\theta^\circ_1)b}{bL_3(\bm\xi_{3,i}|\theta^\circ_1)+cL_3(\bm\xi_{3,i}|\theta^\circ_2)+dL_3(\bm\xi_{3,i}|\theta^\circ_3)} 
		\label{contr1}
		\end{align}
		Evaluating (\ref{withContri}) at $\theta^\circ_2$ when $i\to\infty$:
		\begin{align}
		c=a_{23}+\frac{a_{33}L_3(\bm\xi_{3,i}|\theta^\circ_2)c}{bL_3(\bm\xi_{3,i}|\theta^\circ_1)+cL_3(\bm\xi_{3,i}|\theta^\circ_2)+dL_3(\bm\xi_{3,i}|\theta^\circ_3)} 
		\label{contr2}
		\end{align}
		Evaluating (\ref{withContri}) at $\theta^\circ_3$ when $i\to\infty$:
		\begin{align}
		d=\frac{a_{33}L_3(\bm\xi_{3,i}|\theta^\circ_3)d}{bL_3(\bm\xi_{3,i}|\theta^\circ_1)+cL_3(\bm\xi_{3,i}|\theta^\circ_2)+dL_3(\bm\xi_{3,i}|\theta^\circ_3)} 
		\label{contr3}
		\end{align}
		Then, from (\ref{contr3}), we have:
		\begin{align}
		d&=\frac{0.56d}{0.4b+0.3c+0.8d}, \; \;\text{if observation is H}  \\
		d&=\frac{0.14d}{0.6b+0.7c+0.2d}, \; \;\text{if observation is T} 
		\end{align}
		Then, either $d=0$ or
		\be 
		0.4b+0.3c+0.8d=0.56 \text{ and } 0.6b+0.7c+0.2d=0.14
		\label{first}
		\ee
		However, conditions (\ref{first}) contradict the fact that we must have
		\begin{align}
		\left(0.4b+0.3c+0.8d\right)+\left(0.6b+0.7c+0.2d\right)&=b+c+d \nn \\
		&\stackrel{(\ref{condiProba})}=1
		\end{align}
		We conclude that $d=0$. Thus, condition (\ref{condiProba}) reduces to:
		\be
		b+c=1
		\label{condiProba2} 
		\ee
		With regards to the values of $b$ and $c$, we know from (\ref{contr1}) that
		\begin{align}
		b&=0.1+\frac{0.28b}{0.4b+0.3c},\; \;\text{if observation is H} \label{contra22} \\
		b&=0.1+\frac{0.42b}{0.6b+0.7c}, \; \;\text{if observation is T} 
		\label{contra2}
		\end{align}
		That is, the scalars $b$ and $c$ must satisfy
		\be 
		\frac{0.28}{0.4b+0.3c}=\frac{0.42}{0.6b+0.7c}
		\ee 
		The denominators are related as follows:
		\be
		(0.4b+0.3c)+(0.6b+0.7c)=b+c\stackrel{(\ref{condiProba2})}=1 
		\ee
		Thus,
		\be 
		\frac{0.28}{0.4b+0.3c}=\frac{0.42}{1-(0.4b+0.3c)}
		\ee 
		This leads to \be 0.4b+0.3c=\frac{0.28}{0.28+0.42}=0.4\ee
		so that from (\ref{contra22}), we have
		\be 
		b=0.1+\frac{0.28}{0.4}b
		\ee
		Thus, $b=\frac{1}{3}$, and since $0.4b+0.3c=0.4$, then $c=\frac{8}{9}$. However, $b+c=\frac{11}{9}$, which contradicts (\ref{condiProba2}). We conclude that the beliefs of agent $3$ cannot reach a fixed distribution. This conclusion is illustrated in Fig. \ref{motiv}, which plots the evolution of beliefs of agent $3$ for all $\theta\in\Theta$. It is clear from the figure how the contradictory information conveyed by the influential agents and the private signals do not lead agent $3$ to approach a fixed belief. This also means that agents $1$ and $2$ cannot fully control agent $3$.
		\begin{figure}[h]
			\epsfxsize 9cm \epsfclipon
			\begin{center}
				\includegraphics[scale=0.4]{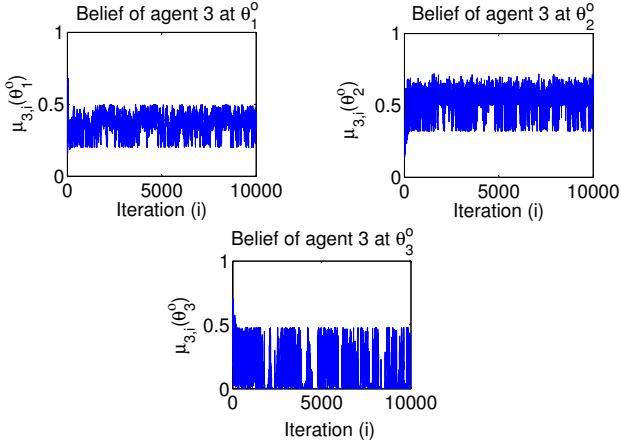}
				\caption{{\small Evolution of the beliefs of agent $3$ over time for the case in which condition (\ref{assumImp}) is not satisfied.}}\label{motiv}
			\end{center}
		\end{figure} 
		However, if agent $3$ decides to limit the contribution of its private signal on the update of its intermediate belief, will agents $1$ and $2$ be able to totally influence agent $3$? In other words, will the total influence scenario arise again even if assumption (\ref{assumImp}) is not satisfied? We show next that this is possible by incorporating an element of self-awareness into the learning process.
		\section{Diffusion Learning with Self-Awareness}
		We are therefore now motivated to modify the diffusion strategy (\ref{eqn:diffusion}) by incorporating a non-negative convex combination $\gamma_{k,i}$. This factor enables agents to assign more or less weight to their local information in comparison to the information received from their
		neighbors. Specifically, we modify
		(\ref{eqn:diffusion}) as follows:
		\be
		\left\{
		\begin{aligned}
			\bm\psi_{k,i}(\theta) &= \displaystyle \left(1-\gamma_{k,i}\right)\bmu_{k,i-1}(\theta) \\
			&\quad\quad + \gamma_{k,i}\frac{ \bmu_{k,i-1} (\theta) L_k(\bm\xi_{k,i} | \theta) } {\sum_{\theta ' \in \Theta} \bmu_{k,i-1} (\theta ') L_k(\bm\xi_{k,i} | \theta')} \\
			\bmu_{k,i}(\theta) &= \sum_{\ell \in \mathcal{N}_k } a_{\ell k} \bm\psi_{\ell,i} (\theta) \label{model2}
		\end{aligned}
		\right.
		\ee 
		where $\gamma_{k,i}\in[0,1]$ is a scalar variable. Observe that the intermediate belief  $\bpsi_{k,i}(\theta)$ of agent $k$ is now a combination of its prior belief, $\bmu_{k,i-1}(\theta)$, and the Bayesian update. The scalar $\gamma_{k,i}$ represents the amount of trust that agent $k$ gives to its private signal and how it is balancing this trust between the new observation and its own past belief. This weight can also model the lack of an observational signal at time $i$.
		
		Model (\ref{model2}) helps capture some elements of human behavior. For example, in an interactive social setting, a human agent may not be satisfied or convinced by an observation and prefers to give more weight to their prior belief based on accumulated experiences. This model was studied for single agents in\cite{epstein2010non, epstein2} and was motivated as a mechanism for self-control and temptation. The agent might observe a private signal at some time that can move this agent away from its current conviction. The agent can control this temptation by increasing the weight given to its prior belief or it can change its opinion by giving more weight to its Bayesian update, which is based on the private signal.
		
		We next analyze model (\ref{model2}) over weakly-connected graphs and establish two results. The first result is related to the sending agents
		and the second result is related to the receiving agents.
		
		\begin{lemma}[Correct Forecasting with Self-Awareness]
			Assume that $\lim\limits_{i\to\infty} \gamma_{k,i} \neq 0$ and the same conditions of Lemma 1. Then, self-aware sending agents develop correct forecasts of the incoming signals, namely, result (\ref{CorrectForecast}) continues to hold.
			\label{lemma2} 
		\end{lemma}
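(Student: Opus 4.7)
The plan is to adapt the original proof of Lemma~\ref{lemma1} (the pure-diffusion correct-forecasting result from \cite{zhao2012learning}) to the self-aware update (\ref{model2}). A helpful simplification is that, by the block-triangular form of $A$ in (\ref{matrixA}), any $S$-type sub-network is completely insulated from the rest of the network: its agents receive data only from one another through the left-stochastic primitive matrix $A_s$. I therefore fix such a sub-network $s$ and work exclusively with its Perron eigenvector $y_s$.

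My first concrete step is to recast (\ref{model2}) as a genuine Bayesian update with a modified, state-dependent likelihood. A direct algebraic manipulation of the first line of (\ref{model2}) yields
\begin{equation*}
\bm\psi_{k,i}(\theta)\;=\;\frac{\bmu_{k,i-1}(\theta)\,\widetilde L_{k,i}(\bm\xi_{k,i}|\theta)}{\sum_{\theta'\in\Theta}\bmu_{k,i-1}(\theta')\,\widetilde L_{k,i}(\bm\xi_{k,i}|\theta')},
\end{equation*}
where $\widetilde L_{k,i}(\zeta|\theta)\define(1-\gamma_{k,i})\,\bm{m}_{k,i-1}(\zeta)+\gamma_{k,i}\,L_k(\zeta|\theta)$. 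In this form the self-aware step looks like a Bayesian step with likelihood $\widetilde L_{k,i}$, which lets me recycle the log-belief-ratio machinery of \cite{zhao2012learning}: define $\bm\lambda_{k,i}(\theta)\define\log\bigl(\bmu_{k,i}(\theta^\circ)/\bmu_{k,i}(\theta)\bigr)$ for $\theta\neq\theta^\circ$, and observe that the intermediate step adds the log-ratio of $\widetilde L_{k,i}$, while the combination step linearly mixes the $\bm\lambda$'s through $A_s$ modulo a Jensen-type correction.

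The core of the argument would then be to form the Perron-weighted scalar $\sum_{k}\,y_{s,k}\,\bm\lambda_{k,i}(\theta)$ and bound its expected increment conditioned on $\mathcal{F}_{i-1}$. In the pure-Bayesian setting this expected increment equals $-\sum_{k}y_{s,k}\,D_{\rm KL}\!\left(L_k(\cdot|\theta^\circ)\,\|\,L_k(\cdot|\theta)\right)\le 0$; here it is instead a non-positive functional of the current forecasts $\{\bm m_{k,i-1}\}$, and this is precisely where $\lim_i\gamma_{k,i}\neq 0$ enters: a strictly positive limit keeps the Bayesian component of $\widetilde L_{k,i}$ from asymptotically vanishing, so the drift stays bounded away from zero whenever $\bm m_{k,i-1}(\cdot)\neq L_k(\cdot|\theta^\circ)$. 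A standard supermartingale-convergence theorem then forces almost-sure convergence of the Lyapunov quantity, and the positivity of $y_s$ together with the primitivity of $A_s$ propagates this convergence into the agent-level statement (\ref{CorrectForecast}).

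The main obstacle I anticipate is the state-dependence of $\widetilde L_{k,i}$: unlike in Lemma~\ref{lemma1}, the one-step drift is not a fixed deterministic KL divergence but a functional of the current forecasts, so certifying that it is non-positive and vanishes only at the correct-forecasting configuration requires a careful Jensen argument combining the convexity of $-\log$, the convex-combination structure of $\widetilde L_{k,i}$, and the hypothesis $\liminf_i\gamma_{k,i}>0$. Handling the transient regime where some $\gamma_{k,i}$ may be close to $0$ or $1$, and ensuring the forecasts $\bm m_{k,i}$ stay bounded away from the boundary of the simplex so that the log-ratios remain finite, are the delicate technical issues I would need to resolve.
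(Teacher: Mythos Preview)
Your reformulation of the self-aware step as a Bayesian update with the mixed likelihood $\widetilde L_{k,i}(\zeta|\theta)=(1-\gamma_{k,i})\bm m_{k,i-1}(\zeta)+\gamma_{k,i}L_k(\zeta|\theta)$ is correct and elegant, and restricting attention to a single strongly-connected block $A_s$ is the right move. However, the Lyapunov you choose---the log-belief-ratio $\bm\lambda_{k,i}(\theta)=\log\bigl(\bmu_{k,i}(\theta^\circ)/\bmu_{k,i}(\theta)\bigr)$---is not well adapted to the \emph{arithmetic} combination step of diffusion learning. After the update $\bmu_{k,i}(\theta)=\sum_\ell a_{\ell k}\bm\psi_{\ell,i}(\theta)$ you have $\bm\lambda_{k,i}(\theta)=\log\sum_\ell a_{\ell k}\bm\psi_{\ell,i}(\theta^\circ)-\log\sum_\ell a_{\ell k}\bm\psi_{\ell,i}(\theta)$, and Jensen applied to the two logarithms yields inequalities in \emph{opposite} directions; there is no single ``Jensen-type correction'' that makes the Perron-weighted $\sum_k y_{s,k}\bm\lambda_{k,i}(\theta)$ into a clean sub- or supermartingale. (Log-ratio Lyapunovs work smoothly for \emph{geometric} belief averaging, not for arithmetic averaging.) Relatedly, your claim that in the pure-Bayesian case the expected increment equals $-\sum_k y_{s,k}D_{\rm KL}(\cdot)$ has the wrong sign: the one-step drift of $\bm\lambda_{k,i}(\theta)$ under $L_k(\cdot|\theta^\circ)$ is $+D_{\rm KL}(L_k(\cdot|\theta^\circ)\|L_k(\cdot|\theta))\ge 0$. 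More seriously, even if you could control the drift of the log-ratio in the self-aware case, its vanishing does \emph{not} directly read off as $\bm m_{k,i-1}\to L_k(\cdot|\theta^\circ)$; the drift depends on both $\bm m_{k,i-1}$ and the competing $\theta$, and you have not explained why zero drift for every $\theta$ forces correct forecasting.

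The paper avoids all of this by taking as Lyapunov the \emph{single} scalar $Q(\bmu_{k,i})=-\log\bmu_{k,i}(\theta^\circ)$ and working in expectation. Two applications of Jensen---one for the combination step, one applied directly to the convex combination $(1-\gamma_{k,i})\bmu_{k,i-1}(\theta^\circ)+\gamma_{k,i}\,\bmu_{k,i-1}(\theta^\circ)L_k(\bm\xi_{k,i}|\theta^\circ)/\bm m_{k,i-1}(\bm\xi_{k,i})$---show that the Perron-weighted risk $J(\bmu_i)=\sum_k y(k)\,\mathbb E\,Q(\bmu_{k,i})$ is monotone nonincreasing, with slack exactly $\sum_\ell y(\ell)\,\gamma_{\ell,i}\,\mathbb E\,D_{\rm KL}\bigl(L_\ell(\cdot|\theta^\circ)\|\bm m_{\ell,i-1}(\cdot)\bigr)$. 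Convergence of $J(\bmu_i)$ then forces this slack to vanish, and the hypothesis $\lim_i\gamma_{\ell,i}\neq 0$ immediately yields $\bm m_{\ell,i}\to L_\ell(\cdot|\theta^\circ)$. The point is that applying Jensen directly to the self-aware convex mixture, rather than first absorbing it into a state-dependent $\widetilde L$, isolates the forecasting KL divergence as the gap term in one stroke; your $\widetilde L$ reformulation, while algebraically valid, hides precisely this structure.
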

		\noindent {\em Proof}: See Appendix \ref{app.A}.
		\qd
		\begin{theorem}[Truth Learning by Self-Aware Sending Agents]
			\label{Theorem2}
			Under the same assumptions of Theorem \ref{theorem1}, self-aware sending agents learn the truth asymptotically and condition (\ref{truthLearning}) continues to hold.
		\end{theorem}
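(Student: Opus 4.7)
The plan is to mirror the proof of Theorem \ref{theorem1} from \cite{zhao2012learning}, invoking Lemma \ref{lemma2} in place of Lemma \ref{lemma1} to handle the inertia term $(1-\gamma_{k,i})\bmu_{k,i-1}(\theta)$ introduced by the self-aware update (\ref{model2}). I would work one sending sub-network $s$ at a time.

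First, Lemma \ref{lemma2} gives $\bm{m}_{k,i}(\zeta_k)\to L_k(\zeta_k|\theta_s^\circ)$ almost surely for every agent $k$ in sub-network $s$ and every $\zeta_k\in Z_k$. Using $\sum_\theta \bmu_{k,i}(\theta)=1$, this rearranges into
\[
\sum_{\theta\in\bar\Theta_k}\bmu_{k,i}(\theta)\bigl[L_k(\zeta_k^\circ|\theta_s^\circ)-L_k(\zeta_k^\circ|\theta)\bigr]\to 0,
\]
with $\bar\Theta_k=\Theta\setminus\Theta_k$. The prevailing-signal assumption (\ref{prev}) makes every summand non-negative, so each term vanishes individually; whenever the prevailing signal does not strictly separate some $\theta\in\bar\Theta_k$ from $\theta_s^\circ$, I would rerun the identity with another $\zeta_k$ that does separate them (such a signal exists by the definition of $\bar\Theta_k$) and combine with the already-vanishing companion terms to force $\bmu_{k,i}(\theta)\to 0$. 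By the global identifiability condition (\ref{assm1}), every $\theta\neq\theta_s^\circ$ lies in $\bar\Theta_m$ for at least one agent $m$ in sub-network $s$, so this already delivers $\bmu_{m,i}(\theta)\to 0$ for some agent.

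To propagate this vanishing to every agent $k$ in the sub-network, I would introduce a Perron-weighted log-belief potential of the form $V_i(\theta)=\sum_k\pi_k\log\bmu_{k,i}(\theta)$, with $\pi_k$ the entry of the Perron vector $y_s$ associated to agent $k$, and show that its conditional drift is strictly negative for every $\theta\neq\theta_s^\circ$. Applying the log-sum inequality to the combination step $\bmu_{k,i}(\theta)=\sum_\ell a_{\ell k}\bm{\psi}_{\ell,i}(\theta)$, together with left-stochasticity of $A_s$ and the identity $A_s^\mathsf{T}\pi=\pi$, reduces the recursion for $V_i(\theta)$ to an expression involving only the local self-aware factors inside (\ref{model2}), whose conditional expected log is strictly negative at any informed agent in view of Lemma \ref{lemma2}. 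Summing the resulting drift bound over $\theta\neq\theta_s^\circ$ and combining with $\sum_\theta\bmu_{k,i}(\theta)=1$ yields (\ref{truthLearning}).

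The hard part is this propagation step. The inertia $(1-\gamma_{k,i})\bmu_{k,i-1}(\theta)$ can in principle keep a non-trivial belief on some $\theta\neq\theta_s^\circ$ alive at an agent whose own observations are uninformative about that state, so one must argue that the information flowing in from neighbors is not washed out by the prior. The hypothesis $\lim_i\gamma_{k,i}\neq 0$ is precisely what rescues us: it bounds the Bayesian correction factor away from unity, so that the negative log-drift contributed by any informed neighbor survives averaging across the network. Turning this intuition into a quantitative, uniform drift estimate strong enough to dominate the randomness in $\bm\xi_{k,i}$ is the delicate technical core of the argument.
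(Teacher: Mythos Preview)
Your high-level plan is exactly the paper's: the paper's entire proof of Theorem~\ref{Theorem2} is the single sentence ``The argument is similar to the proof given in \cite{zhao2012learning},'' implicitly swapping Lemma~\ref{lemma1} for Lemma~\ref{lemma2}. Your first step (correct forecasting plus the prevailing-signal inequality forces $\bmu_{k,i}(\theta)\to 0$ for $\theta\in\bar\Theta_k$) is the right opening move.

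The propagation sketch, however, has a direction problem. Concavity of $\log$ applied to the combination step yields
\[
\log\bmu_{k,i}(\theta)=\log\sum_{\ell}a_{\ell k}\bpsi_{\ell,i}(\theta)\;\geq\;\sum_{\ell}a_{\ell k}\log\bpsi_{\ell,i}(\theta),
\]
so after Perron-averaging you get a \emph{lower} bound on $V_i(\theta)$, not an upper bound; this cannot establish $V_i(\theta)\to-\infty$ for $\theta\neq\theta_s^\circ$. (Incidentally, the eigen-identity you need is $A_s\pi=\pi$, not $A_s^{\mathsf T}\pi=\pi$; since $A_s$ is left-stochastic the latter forces $\pi\propto\one$.) The Perron-weighted potential the paper actually controls is the one at the \emph{true} state, $J(\bmu_i)=-\sum_k y_s(k)\,\Ex\log\bmu_{k,i}(\theta_s^\circ)$, for which Jensen and the nonnegativity of the KL term both point the right way (see the computation in Appendix~\ref{app.A}). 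Boundedness of this risk, together with your first step and the asymptotic relation $\bpsi_{k,i}(\theta)\approx\bmu_{k,i-1}(\theta)$ that correct forecasting delivers, is what drives the propagation in the \cite{zhao2012learning} argument; the drift-on-the-wrong-state potential you propose does not.
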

		\noindent {\em Proof}: The argument is similar to the proof given in \cite{zhao2012learning}.
		\qd 
		
		We therefore find that sending agents, whether self-aware or not, are always able to learn the truth. With regards to receiving agents, we now have the following conclusion. For each agent $k$ in a receiving sub-network $r$, we write  $\gamma_{k,i}=\tau_{k,i}\gamma_{\max}$, where $\gamma_{\max}$ are both positive scalars less than 1, and $\gamma_{\max} = \sup_{k,i} \gamma_{k,i}$.
		\begin{theorem} [Learning by Self-Aware Receiving Agents]
			\label{mainresultself}
			The beliefs of self-aware receiving agents are confined as follows:
			\begin{align}
			\limsup_{i\to\infty}\bmu_{\mathcal{R},i}(\theta) &\preceq W\tran\left(\lim_{i\to\infty}\bmu_{\mathcal{S},i}(\theta)\right)+ \gamma_{\max}C\one_{N_{gR}}  \\
			\liminf_{i\to\infty}\bmu_{\mathcal{R},i}(\theta) &\succeq  W\tran\left(\lim_{i\to\infty}\bmu_{\mathcal{S},i}(\theta)\right)- \gamma_{\max}C\one_{N_{gR}}  \label{resultfinal}
			\end{align}
			where $	C\define(I-T_{RR}\tran)^{-1}$ is an $N_{gR}\times N_{gR}$ matrix.
		\end{theorem}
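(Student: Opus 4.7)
My plan is to lift (\ref{model2}) into a vector recursion for the receiving sub-networks and then dampen the self-awareness perturbation via a Neumann series, exploiting $\rho(T_{RR})<1$ from Lemma \ref{kajd713.lemma}. The key algebraic observation is that (\ref{model2}) may be rewritten as
\begin{align*}
\bpsi_{k,i}(\theta) = \bmu_{k,i-1}(\theta) + \gamma_{k,i}\left[\frac{\bmu_{k,i-1}(\theta)L_k(\bxi_{k,i}|\theta)}{\sum_{\theta'\in\Theta}\bmu_{k,i-1}(\theta')L_k(\bxi_{k,i}|\theta')} - \bmu_{k,i-1}(\theta)\right].
\end{align*}
Since both quantities inside the bracket lie in $[0,1]$, the bracketed correction has absolute value at most one. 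Stacking over the receiving agents and calling the resulting deviation vector $\boldsymbol{\epsilon}_i(\theta)$, we obtain $\bpsi_{\mathcal{R},i}(\theta) = \bmu_{\mathcal{R},i-1}(\theta) + \boldsymbol{\epsilon}_i(\theta)$ with $-\gamma_{\max}\one_{N_{gR}} \preceq \boldsymbol{\epsilon}_i(\theta) \preceq \gamma_{\max}\one_{N_{gR}}$ almost surely.

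Applying the block-triangular structure (\ref{AStruct}), the combination step on the receiving side becomes
\begin{align*}
\bmu_{\mathcal{R},i}(\theta) = T_{SR}\tran\bpsi_{\mathcal{S},i}(\theta) + T_{RR}\tran\bmu_{\mathcal{R},i-1}(\theta) + T_{RR}\tran\boldsymbol{\epsilon}_i(\theta).
\end{align*}
By Theorem \ref{Theorem2}, sending agents still learn the truth under self-awareness, so $\bpsi_{\mathcal{S},i}(\theta)$---being a convex combination of $\bmu_{\mathcal{S},i-1}(\theta)$ and the Bayesian ratio, both converging a.s.\ to the same delta vector---has an a.s.\ limit equal to $\lim_{i\to\infty}\bmu_{\mathcal{S},i}(\theta)$. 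Unrolling the recursion and using $\rho(T_{RR})<1$, the term $(T_{RR}\tran)^i\bmu_{\mathcal{R},0}(\theta)$ vanishes; a standard $\epsilon$-$N$ splitting gives, almost surely,
\begin{align*}
\sum_{j=0}^{i-1}(T_{RR}\tran)^j\, T_{SR}\tran\bpsi_{\mathcal{S},i-j}(\theta) \;\longrightarrow\; (I-T_{RR}\tran)^{-1}T_{SR}\tran\lim_{i\to\infty}\bmu_{\mathcal{S},i}(\theta) \;=\; W\tran\lim_{i\to\infty}\bmu_{\mathcal{S},i}(\theta),
\end{align*}
where the last equality uses the definition $W=T_{SR}(I-T_{RR})^{-1}$ from Lemma \ref{kajd713.lemma}. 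For the perturbation sum, the non-negativity of $T_{RR}\tran$ preserves the componentwise bounds on $\boldsymbol{\epsilon}_i$, so summing a geometric series yields
\begin{align*}
-\gamma_{\max}C\one_{N_{gR}} \;\preceq\; \sum_{j=0}^{i-1}(T_{RR}\tran)^{j+1}\boldsymbol{\epsilon}_{i-j}(\theta) \;\preceq\; \gamma_{\max}C\one_{N_{gR}},
\end{align*}
where we used $\sum_{j=1}^{\infty}(T_{RR}\tran)^j\one_{N_{gR}} \preceq (I-T_{RR}\tran)^{-1}\one_{N_{gR}} = C\one_{N_{gR}}$. Taking $\limsup$ and $\liminf$ componentwise then delivers the two displayed bounds.

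The main obstacle is the rigorous handling of the limit-interchange in the Cesaro-type sum driven by $\bpsi_{\mathcal{S},i}(\theta)$: because convergence is only almost sure, one must split the sum into a geometrically-decaying transient (where the summand is far from its limit but the operator norm of $(T_{RR}\tran)^j$ is small) and a tail (where the summand is uniformly close to its limit), and control both pieces separately. A subsidiary care-point is that every componentwise inequality on $\boldsymbol{\epsilon}_i$ be preserved after multiplication by $T_{RR}\tran$; this is valid precisely because $T_{RR}$ (and hence $T_{RR}\tran$) has non-negative entries. Once these two points are settled, everything else reduces to linear algebra.
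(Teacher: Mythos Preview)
Your argument is correct and follows essentially the same route as the paper: rewrite $\bpsi_{k,i}(\theta)$ as $\bmu_{k,i-1}(\theta)$ plus a $\gamma_{\max}$-bounded correction, pass to the block recursion for $\bmu_{\mathcal{R},i}$, and exploit $\rho(T_{RR})<1$ together with the entrywise non-negativity of $(I-T_{RR}\tran)^{-1}$. The paper's execution is slightly leaner in that it also expands the sending-side intermediate beliefs as $\bmu_{\mathcal{S},i-1}$ plus perturbation, bounds the \emph{combined} perturbation $T_{SR}\tran\bm h_{\mathcal{S},i}+T_{RR}\tran\bm h_{\mathcal{R},i}$ by $\one_{N_{gR}}$ in one stroke via the left-stochasticity of $A$, and then takes $\limsup/\liminf$ directly on the one-step inequality before inverting $(I-T_{RR}\tran)$---thereby avoiding the explicit unrolling and the Ces\`aro splitting you need for $\bpsi_{\mathcal{S},i}(\theta)$.
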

		
		\noindent {\em Proof}: See Appendix \ref{App.C}.
		\qd \\
		
		\noindent This final result coincides with that of Theorem \ref{mainresult1}, but with an additional $O(\gamma_{\max})$ term. This means that if each receiving agent chooses the $\gamma-$coefficient to be small enough, then its belief converges to the same distribution (\ref{result}) of Theorem \ref{mainresult1}. When agent $k$ gives a small weight to its Bayesian update, it means that it is giving its current signal $\bm\xi_{k,i}$ a reduced role to play in affecting its belief formation at time $i$, and it is instead relying more heavily on its prior belief $\bmu_{k,i-1}(\theta)$ and on its communication with its neighbors. When agent $k$ continues to give less importance to any current signal it is receiving, its belief update will be mainly affected by its interaction with influential agents and its neighbors that are also under the influence of sending agents. Therefore, over time, these circumstances will help establish a leader-follower relationship in the network. In other words, the receiving sub-networks will be driven away from the truth and be under total indoctrination by the influential agents. 
		\section{Simulation Results}
		We illustrate the previous results for weakly-connected networks. We assume that the social network has $N=8$ agents interconnected as shown in Fig. \ref{network.label}, which corresponds to the following combination matrix:
		
		\begin{small}
			\begin{equation}
				A=\left[
				\begin{array}{ccccc|ccc}
					0.2		&	0.2 	     	&0.8		& 0 		&	0 		& 0		&0		&0	 \\
					0 .5   	&      0.4   		&0.1		& 0 		& 	0		& 0.2 	&0		 &0.4	 \\
					0.3 		& 	0.4		&0.1		&0		& 	0 		& 0.1		&0		&0	\\
					0          	& 	0  		&0		& 0.4 	& 	0.3 		& 0.3 	&0		&0	\\
					0          	& 	0  		&0		& 0.6 	& 	0.7 		& 0	 	&0		&0	\\
					\hline
					0          	& 	0  		&0		& 0 		& 	0		& 0.2	 	&0.3		&0.2	 \\
					0		&      0		&0		& 0		& 	0		& 0.1		&0.5		&0.3	 \\
					0          	& 	0  		&0		& 0 		& 	0		& 0.1		&0.2		&0.1	 \\
				\end{array}
				\right]
				\label{label.eq11}\end{equation}
		\end{small}
		
		\begin{figure}[h!]
			\centering
			\includegraphics[scale=0.45]{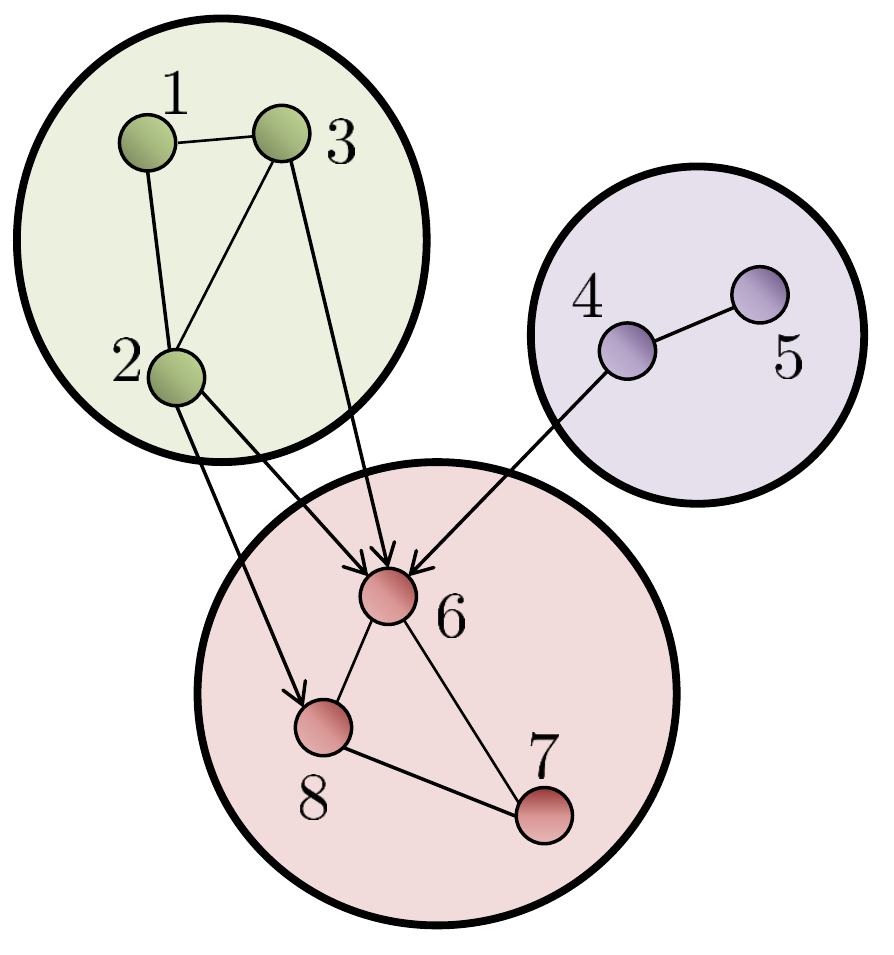} 
			\caption{{\small A weakly connected network consisting of three sub-networks and the corresponding combination policy (\ref{label.eq11}). }}
			\label{network.label}
		\end{figure}
		
		We assume that there are 3 possible events $\Theta=\{\theta_1^\circ,\theta_2^\circ,\theta_3^\circ\}$, where $\theta_1^\circ$ is the true event for the first sending sub-network, $\theta_2^\circ$ is the true event for the second sending sub-network, and $\theta_3^\circ$ is the true event for the receiving sub-network. We further assume that the observational signals of each agent $k$ are binary and belong to $Z_k=\{H,T\}$ where $H$ denotes head and $T$ denotes tail. We consider two cases. In the first case, we assume that agents update their belief according to the model described in (\ref{eqn:diffusion}) and that assumption (\ref{assumImp}) is met. In the second case, we assume that agents follow the second model described in (\ref{model2}) where assumption (\ref{assumImp}) is not met.
		\subsection{First Case}
		In this first case, the likelihood of the head signals for each agent $k$ is selected as the following  $3 \times 8$ matrix:
		\be
		L(H)
		=
		\ba{cccccccc}
		5/8& 3/4& 1/3& 7/8& 5/8& 1/3& 1/4& 5/8 \\
		5/8 & 1/4& 1/6& 7/8& 2/3& 1/3& 1/4& 5/8 \\ 
		1/4 & 3/4& 1/6& 1/3& 2/3& 1/3& 1/4& 5/8 \nn
		\ea
		\ee 
		where each $(j,k)$-th element of this matrix corresponds to $L_k(H/\theta_j)$, i.e., each column corresponds to one agent and each row to one network state. The likelihood of the tail signal is $L(T)=\one_{3\times 8}-L(H)$. We observe from $L(H)$ that assumption (\ref{assumImp}) is met here where for agent $k$ in the receiving sub-network ($k>5$) we have $L_k(\zeta_k|\theta_1^\circ)=L_k(\zeta_k|\theta_2^\circ)=L_k(\zeta_k|\theta_3^\circ)$ for both cases in which $\zeta_k$ is either head or tail.  Assumption (\ref{assumImp}) is met here because the true state of the first sending sub-network $\theta^\circ_1$ belongs to the indistinguishable set of any receiving agent $k$ in the receiving sub-network $3$, i.e., $L_k(\zeta_k|\theta^\circ_1)=L_k(\zeta_k|\theta^\circ_3)$, and the true state of the second sending sub-network $\theta^\circ_2$ belongs to the indistinguishable set of any receiving agent $k$, i.e., $L_k(\zeta_k|\theta^\circ_2)=L_k(\zeta_k|\theta^\circ_3)$, where $k=6,7,8$.
		We further assume that each agent starts at time $i=0$ with an initial belief that is uniform over $\Theta$ and then updates it over time according to the model described in (\ref{eqn:diffusion}). Then, we know from \cite{zhao2012learning} that $\lim_{i \to \infty} \bm\mu_{k,i}(\theta_1^\circ)=1$ for $k=1,2,3$ and  $\lim_{i \to \infty} \bm\mu_{k,i}(\theta_2^\circ)=1$ for $k=4,5$. Now for the agents of the receiving sub-network, we need first to compute:
		\begin{align}
		W\tran&=(I-T_{RR}\tran)^{-1}T_{SR}\tran \nn \\
		&=
		\ba{cccccc}
		0&         0.4045&     0.1489& &0.4466& 0\\
		0&         0.5267&     0.1183& &0.3550& 0\\
		0&         0.7099&     0.0725& &0.2176& 0
		\ea
		\end{align}
		The first row of $W\tran$ corresponds to agent $6$, the second row to agent $7$ and the third row to agent $8$. Now each row is partitioned into two blocks: the first block is of length $N_1=3$ that corresponds to sub-network $1$ of true state $\theta^\circ_1$ and the second block is of length $N_2=2$ that corresponds to sub-network $2$ of true state $\theta^\circ_2$.  Then, according to Theorem \ref{mainresult1}, we can compute the belief at $\theta_1^\circ$ for each receiving agent at steady state, by taking the first block in the agent's corresponding row and summing its elements:
		$$\lim_{i\to\infty}\bm \mu_{k,i}(\theta_1^\circ)=\left\{
		\begin{aligned}
		0+0.4045+0.1489=0.5534,\ &\quad k=6\\
		0+0.5267+0.1183=0.6450,\ &\quad  k=7\\
		0+0.7099+0.0725=0.7824,\ &\quad  k=8
		\end{aligned}\right.$$
		Likewise, we can compute the belief at $\theta_2^\circ$ for each receiving agent at steady state, by taking the second block in the agent's corresponding row and summing its elements:
		$$\lim_{i\to\infty}\bm \mu_{k,i}(\theta_2^\circ)=
		\left\{
		\begin{aligned}
		0.4466+0=0.4466,\ &\quad k=6\\
		0.3550+0=0.3550,\ &\quad  k=7\\
		0.2176+0=0.2176,\ &\quad  k=8
		\end{aligned}\right.$$
		%
		%
		
		We run this example for 7000 time iterations. We assigned to each agent an initial belief that is uniform over $\{\theta_1^\circ,\theta_2^\circ,\theta_3^\circ\}$.
		Figures \ref{figThetaExp1.label} shows the evolution of $\bm\mu_{k,i}(\theta_1^\circ)$ and $\bm\mu_{k,i}(\theta_2^\circ)$ of agents in the receiving sub-network $(k=6,7,8)$. These figures show the convergence of the beliefs of the agents in the receiving sub-networks to the same probability distribution already computed according to the results of Theorem \ref{mainresult1}.  Figure \ref{pmf.label} shows this limiting distribution over $\Theta$ for all receiving agents.
	    \begin{figure}[h]
	    	\epsfxsize 9cm \epsfclipon
	    	\begin{center}
	    		\epsffile{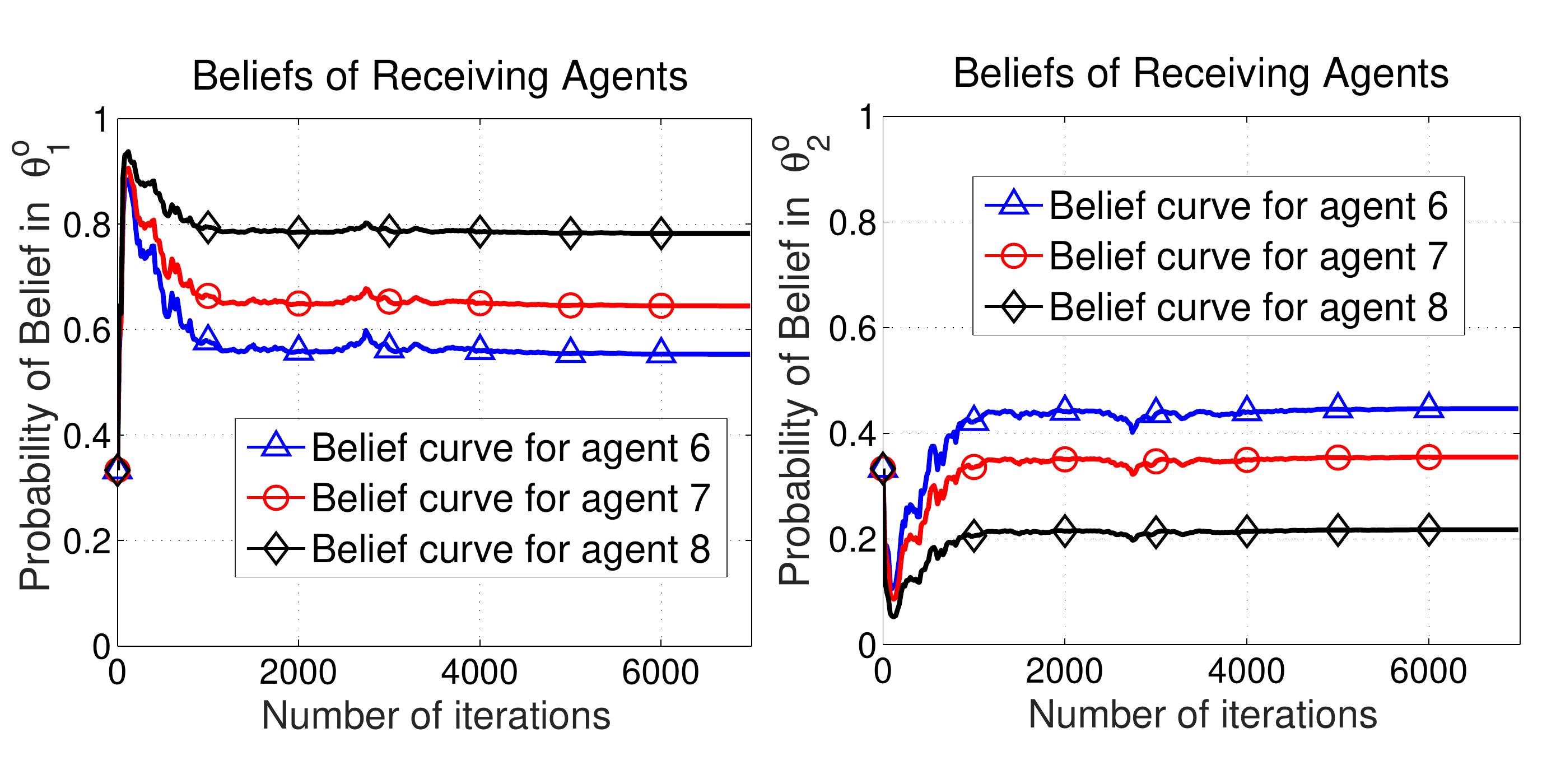} \caption{{\small Evolution of agent $k$ belief over time for $k=6,7,8$} in the first case.\vspace{-0.8cm}}\label{figThetaExp1.label}
	    	\end{center}
	    \end{figure}

		\begin{figure}[h]
			\epsfxsize 8cm \epsfclipon
			\begin{center}
				\epsffile{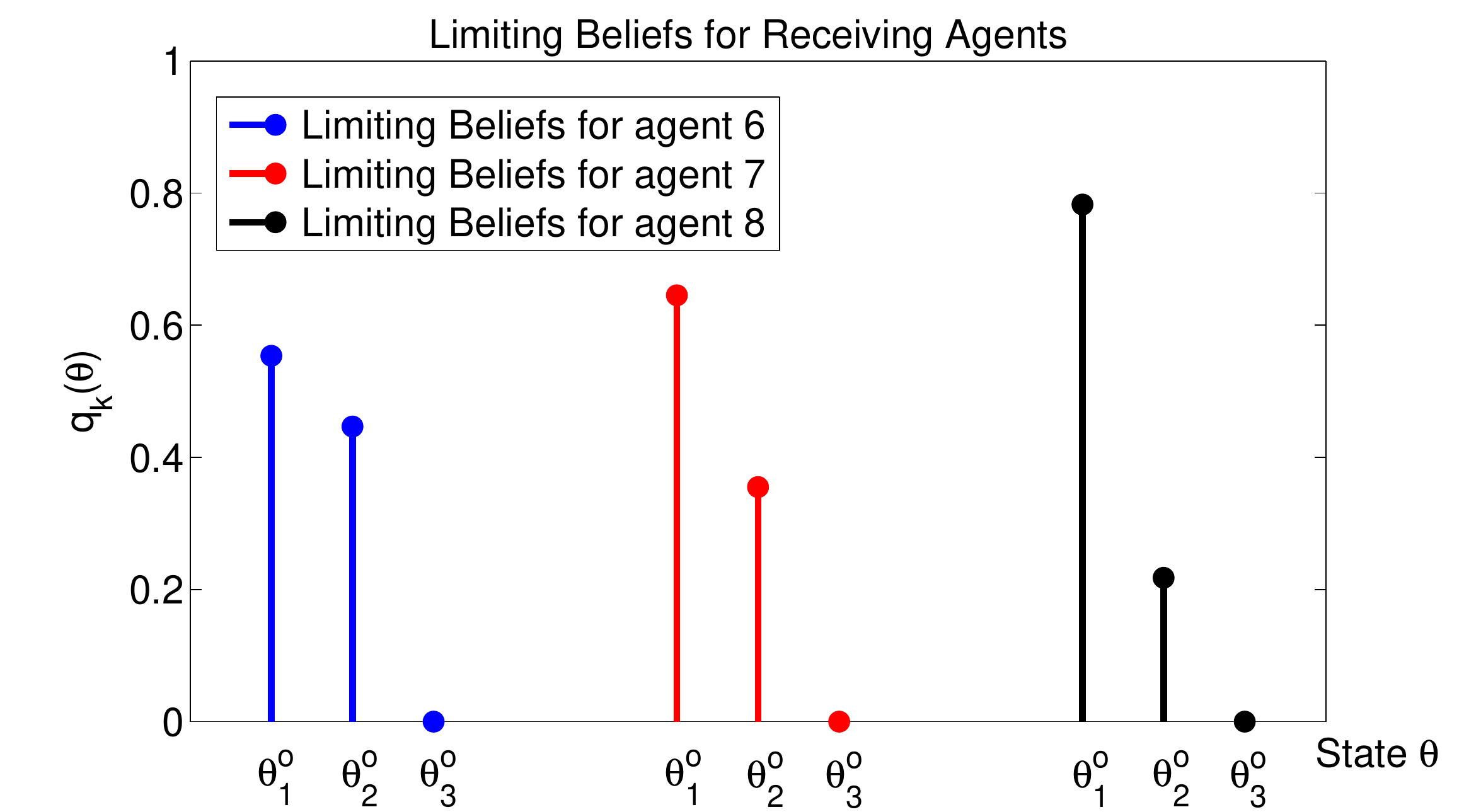} \caption{{\small Limiting distribution of agent $k$, $q_{k}(\theta)$, over $\Theta$ for $k=6,7,8$}}\label{pmf.label}
			\end{center}
		\end{figure}
		\subsection{Second Case}
		We now assume that the likelihood of the head signals for each agent $k$ is selected as the following  $3 \times 8$ matrix:
		 \be
		  L(H)
		  =
		  \ba{cccccccc}
		  5/8& 3/4& 1/3& 7/8& 5/8& 1/2& 2/3& 3/8 \\
		  5/8 & 1/4& 1/6& 7/8& 2/3& 1/3& 3/5& 5/7 \\ 
		  1/4 & 3/4& 1/6& 1/3& 2/3& 2/5& 1/4& 1/3 \nn
		  \ea
		  \ee 
		We observe now from $L(H)$ that assumption (\ref{assumImp}) is not met here where for agent $k$ in the receiving sub-network ($k>5$) we have $L_k(\zeta_k|\theta_1^\circ)\neq L_k(\zeta_k|\theta_2^\circ)\neq L_k(\zeta_k|\theta_3^\circ)$ for both cases in which $\zeta_k$ is either head or tail.  Assumption (\ref{assumImp}) is not met here because $\theta^\circ_1$ does not belong to the indistinguishable set of any receiving agent $k$ in the receiving sub-network $3$, i.e., $L_k(\zeta_k|\theta^\circ_1)\neq L_k(\zeta_k|\theta^\circ_3)$, and $\theta^\circ_2$ does not belong to the indistinguishable set of any receiving agent $k$, i.e., $L_k(\zeta_k|\theta^\circ_2)\neq L_k(\zeta_k|\theta^\circ_3)$, where $k=6,7,8$. We further assume that agents now update their beliefs according to the model described in (\ref{model2}). We choose $\gamma_{k,i}=0.4$ for $k=1,2,3$ (agents of the first sending sub-network) at any $i$, $\gamma_{k,i}=0.5$ for $k=4,5$ (agents of the second sending sub-network) at any $i$ and $\gamma_{k,i}=0.1$ for $k=6,7,8$ (agents of the receiving sub-network) at any $i$. We also assume that each agent starts at time $i=0$ with an initial belief that is uniform over $\Theta$. Then, we know from Theorem \ref{Theorem2} that $\lim_{i \to \infty} \bm\mu_{k,i}(\theta_1^\circ)=1$ for $k=1,2,3$ and  $\lim_{i \to \infty} \bm\mu_{k,i}(\theta_2^\circ)=1$ for $k=4,5$. Figure \ref{figThetaExp2.label} shows the evolution of $\bm\mu_{k,i}(\theta_1^\circ)$ and $\bm\mu_{k,i}(\theta_2^\circ)$ of agents in the receiving sub-network $(k=6,7,8)$. These figures show how the beliefs of the receiving agents are confined around the probability distribution already computed in the previous case.
		    \begin{figure}[h]
		    	\epsfxsize 9cm \epsfclipon
		    	\begin{center}
		    		\epsffile{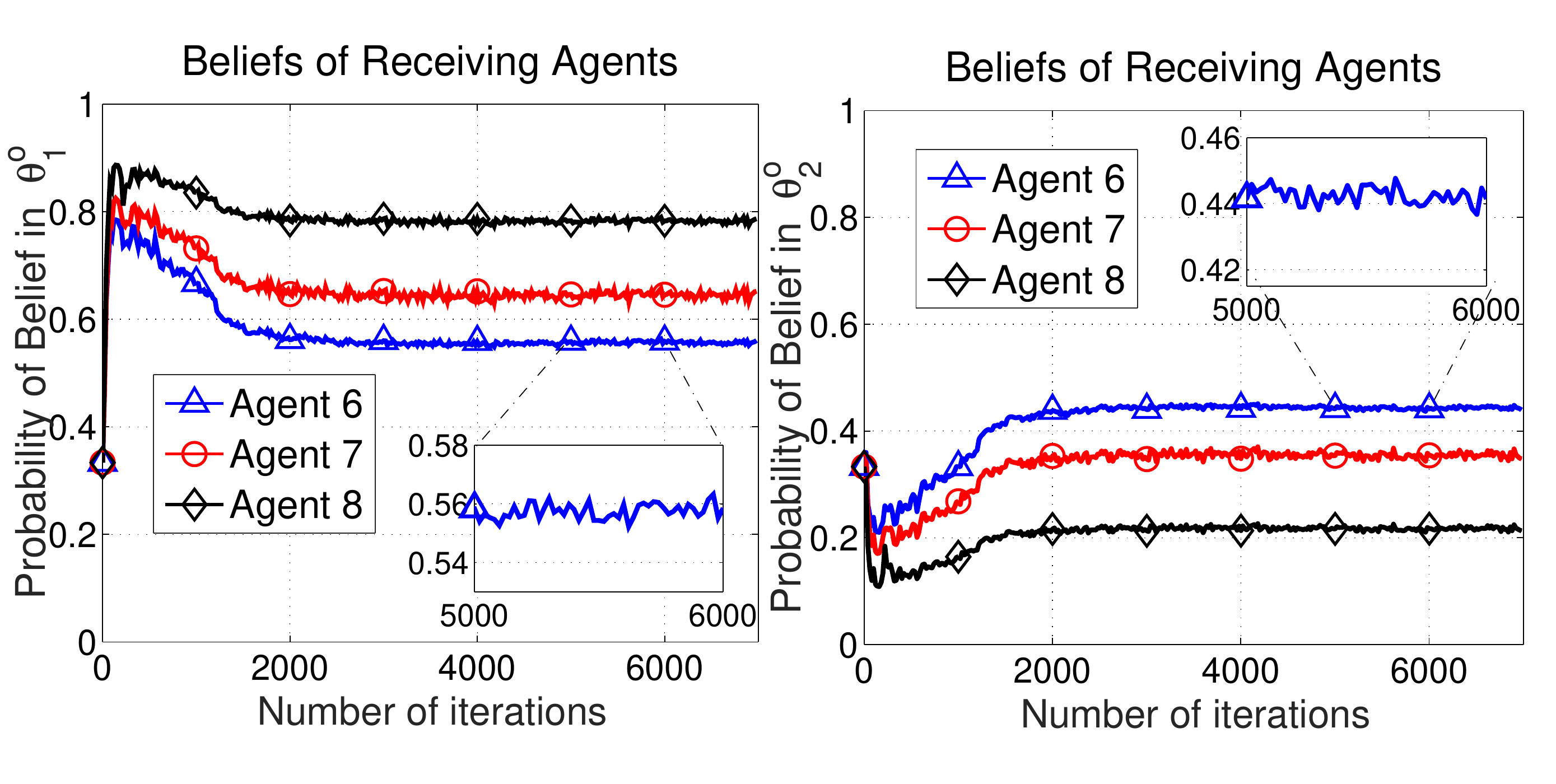} \caption{{\small Evolution of agent $k$ belief over time for $k=6,7,8$} in the second case.}\label{figThetaExp2.label}
		    	\end{center}
		    \end{figure}
		\section{Conclusion}
		In this article, we studied diffusion social learning over weakly-connected networks. We examined the circumstances under which receiving agents come under the total influence of sending agents. This total influence is reflected by forcing the receiving agents to focus their beliefs on the set of true states for the sending sub-networks. We determined for each receiving agent what the exact probability distribution is in steady-state. We also illustrated the results with examples. Future work will focus on how the network can be designed so that receiving agents adopt \emph{specific} limiting beliefs, and how receiving agents can detect the external influence and limit it.
  
 \appendices
\section{Proof of Lemma \ref{lemma:vanishing}} \label{App.B}
The proof is based on showing first that for any receiving agent $k$, it holds that
\be \lim\limits_{i\to\infty}\sum_{\theta\in\Theta^\bullet}\bmu_{k,i}(\theta)=1
\ee
From this result, we will conclude that $\lim\limits_{i\to\infty}\bmu_{k,i}(\theta)=0$ for all $\theta\in\bar\Theta^\bullet$. To examine the evolution of agents' beliefs toward $\Theta^\bullet$, we associate with each agent $k$ the following regret function:
 \begin{align}
    Q^W(\bm\mu_{k,i})\define -\log\left( \sum_{\theta\in\Theta^\bullet}\bm\mu_{k,i}(\theta) \right)
    \label{regretWeak}
    \end{align}
 We view $\bm\mu_{k,i}(\theta)$ as a stochastic process that depends on the sequence of random observations $\{\bm{\xi}_{k,j}\}$ over all $k$ and for all $j\leq i$. Therefore, we shall  examine agent $k$'s individual performance by taking the expectation of $Q^W(\bm\mu_{k,i})$ over these observations. More specifically, we define agent $k$'s risk at time $i$ as
    \begin{align}
    \hspace*{-0.5cm}
    J^W(\bmu_{k,i})\define \Ex_{\mathcal{F}_i} Q^W(\bmu_{k,i})
    \label{riskWe}
    \end{align}
   where $\mathcal{F}_i$ denotes the of sequence $\{\bm{\xi}_{k,j}\}$ over all $k$ and for all $j\leq i$.
    
   \vspace{0.4cm}  
   \noindent {\bf Proof of Lemma \ref{lemma:vanishing}}. We start with agent $k$'s risk at time $i$ defined in (\ref{riskWe}), where $k>N_{gS}$. Recall that $N=N_{gS}+N_{gR}$ represents the total number of agents in the whole network:
    	\begin{align}
    	J^W({\bmu}_{k,i})\hspace{-1cm}& \nn\\
    	&= -  \Exf \log \left( \sum_{\theta\in\Theta^\bullet} \bmu_{k,i}(\theta)\right) \nn\\
    	&\stackrel{(\ref{eqn:diffusion})}{=}	 -\Exf \displaystyle \log \left[ \sum_{\ell=1}^{N}\sum_{\theta\in\Theta^\bullet}a_{\ell k} \boldsymbol\psi_{\ell,i}(\theta)\right] \nn \\
   		&\stackrel{(\ref{eqn:bayesianupdate})}{=} -\Exf \displaystyle \log \left[ \sum_{\ell=1}^{N_{gS}}a_{\ell k}  \sum_{\theta\in\Theta^\bullet}\boldsymbol\psi_{\ell,i}(\theta) \right. \nn \\
   		&\qquad+ \left.   \sum_{\ell = N_{gS}+1}^{N}a_{\ell k}\frac{\sum_{\theta\in\Theta^\bullet} \bmu_{\ell,i-1}(\theta) L_{\ell}(\bxi_{\ell,i}|\theta)  } { \m_{\ell,i-1} (\bxi_{\ell,i})} \right] \nn \\
    	&\stackrel{(a)}{=} -\Exf \displaystyle \log \left[  \sum_{\ell=1}^{N_{gS}}a_{\ell k} \sum_{\theta\in\Theta^\bullet} \boldsymbol\psi_{\ell,i}(\theta) \right.\nn \\ &\qquad+\left.\sum_{r=S+1}^{S+R} \sum_{\ell \in \mathcal{I}_r}a_{\ell k}\frac{ \sum_{\theta\in\Theta^\bullet} \bmu_{\ell,i-1}(\theta) L_{\ell}(\bxi_{\ell,i}|\theta)  } { \m_{\ell,i-1}(\bxi_{\ell,i}) } \right] \nn \\
    	&\stackrel{(b)}{=} -\Exf \displaystyle \log \left[  \sum_{\ell=1}^{N_{gS}}a_{\ell k} \sum_{\theta\in\Theta^\bullet} \boldsymbol\psi_{\ell,i}(\theta) \right.\nn \\ &\qquad+\left.\sum_{r=S+1}^{S+R} \sum_{\ell \in \mathcal{I}_r}a_{\ell k}\frac{ \sum_{\theta\in\Theta^\bullet} \bmu_{\ell,i-1}(\theta) L_{\ell}(\bxi_{\ell,i}|\theta^\circ_r)  } { \m_{\ell,i-1}(\bxi_{\ell,i}) } \right] \nn \\
    	&\stackrel{(c)}{\leq} -\Exf \displaystyle  \left[  \sum_{\ell=1}^{N_{gS}}a_{\ell k} \log \left(\sum_{\theta\in\Theta^\bullet} \boldsymbol\psi_{\ell,i}(\theta)\right)+\sum_{r=S+1}^{S+R} \sum_{\ell \in \mathcal{I}_r}a_{\ell k}\right.\nn \\
    	&\qquad\qquad\qquad\left.\log\frac{ \left(\sum_{\theta\in\Theta^\bullet} \bmu_{\ell,i-1}(\theta)\right) L_{\ell}(\bxi_{\ell,i}|\theta^\circ_r)  } { \m_{\ell,i-1}(\bxi_{\ell,i}) } \right] \nn \\
    	&\stackrel{}{=} -\sum_{\ell=1}^{N_{gS}}a_{\ell k} \Exf \displaystyle \log \left( \sum_{\theta\in\Theta^\bullet} \boldsymbol\psi_{\ell,i}(\theta) \right)\nn \\ 	&  \quad-\displaystyle \sum_{r=S+1}^{S+R} \sum_{\ell \in \mathcal{I}_r}a_{\ell k}  \Exf \log \left( { \sum_{\theta\in\Theta^\bullet} \bmu_{\ell,i-1}(\theta)  }  \right) \nn \\
    	&  \quad -\displaystyle \sum_{r=S+1}^{S+R} \sum_{\ell \in \mathcal{I}_r}a_{\ell k}  \Exf \log \left( \frac{  L_{\ell}(\bxi_{\ell,i}|\theta^{\circ}_r)  } { \m_{\ell,i-1}(\bxi_{\ell,i}) } \right)  \nn\\
    	&\stackrel{(\ref{riskWe})}{=} \displaystyle \sum_{\ell =1}^{N_{gS}}a_{\ell k} J^W({\bpsi}_{\ell,i})+\sum_{r=S+1}^{S+R} \sum_{\ell \in \mathcal{I}_r}a_{\ell k} J^W({\bmu}_{\ell,i-1})  \nn\\
    	& \quad-\displaystyle \sum_{r=S+1}^{S+R} \sum_{\ell \in \mathcal{I}_r}a_{\ell k}  \Exf \log \left(\frac{  L_{\ell}(\bxi_{\ell,i}|\theta^{\circ}_r)  } { \m_{\ell,i-1}(\bxi_{\ell,i}) } \right)  \nn\\
   		&\stackrel{(d)}{=} \displaystyle \sum_{\ell =1}^{N_{gS}}a_{\ell k} J^W({\bpsi}_{\ell,i})+\sum_{r=S+1}^{S+R} \sum_{\ell \in \mathcal{I}_r}a_{\ell k} J^W({\bmu}_{\ell,i-1}) \nn\\
   		&  \qquad-\displaystyle  \Ex _{\mathcal{F}_{i-1}}\left(\sum_{r=S+1}^{S+R} \sum_{\ell \in \mathcal{I}_r}a_{\ell k} \right. \nn\\
   		&  \qquad\qquad\left.\Ex _{\bm\xi{\ell,i}} \left(\log \left(\frac{  L_{\ell}(\bxi_{\ell,i}|\theta^{\circ}_r)  } { \m_{\ell,i-1}(\bxi_{\ell,i}) } \right)|\mathcal{F}_{i-1}\right)\right)  \nn\\
    	&\stackrel{(e)}{\leq} \displaystyle \sum_{r=S+1}^{S+R} \sum_{\ell \in \mathcal{I}_r}a_{\ell k} J^W({\bmu}_{\ell,i-1}) + \sum_{\ell =1}^{ N_{gS}}a_{\ell k} J^W({\bpsi}_{\ell,i}) 
    	\label{longEqua}
    	\end{align}
    	where 
    	\begin{itemize} 
    		\item in the third equality, we only expanded the second term that corresponds to receiving agents in order to study its behavior. We did not do the same thing with the first term because it corresponds to sending agents and we already know how that $\psi_{\ell,i}(\theta)$ will converge with time for any sending agent $\ell$, as later shown in (\ref{sendingPsi}).
    		\item in step $(a)$, we split the second summation corresponding to receiving agents into $R$ groups, with each group corresponding to one receiving sub-network. Moreover, the symbol $\mathcal{I}_r$ denotes the set of indexes of agents that belong to receiving sub-network $r$;
    		\item in step $(b)$, we replaced $L_\ell(\bm\xi_{\ell,i}|\theta)$ by $L_\ell(\bm\xi_{\ell,i}|\theta^\circ_r)$. This follows from assumption (\ref{assumImp}): for any $\theta$ that is in $\Theta^\bullet$, $L_\ell(\zeta_{\ell}|\theta)=L_\ell(\zeta_{\ell}|\theta^\circ_r)$, for any $\zeta_{\ell}\in Z_{\ell}$;	
    		\item in step $(c)$, we applied the convexity property of $-\log(.)$ since the elements $\{a_{\ell k}\}$ form a convex combination for each agent $k$;
    		\item in step $(d)$, we applied the conditional expectation property $\left(\Ex_X[g(X)]=\Ex_Y[\Ex_{X|Y}[g(X)|Y]]\right)$ as follows: 
    		\be
    		\hspace*{-4.5cm} \Ex _{\mathcal{F}_{i}} \log \left( \frac{  L_{\ell}(\bxi_{\ell,i}|\theta^{\circ})  } { \m_{\ell,i-1}(\bxi_{\ell,i}) } \right) \nn 
    		\ee
    		\vspace*{-0.5cm}
    		\bq
    		&=&\Ex_{\mathcal{F}_{i-1}}\left(\Ex _{\mathcal{F}_{i}|\mathcal{F}_{i-1}} \left(\log \frac{  L_{\ell}(\bxi_{\ell,i}|\theta^{\circ})  } {\bm m_{\ell,i-1}(\bm\xi_{\ell,i}) } |\mathcal{F}_{i-1}\right)\right)  \nn \\
    		&=&\Ex_{\mathcal{F}_{i-1}}\left(\Ex _{\bm\xi_{\ell,i}} \left(\log \frac{  L_{\ell}(\bxi_{\ell,i}|\theta^{\circ})  } {\bm m_{\ell,i-1}(\bm\xi_{\ell,i}) } |\mathcal{F}_{i-1}\right)\right)
    		\label{condProperty}
    		\eq 
    		\item in step $(e)$, we replaced the previous expression in $(d)$ by an upper bound using the non-negativity of the KL-divergence from $L_\ell(.|\theta^\circ_r)$ to $\bm\m_{\ell,i-1}(.)$ \cite{Cover}.
    	\end{itemize}  
    	\vspace{-0.24cm}	
To continue with the argument we collect the risk values of $S-$agents and $R-$agents into two vectors as follows:
 \begin{align}
 	\hspace{-0.1cm}J^W(\bpsi_{\mathcal{S},i})\define&\;
 	{\rm col}\left\{
 	J^W(\bpsi_{1,i} ),
 	\hdots,
 	J^W(\bpsi_{N_{gS},i} ) 
 	\right\}\label{JR}  \\
 	\hspace{-0.1cm}J^W(\bmu_{\mathcal{R},i})\define&\;
 	{\rm col} \left\{
 	J^W(\bmu_{N_{gS}+1,i} ),
 	\hdots,
 	J^W(\bmu_{N,i} ) 
 	\right\}
 	\label{JS}
 \end{align}
Then, from (\ref{longEqua}), we write the vector inequality:
\be
J^W(\bmu_{\mathcal{R},i}) \preceq T_{RR}\tran J^W(\bmu_{\mathcal{R},i-1})+T_{SR}\tran J^W(\bpsi_{\mathcal{S},i})
\ee
We now establish the convergence of this inequality. We first consider the term $J^W(\bpsi_{\mathcal{S},i})$. We know that agents in the sending sub-networks can learn the truth if the assumptions mentioned in Lemma  \ref{lemma1} and Theorem \ref{theorem1} are met. One of the assumptions is that at least one agent in each strongly-connected sub-network $s$ starts with a non-zero prior belief at $\theta^\circ_s$. Let us denote this agent by $\ell_o$. As shown in \cite{zhao2012learning}, this condition guarantees that for large enough $i$, $\bm{\mu}_{k,i}(\theta_s^\circ)>0$ for all $k$ in this sub-network.  Accordingly, it also holds that for large enough $i$ agents in this sub-network will have nonzero intermediate beliefs at $\theta_s^\circ$, i.e., $\bm{\psi}_{k,i}(\theta_s^\circ)>0$. This implies that $J^W(\bpsi_{\mathcal{S},i})\succeq0$ and $T_{SR}\tran J^W(\bpsi_{\mathcal{S},i})\succeq0$ for large enough $i$ since the elements of $T_{SR}$ are all non-negative. Let us now consider agent $k'$ of a receiving sub-network $r$, which has agent $\ell'$ from sending sub-network $s$ in its neighborhood. After large enough $i$, 
\be 
\bmu_{k',i}(\theta^\circ_s)=\sum_{\ell \in \mathcal{N}_{k'}} a_{\ell k'} \bpsi_{\ell}(\theta^\circ_s)\geq a_{\ell' k'} \bpsi_{\ell'}(\theta^\circ_s) >0
\ee 
Then, in the next time step, all agents of sub-network $r$ that have agent $k'$ in their neighborhood will have non-zero belief at $\theta^\circ_s$. Since the received sub-network $r$ is connected, it follows that after large enough $i$,
\be 
\bmu_{k,i}(\theta^\circ_s)>0 \implies  \sum_{\theta \in \Theta^\bullet}\bmu_{k,i}(\theta)>0
\ee
for all agents $k$ that belong to sub-network $r$. We employ the same argument for all other receiving sub-networks. Therefore, $\sum_{\theta \in \Theta^\bullet}\bmu_{k,i}(\theta)>0$ for any $k>N_{gR}$ so that $J^W(\bmu_{\mathcal{R},i})\succeq0$ for large enough $i$.
Thus,
\be 
0\preceq J^W(\bmu_{\mathcal{R},i}) \preceq T_{RR}\tran J^W(\bmu_{\mathcal{R},i-1})+T_{SR}\tran J^W(\bpsi_{\mathcal{S},i})
\ee
Furthermore, any agent $k$ in any sending sub-network $s$ can learn asymptotically its own true state, so that $\lim_{i\to\infty}\bmu_{k,i}(\theta^\circ_s)\aseq1$ implies
\begin{align}
 \lim_{i\to\infty}\bpsi_{k,i}(\theta^\circ_s)&=\lim_{i\to\infty} \frac{\bmu_{k,i}(\theta^\circ_s)L_k(\bm\xi_{k,i}|\theta^\circ_s)}{\sum_{\theta\in\Theta}\bmu_{k,i}(\theta)L_k(\bm\xi_{k,i}|\theta)} \nn \\
 &=\lim_{i\to\infty} \frac{L_k(\bm\xi_{k,i}|\theta^\circ_s)}{L_k(\bm\xi_{k,i}|\theta^\circ_s)}\aseq 1 \label{sendingPsi}
\end{align}
The denominator in the second equality follows from the fact that $\lim_{i\to\infty}\bmu_{k,i}(\theta^\circ_s)\aseq1$ for any agent $k$ of sending sub-network $s$. It follows that $\lim\limits_{i\to\infty}\sum_{\theta\in\Theta^\bullet}\bpsi_{k,i}(\theta)\aseq1$ for any $k\leq{N_{gS}}$.
Therefore, $\lim\limits_{i\to\infty}J^W(\bpsi_{S,i})=0$. Moreover, since $\rho(T_{RR})<1$ \cite{ying2014information}, we conclude that
$
\lim_{i\to\infty}J^W(\bmu_{\mathcal{R},i})=0
$
which implies that 
    	\be
    	\lim_{i\to\infty} J^W({\bmu}_{k,i})  = 0 , \quad \forall \; k>N_{gS} \label{eq.costallJzero}
    	\ee
  As previously discussed after large enough $i$, $\sum_{\theta\in\Theta^\bullet}\bmu_{k,i}(\theta)>0$ so that $-\log\left(\sum_{\theta\in\Theta^\bullet}\bmu_{k,i}(\theta)\right) \geq 0$. Using the definition of $J^W(\bmu_{k,i})$ in (\ref{riskWe}), it holds that $J^W(\bmu_{k,i})$ represents the expectation over $\mathcal{F}_{i}$ of non-negative quantities. Hence, result (\ref{eq.costallJzero}) implies
  \be
  \hspace*{-0.3cm}\lim_{i\to\infty} -\log\left(\sum_{\theta\in\Theta^\bullet}\bmu_{k,i}(\theta)\right)=0 
 \text{ ;} \lim_{i\to\infty}  \sum_{\theta\in\Theta^\bullet}\bmu_{k,i}(\theta)\aseq1
  \ee \qd
  \vspace{-0.5cm}
   \section{Proof of Lemma \ref{lemma:correct}}
   \label{App.B1}Assume agent $k$ belongs to sub-network $r$ and $\zeta_k\in\Z_k$:
   	\begin{align}
   	\lim_{i\to\infty}\m_{k,i}(\zeta_k)&= \lim_{i\to\infty} \sum_{\theta \in \Theta} \bmu_{k,i} (\theta) L_k(\zeta_{k} | \theta) \nn \\
   	{}&\stackrel{(a)}=\lim_{i\to\infty}\sum_{\theta\in\Theta^\bullet} \bmu_{k,i} (\theta) L_k(\zeta_{k} | \theta) \nn \\
   	{}&\stackrel{(b)}=\left(\lim_{i\to\infty} \sum_{\theta\in\Theta^\bullet} \bmu_{k,i} (\theta)\right) L_k(\zeta_{k} | \theta_r^\circ) \nn \\
   	{}&\aseq L_{k} ({\zeta}_k|\theta^{\circ}_r) \label{eq.m=l}
   	\end{align}
    where step $(a)$ follows from the result of Lemma \ref{lemma:vanishing} and step $(b)$ follows from assumption (\ref{assumImp}).
    \qd 
    \section{Proof of Theorem \ref{mainresult1}}\label{App.B2} The intermediate belief of any agent $k$ is given by:
    \be
    \bm\psi_{k,i}(\theta) = \frac{\bmu_{k,i-1}(\theta)L_k(\bm\xi_{k,i}|\theta)}{\bm m_{k,i-1}(\bm\xi_{k,i})}
    \ee
    Let us assume that agent $k$ belongs to receiving sub-network $r$.  Using Lemma \ref{lemma:correct}, we have for any  $\theta\in\Theta^\bullet$:
    \begin{align} 
    \lim_{i\to\infty}\bm\psi_{k,i}(\theta)=\lim_{i\to\infty}\frac{\bmu_{k,i-1}(\theta)L_k(\bm\xi_{k,i}|\theta)}{\bm m_{k,i-1}(\bm\xi_{k,i})} =\lim_{i\to\infty}\bmu_{k,i-1}(\theta)
    \end{align}
    We can establish the same property for any agent in a sending sub-network because (\ref{eq.m=l}) was already proven for sending agents in \cite{zhao2012learning}. It follows that, for any agent $k$,
    \be
    \lim_{i \to \infty} \bmu_{k,i} (\theta)  = \lim_{i \to \infty} \sum_{\ell \in \mathcal{N}_k} a_{\ell k}\bmu_{k,i-1} (\theta) 
    \label{similarResult}
    \ee
    for any $\theta \in \Theta^\bullet$. We defined the vectors $\bmu_{\mathcal{S},i}(\theta)$ in (\ref{muS}) and $\bmu_{\mathcal{R},i}(\theta)$ in (\ref{muR}). Then,
    \begin{align} 
   \lim_{i\to\infty} 
    \ba {c}
    \bmu_{\mathcal{S},i} (\theta) \\
    \bmu_{\mathcal{R},i}(\theta) 
    \ea =
    A\tran \left(\lim_{i\to\infty}
    \ba {c}
    \bmu_{\mathcal{S},i-1} (\theta) \\
    \bmu_{\mathcal{R},i-1}(\theta) 
    \ea \right) 
    \end{align}
   from which we obtain using the structure of $A$ in (\ref{AStruct}):
    \be
    \lim_{i\to\infty} \bmu_{\mathcal{R},i}(\theta)=T_{SR}\tran\lim_{i\to \infty}\bmu_{\mathcal{S},i}(\theta)+T_{RR}\tran\lim_{i\to\infty}\bmu_{\mathcal{R},i}(\theta)
    \ee
    We then conclude that
    \begin{align} 
    \lim_{i\to\infty}\bmu_{\mathcal{R},i}(\theta)&=(I-T_{RR}\tran)^{-1}T_{SR}\tran\left(\lim_{i\to\infty}\bmu_{\mathcal{S},i}(\theta)\right) \nn \\
    &=W\tran\left(\lim_{i\to\infty}\bmu_{\mathcal{S},i}(\theta)\right)
    \end{align}
    \qd
   
     \section{Proof of Lemma \ref{lemma2}} \label{app.A}
     We start by introducing some notation and definitions. Since we are now interested in examining the evolution of the agents' beliefs toward the true state, let us introduce the true probability mass function $p(\theta)$ defined over $\Theta$, namely: 
     \be
     p(\theta)=
     \delta_{\theta,\theta^\circ}
     \define \left\{
     \begin{aligned}
     	1,&\quad {\rm if}\quad \theta = \theta^\circ\\
     	0,&\quad {\rm otherwise}
     \end{aligned}
     \right. 
     \label{eqn:trueProba1}
     \ee
     The evolution of the belief of agent $k$  toward the true state can be analyzed by computing the KL divergence of $\bm\mu_{k,i}(\theta)$ from $p(\theta)$ at each time instant $i$. We therefore introduce the new regret function for agent $k$ at time $i$ as: 
     \bq
     Q(\bm\mu_{k,i})&\define& D_{KL}(p||\bm\mu_{k,i})=\sum_{\theta\in \Theta} p(\theta)\log\left(\frac{p(\theta)}{\bm\mu_{k,i}(\theta)}\right)\nn \\ 
     &=& -\log \bm\mu_{k,i}(\theta^\circ)
     \label{riskStrongly}
     \eq
     where we used the convention that $0\log0=0$. We shall again define agent $k$'s individual risk at time $i$ as
     \be
     J(\bmu_{k,i})\define \Ex_{\mathcal{F}_i} Q(\bmu_{k,i})=-\Ex_{\mathcal{F}_i}\log \bmu_{k,i}(\theta^\circ)
     \ee
     where $\mathcal{F}_i$ denotes the history of $\{\bm{\xi}_{k,j}\}$ over all $k$ and for all $j\leq i$. We then assess the overall network performance by considering the weighted aggregate risk:
     \be
     J(\bmu_i)\define \sum_{k=1}^{N}y(k)J(\bmu_{k,i})
     \ee
     where the $\{y(k)\}$ denote the entries of the Perron vector, $y$, of the primitive left-stochastic matrix $A$, as defined by (\ref{perronFrob}). To prove Lemma \ref{lemma2}, namely, the ability of agents to arrive at correct forecasts, we prove first the convergence of the sequence $\{J(\bmu_i)\}$ as $i\to\infty$. This convergence will then imply the correct forecasting by agents. 
     
     \vspace{0.5cm}
     \noindent {\bf Proof of Lemma \ref{lemma2}}:  We assumed in the statement of the lemma that at least one agent $\ell_o$ starts with a non-zero prior belief at $\theta^\circ$, i.e., $\bmu_{\ell_o,0}(\theta^\circ)>0$. As shown in \cite{zhao2012learning}, this condition guarantees that for large enough $i$, $\bmu_{k,i}(\theta^\circ)>0$ for all $k \in \mathcal{N}$, which implies that the terms of the time sequence $\{Q(\bmu_{k,i})\}$ assume nonnegative values for large $i$ and for any agent $k$. Thus, the time sequences $\{J(\bmu_{k,i})\}$ and $\{J(\bmu_{i})\}$ are non-negative for large enough $i$. Let us now expand agent $k$'s risk for large time $i$:
     \begin{align}
     	J(\bm\mu_{k,i})&=-\Ex_{\mathcal{F}_i} \log \bm{\mu}_{k,i}(\theta^\circ) \nn \\
     	{} &=-\Ex_{\mathcal{F}_i} \log \left(\sum_{\ell \in \mathcal{N}_k } a_{\ell k}\bm{ \psi}_{\ell,i} (\theta^\circ)\right)\nn \\
     	{}&\stackrel{(a)}\leq-\Ex_{\mathcal{F}_i}\left[\sum_{\ell \in \mathcal{N}_k } a_{\ell k}\log\left(\bm\psi_{\ell,i}(\theta^\circ)\right)\right]\nn \\ 
     	{}&\stackrel{(\ref{model2})}=-\Ex_{\mathcal{F}_i}\left[\sum_{\ell \in \mathcal{N}_k } a_{\ell k}\log\biggl(\left(1-\gamma_{\ell,i}\right)\left(\bm{\mu}_{\ell,i-1}(\theta^\circ)\right)\right.\nn \\
     	&\qquad+\left.\gamma_{\ell,i}\left(\frac{ \bm{\mu}_{\ell,i-1} (\theta^\circ) L_\ell(\bm\xi_{\ell,i} | \theta^\circ) } {\bm m_{\ell,i-1}(\bm\xi_{\ell,i})} \right)\biggr)\right]\nn \\
     	{}&\stackrel{(b)}\leq-\Ex_{\mathcal{F}_i}\left[\sum_{\ell \in \mathcal{N}_k } a_{\ell k}\biggl(\left(1-\gamma_{\ell,i}\right)\log\left(\bm{\mu}_{\ell,i-1}(\theta^\circ)\right) \right. \nn \\
     	&\qquad+\left.\gamma_{\ell,i}\log\left(\frac{ \bm{\mu}_{\ell,i-1} (\theta^\circ) L_\ell(\bm\xi_{\ell,i} | \theta^\circ) } {\bm m_{\ell,i-1}(\bm\xi_{\ell,i})} \right)\biggr)\right]\nn \\
     	{}&=-\Ex_{\mathcal{F}_i}\left(\sum_{\ell \in \mathcal{N}_k } a_{\ell k}\log\left(\bm{\mu}_{\ell,i-1}(\theta^\circ)\right) \right) \nn \\
     	&\qquad-\Ex_{\mathcal{F}_i}\left(\sum_{\ell \in \mathcal{N}_k } a_{\ell k}\gamma_{\ell,i}\log\left(\frac{ L_\ell(\bm\xi_{\ell,i} | \theta^\circ) } {\bm m_{\ell,i-1}(\bm\xi_{\ell,i})} \right)\right)\nn \\
     	{}&\stackrel{(c)}= -\Ex_{\mathcal{F}_i}\left(\sum_{\ell \in \mathcal{N}_k} a_{\ell k} \log\left(\bm{\mu}_{\ell,i-1}(\theta^\circ)\right)\right)-\Ex_{\mathcal{F}_{i-1}} \nn \\
     	{}&\quad\left(\sum_{\ell \in \mathcal{N}_k} a_{\ell k}\gamma_{\ell,i}\Ex_{\bm\xi_{\ell,i}}\left[\log\left(\frac{ L_\ell(\bm\xi_{\ell,i} | \theta^\circ) } {\bm m_{\ell,i-1}(\bm\xi_{\ell,i})}\right)|\mathcal{F}_{i-1}\right]\right)\nn \\
     	{}&\stackrel{(d)}\leq -\sum_{\ell \in \mathcal{N}_k } a_{\ell k}\Ex_{\mathcal{F}_i}\log\left(\bm{\mu}_{\ell,i-1}(\theta^\circ)\right) \nn \\
     	&= \sum_{\ell \in \mathcal{N}_k}a_{\ell k}J(\bmu_{\ell,i-1}) \label{proof1}
     \end{align}
     where 
     \begin{itemize}
     	\item steps $(a)$ and $(b)$ follow from the convexity of $-\log(.)$;
     	\item step $(c)$ follows from the conditional expectation property $\left(\Ex_X[g(X)]=\Ex_Y[\Ex_{X|Y}[g(X)|Y]]\right)$  as in (\ref{condProperty});
     	\item step $(d)$ follows by replacing the expression in $(c)$ by an upper bound using the non-negativity of the KL divergence from $L_\ell(.|\theta^\circ)$ to $\bm\m_{\ell,i-1}(.)$ according to Gibb's inequality \cite{Cover}.
     \end{itemize}
     Accordingly, the overall performance at time $i$, satisfies:
     \begin{align}
     	J(\bm\mu_{i})&\stackrel{(a)}\leq\sum_{k=1}^{N}y(k)\sum_{\ell \in \mathcal{N}_k } a_{\ell k}J(\bm\mu_{\ell,i-1}) \nn\\
     	{}&\stackrel{(b)}=\sum_{\ell=1}^{N}y({\ell})J(\bm\mu_{\ell,i-1})= J(\bm\mu_{i-1})    
     \end{align}
     where step $(a)$ follows from (\ref{proof1}), and step $(b)$ follows from (\ref{perronFrob}). Therefore, the sequence $\{J(\bm\mu_{i})\}$ is a decreasing sequence. But, since this sequence is non-negative, we conclude that $\{J(\bmu_i)\}$ converges to a real number according to the monotone convergence theorem of real numbers \cite{rudin}.
     
     We now establish the ability of agents to attain correct predictions.
     From step $(c)$ in (\ref{proof1}), we get
     \begin{align}
     	J(\bm\mu_{k,i})	\leq \sum_{\ell \in \mathcal{N}_k} a_{\ell k}J(\bm\mu_{\ell,i-1}) \qquad \qquad \qquad \qquad \qquad \qquad \quad  &\nn \\
     	-\Ex_{\mathcal{F}_{i-1}}\left(\sum_{\ell \in \mathcal{N}_k} a_{\ell k}\gamma_{\ell,i}\Ex_{\bm\xi_{\ell,i}}\left[\log\left(\frac{ L_\ell(\bm\xi_{\ell,i} | \theta^\circ) } {\bm m_{\ell,i-1}(\bm\xi_{\ell,i})}\right)|\mathcal{F}_{i-1}\right]\right) \nn \\
     \end{align}
     Then, rearranging terms,
     \begin{align}
     	\sum_{\ell \in \mathcal{N}_k } a_{\ell k}J(\bm\mu_{\ell,i-1})-J(\bm\mu_{k,i})\geq\qquad \qquad \qquad \qquad \qquad \qquad   &\nn \\
     	\Ex_{\mathcal{F}_{i-1}}\left(\sum_{\ell \in \mathcal{N}_k} a_{\ell k}\gamma_{\ell,i}\Ex_{\bm\xi_{\ell,i}}\left[\log\left(\frac{ L_\ell(\bm\xi_{\ell,i} | \theta^\circ) } {\bm m_{\ell,i-1}(\bm\xi_{\ell,i})}\right)|\mathcal{F}_{i-1}\right]\right)\nn  \\
     \end{align}
     Scaling by $y(k)$, summing over $k$, and using (\ref{perronFrob}) we get:
     \begin{align}
     	\sum_{\ell=1}^{N}y(\ell) J(\bm\mu_{\ell,i-1})-\sum_{k=1}^{N}y(k)J(\bm\mu_{k,i}) \geq\qquad \qquad \qquad  &\nn\\
     	\Ex_{\mathcal{F}_{i-1}}\left(\sum_{\ell \in \mathcal{N}_k} y(\ell)\gamma_{\ell,i}\Ex_{\bm\xi_{\ell,i}}\left[\log\left(\frac{ L_\ell(\bm\xi_{\ell,i} | \theta^\circ) } {\bm m_{\ell,i-1}(\bm\xi_{\ell,i})}\right)|\mathcal{F}_{i-1}\right]\right) 
     \end{align}
     Then,
     \begin{align}
     	J(\bm\mu_{i-1})-J(\bm\mu_{i}) \geq\qquad \qquad \qquad \qquad \qquad \qquad \qquad  &\nn \\ \Ex_{\mathcal{F}_{i-1}}\left(\sum_{\ell \in \mathcal{N}_k} y(\ell)\gamma_{\ell,i}\Ex_{\bm\xi_{\ell,i}}\left[\log\left(\frac{ L_\ell(\bm\xi_{\ell,i} | \theta^\circ) } {\bm m_{\ell,i-1}(\bm\xi_{\ell,i})}\right)|\mathcal{F}_{i-1}\right]\right)
     \end{align}
     Since $\{J(\bmu_{i})\}$ is a convergent sequence, it is also a Cauchy sequence \cite{rudin} and, therefore,
     \begin{align}
     	0=\lim_{i\to\infty}\left[J(\bm\mu_{i-1})-J(\bm\mu_{i})\right]\geq\lim_ {i\to\infty} \qquad \qquad \qquad \qquad  &\nn \\ 
     	\Ex_{\mathcal{F}_{i-1}}\left(\sum_{\ell \in \mathcal{N}_k} y(\ell)\gamma_{\ell,i}\Ex_{\bm\xi_{\ell,i}}\left[\log\left(\frac{ L_\ell(\bm\xi_{\ell,i} | \theta^\circ) } {\bm m_{\ell,i-1}(\bm\xi_{\ell,i})}\right)|\mathcal{F}_{i-1}\right]\right) \nn & \\ 
     	\geq 0 \qquad \quad
     \end{align}
     where the rightmost inequality follows from the non-negativity of the KL-divergence. We conclude that: 
     \begin{align}
     	\lim_ {i\to\infty}\Ex_{\mathcal{F}_{i-1}}\left(\sum_{\ell \in \mathcal{N}_k} y(\ell)\gamma_{\ell,i}\right.\quad\quad\quad\quad\quad\quad\quad\quad\quad\quad\quad&\nn \\
     \left.\Ex_{\bm\xi_{\ell,i}}\left[\log\left(\frac{ L_\ell(\bm\xi_{\ell,i} | \theta^\circ) } {\bm m_{\ell,i-1}(\bm\xi_{\ell,i})}\right)|\mathcal{F}_{i-1}\right]\right)=0  &
     \end{align}
     Since we assumed that $\lim \limits_{i \to \infty} \gamma_{k,i}\neq0 $ for any $k$, $y(\ell)>0$  from (\ref{perronFrob}), and $\Ex_{\bm\xi_{\ell,i}}\left[\log\left(\frac{ L_\ell(\bm\xi_{\ell,i} | \theta^\circ) } {\bm m_{\ell,i-1}(\bm\xi_{\ell,i})}\right)|\mathcal{F}_{i-1}\right]\geq 0$ from the non-negativity of the KL-divergence, then
     \be
     \lim_ {i\to\infty}\Ex_{\bm\xi_{\ell,i}}\left[\log\left(\frac{ L_\ell(\bm\xi_{\ell,i} | \theta^\circ) } {\bm m_{\ell,i-1}(\bm\xi_{\ell,i})}\right)|\mathcal{F}_{i-1}\right]=0
     \ee
     Thus,
     \be
     \lim_{i\to\infty}\sum_{\zeta_\ell\in Z_\ell}L_\ell(\zeta_\ell|\theta^\circ)\log\left( \frac{L_\ell(\zeta_\ell|\theta^\circ)}{m_{\ell,i-1}(\zeta_\ell)}\right)=0
     \label{KLeq}
     \ee
     Let
     \be
     f_{\ell,i-1}\define\sum_{\zeta_\ell\in Z_\ell}L_\ell(\zeta_\ell|\theta^\circ)\log\left( \frac{L_\ell(\zeta_\ell|\theta^\circ)}{m_{\ell,i-1}(\zeta_\ell)}\right)
     \label{KLeqb}
     \ee
     where $f_{\ell,i}$ represents the KL-divergence of $m_{\ell,i}(.)$ from $L_{\ell}(.|\theta^\circ)$. We know from Gibb's inequality \cite{Cover} that the KL-divergence of a probability distribution from another distribution achieves the value zero only when the two distributions are equal. Since the KL-divergence $f_{\ell,i}$ converges to zero as $i\to\infty$ and $L_{\ell}(.|\theta^\circ)$ is a fixed distribution, this implies that $m_{\ell,i}(.)$ should converge, i.e., its limit exists and it takes the following value:
     \be
     \lim_{i \to\infty} m_{\ell,i}(\zeta_{\ell})=L_\ell(\zeta_{\ell}|\theta^\circ)
     \ee
     for any $\zeta_\ell \in Z_\ell$. Since this result is achieved for any realization of observational signals $\mathcal{F}_{i-1}$, we conclude that:
     \be
     \lim_{i \to\infty} \bm m_{\ell,i}(\zeta_{\ell})\aseq L_\ell(\zeta_{\ell}|\theta^\circ)
     \ee
     for any $\ell \in \mathcal{N}$ and any $\zeta_\ell \in \Z_\ell$.
     \qd 
    \section{Proof of Theorem \ref{mainresultself}} \label{App.C}
    According to model (\ref{model2}), the intermediate belief of any agent $k$ in a receiving group can be written as follows:
    \begin{align}
    &\bm\psi_{k,i}(\theta) = \bm\mu_{k,i-1}(\theta) \nn \\
    &+\gamma_{k,i}\left[\bm\mu_{k,i-1}(\theta)\left(\frac{  L_k(\bm\xi_{k,i} | \theta) } {\sum_{\theta '} \bm\mu_{k,i-1} (\theta ') L_k(\bm\xi_{k,i} | \theta')}-1\right)\right] 
    \end{align}
    We assume that $\gamma_{k,i}=\tau_{k,i}\gamma_{\max}$, where $\tau_{k,i}$ and $\gamma_{\max}$ are both nonnegative scalars less than one. Then,
    \begin{align}
    &\bm\psi_{k,i}(\theta) =\bm\mu_{k,i-1}(\theta) \nn \\
    &+\gamma_{\max}\left[\tau_{k,i}\bm\mu_{k,i-1}(\theta)\left(\frac{  L_k(\bm\xi_{k,i} | \theta) } {\sum_{\theta ' } \bm\mu_{k,i-1} (\theta ') L_k(\bm\xi_{k,i} | \theta')}-1\right)\right] 
    \end{align}
    We define the auxiliary function:
    \begin{align}
    \hspace*{-0.15cm}\bm h_{k,i}(\theta,\zeta_k) \define
    \tau_{k,i}\bm\mu_{k,i-1}(\theta)\left(\frac{  L_k(\zeta_k | \theta) } {\sum_{\theta '} \bm\mu_{k,i-1} (\theta ') L_k(\zeta_{k} | \theta')}-1\right)
    \label{hdefinition}
    \end{align}
    where $\theta \in \Theta$ and $\zeta_k\in Z_k$, so that
    \begin{align}
    \bm\psi_{k,i}(\theta) = \bm\mu_{k,i-1}(\theta)+\gamma_{\max} \bm h_{k,i} (\theta,\bm\xi_{k,i})
    \end{align}
   Therefore,
    \begin{align}
    \bm\mu_{k,i}(\theta) &=\sum_{\ell \in \mathcal{N}_k} a_{\ell k} \bm \psi_{\ell,i}(\theta) \nn\\
    &= \sum_{\ell \in \mathcal{N}_k } a_{\ell k} \bm\mu_{\ell,i-1} (\theta) + \gamma_{\max}\sum_{\ell \in \mathcal{N}_k } a_{\ell k} \bm h_{\ell,i}(\theta,\bm\xi_{\ell,i})
    \end{align}
    
   \noindent Let us introduce the vectors:
     \begin{align}
     	\hspace{-0.1cm}\bm h_{\mathcal{S},i}(\theta,\bm\xi_{S,i})\define&\;
     	{\rm col}\left\{
     	\bm h_{1,i}(\theta,\bm\xi_{1,i}),
     	\hdots,
     	\bm h_{N_{gS},i}(\theta,\bm\xi_{N_{gS},i}) 
     	\right\} 
     	\end{align}
     	\vspace{-0.5cm}
        \begin{align}
     	\hspace{-0.13cm}\bm h_{\mathcal{R},i}(\theta,\bm\xi_{R,i})\define\quad\quad\quad\quad\quad\quad\quad\quad\quad\quad\quad\quad\quad\qquad&\nn \\
     	{\rm col} \left\{
        \bm h_{N_{gS}+1,i}(\theta,\bm\xi_{N_{gS}+1,i}),
     	\hdots,
        \bm h_{N,i}(\theta,\bm\xi_{N,i})
     	\right\}&
     	\end{align}
     	and,
        \begin{align}
     	\hspace{-0.1cm} \bm\xi_{\mathcal{S},i}\define&\;
     	{\rm col}\left\{
     	\bm\xi_{1,i},
     	\hdots,
     	\bm\xi_{N_{gS},i} 
     	\right\} \\
     	\hspace{-0.1cm} \bm\xi_{\mathcal{R},i}\define&\;
     	{\rm col} \left\{
     	\bm\xi_{N_{gS}+1,i},
     	\hdots,
     	\bm\xi_{N,i}
     	\right\}
     \end{align}
      
    Recall that we defined the vectors $\bmu_{\mathcal{S},i}(\theta)$ in (\ref{muS}) and $\bmu_{\mathcal{R},i}(\theta)$ in (\ref{muR}). Then, we have

    \begin{align}
    \hspace{-0.3cm}
    \ba {c}
    \hspace{-0.2cm}\bmu_{\mathcal{S},i} (\theta)\hspace{-0.2cm} \\
    \hspace{-0.2cm}\bmu_{\mathcal{R},i}(\theta) \hspace{-0.2cm}
    \ea
    &= A\tran
    \left(\hspace{-0.05cm}\ba {c}\hspace{-0.2cm}
    \bmu_{\mathcal{S},i-1} (\theta) \hspace{-0.2cm}\\\hspace{-0.2cm}
    \bmu_{\mathcal{R},i-1} (\theta) \hspace{-0.2cm}
    \ea +\gamma_{\max}\ba {c}
   \hspace{-0.2cm} \bm h_{\mathcal{S},i} (\theta,\bm\xi_{\mathcal{S},i}) \hspace{-0.2cm}\\
    \hspace{-0.2cm}\bm h_{\mathcal{R},i} (\theta,\bm\xi_{\mathcal{R},i}) \hspace{-0.2cm}
    \ea \hspace{-0.05cm}\right)   
    \end{align}

   \noindent Using the structure of $A$ in (\ref{AStruct}), it follows that
    \begin{align}
    \bmu_{\mathcal{R},i}(\theta)&= T_{RR}\tran\bmu_{\mathcal{R},i-1} (\theta)+ T_{SR}\tran\bmu_{\mathcal{S},i-1}(\theta)  \nn  \\
    &\;\;+ \gamma_{\max} \left( T_{SR}\tran \bm h_{\mathcal{S},i}(\theta,\bm\xi_{\mathcal{S},i})+T_{RR}\tran \bm h_{\mathcal{R},i}(\theta,\bm\xi_{\mathcal{R},i}) \right) \label{Ineq}
    \end{align}
    We study the convergence of this recursion. Let 
     \begin{align}
     	\bm\zeta_{\mathcal{S}}\define
     	{\rm col}\left\{
        \bm\zeta_{1},
     	\hdots,
        \bm\zeta_{N_{gS}} 
     	\right\},\;\bm\zeta_{\mathcal{R}}\define
     	{\rm col} \left\{
     	 \bm\zeta_{N_{gS}+1},
     	\hdots,
     	 \bm\zeta_{N} 
     	\right\}
     \end{align}
    We will first establish that 
    \be
    \gamma_{\max} \left( T_{SR}\tran \bm h_{\mathcal{S},i}(\theta,\zeta_{\mathcal{S}})+T_{RR}\tran \bm h_{\mathcal{R},i}(\theta,\zeta_{\mathcal{R}})\right)=O(\gamma_{\max}) \label{bigOresult}
    \ee
    for any $\theta$, $\zeta_\mathcal{S}$ and $\zeta_\mathcal{R}$.    
    \begin{lemma}
    	For any $k \in \mathcal{N}$, $i\geq0$, $\theta \in \Theta$ and $\zeta_k \in Z_k$, it holds that
    	\be
    	|\bm h_{k,i}(\theta,\zeta_k)| \leq 1
    	\label{lemmaBound}
    	\ee 
    \end{lemma}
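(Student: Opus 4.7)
The plan is to prove the bound directly from the definition (\ref{hdefinition}) by exploiting two elementary observations: $\bmu_{k,i-1}(\theta)\in[0,1]$ because it is a probability mass, $\tau_{k,i}\in[0,1]$ by construction, and the denominator $\bm m_{k,i-1}(\zeta_k)=\sum_{\theta'}\bmu_{k,i-1}(\theta')L_k(\zeta_k|\theta')$ is an average that dominates any individual term.

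Concretely, I would abbreviate $\mu\define\bmu_{k,i-1}(\theta)$, $L\define L_k(\zeta_k|\theta)$, and $m\define\bm m_{k,i-1}(\zeta_k)$, so that
\begin{equation}
\bm h_{k,i}(\theta,\zeta_k)\;=\;\tau_{k,i}\,\mu\!\left(\frac{L}{m}-1\right)\;=\;\tau_{k,i}\cdot\frac{\mu L-\mu m}{m}.
\end{equation}
The key inequality is that $m\ge \mu L$, because $\mu L$ is the single term corresponding to $\theta'=\theta$ in the non-negative sum defining $m$. This yields the upper bound
\begin{equation}
\mu\!\left(\frac{L}{m}-1\right)\;=\;\frac{\mu L}{m}-\mu\;\le\;1-\mu\;\le\;1.
\end{equation}
For the lower bound, non-negativity of $\mu L/m$ gives
\begin{equation}
\mu\!\left(\frac{L}{m}-1\right)\;\ge\;-\mu\;\ge\;-1.
\end{equation}
Combining the two directions, $|\mu(L/m-1)|\le\max(\mu,1-\mu)\le 1$, and multiplying by $\tau_{k,i}\in[0,1]$ preserves this bound, establishing (\ref{lemmaBound}).

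There is essentially no obstacle here; the only thing to be a little careful about is the edge case $m=0$, which would make $\bm h_{k,i}$ undefined. However, $m=0$ forces $\mu L=0$ as well (since $\mu L\le m$), so the factor $\mu(L/m-1)$ appearing in (\ref{hdefinition}) can be consistently interpreted as zero in that degenerate situation, and the stated bound still holds. I would mention this as a brief parenthetical remark rather than treat it as a separate case.
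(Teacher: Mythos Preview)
Your proof is correct and follows essentially the same approach as the paper: both arguments rewrite $\bm h_{k,i}$ as $\tau_{k,i}\bigl(\mu L/m-\mu\bigr)$, use $0\le \mu L/m\le 1$ together with $\mu\in[0,1]$ to obtain $-\mu\le \mu(L/m-1)\le 1-\mu$, and then absorb $\tau_{k,i}\in[0,1]$. Your explicit justification of $\mu L\le m$ and the remark on the degenerate case $m=0$ are reasonable additions but do not change the structure of the argument.
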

    \begin{proof}
    From (\ref{hdefinition}),
    \be
    \bm h_{k,i}(\theta,\zeta_k) = \tau_{k,i}\left(\frac{ \bm\mu_{k,i-1}(\theta) L_k(\zeta_k | \theta) } {\sum_{\theta '} \bm\mu_{k,i-1} (\theta ') L_k(\zeta_{k} | \theta')}-\bm\mu_{k,i-1}(\theta)\right)
    \ee
    Since $\tau_{k,i}$ is a nonnegative scalar that is less than one, and since
    \be 
    0\leq\frac{ \bm\mu_{k,i-1}(\theta) L_k(\zeta_k | \theta) } {\sum_{\theta '} \bm\mu_{k,i-1} (\theta ') L_k(\zeta_{k} | \theta')} \leq1
    \ee
    for any $k \in \mathcal{N}$, $i\geq0$, $\theta \in \Theta$ and $\zeta_k \in Z_k$, we conclude that
    \be
    \bm h_{k,i}(\theta,\zeta_k)\geq-\tau_{k,i}\bmu_{k,i-1}(\theta)\geq-\bmu_{k,i-1}(\theta)
    \ee
    and
    \be
    \bm h_{k,i}(\theta,\zeta_k)\leq \tau_{k,i}\left(1-\bmu_{k,i-1}(\theta)\right)\leq 1-\bmu_{k,i-1}(\theta)
    \ee
    Moreover, we know that 
       $ 0\leq\bmu_{k,i}(\theta)\leq 1$ for all $k$, $i$ and $\theta$. We then conclude that
    \be
    -1\leq\bm h_{k,i}(\theta,\zeta_k)\leq1 
    \ee 
    \end{proof}
  \noindent From (\ref{lemmaBound}), we get for any $i\geq0$ and $\theta\in\Theta$,
    \bq
    \hspace*{-3cm} \left|T_{SR}\tran \bm h_{\mathcal{S},i}(\theta,\zeta_{\mathcal{S}})+T_{RR}\tran \bm h_{\mathcal{R},i}(\theta,\zeta_{\mathcal{R}})\right| \nn
     \eq
     \vspace*{-0.5cm}
     \begin{align}
     &\preceq T_{SR}\tran \bm |\bm h_{\mathcal{S},i}(\theta,\zeta_{\mathcal{S}})|+T_{RR}\tran \bm |\bm h_{\mathcal{R},i}(\theta,\zeta_{\mathcal{R}})| \nn \\ 
     &\stackrel{(a)}\preceq T_{SR}\tran  \one_{N_{gS}} +T_{RR}\tran  \one_{N_{gR}}\stackrel{(b)}= \one_{N_{gR}}
    \end{align}
     where $(a)$ follows from (\ref{lemmaBound}) and $(b)$ follows from the left-stochasticity of the combination matrix $A$. Note that the above inequality, as well as the absolute value operator, are element-wise. Moreover, $\one_{N_{gR}}$ is a vector of all ones of size $N_{gR}$ and $\one_{N_{gS}}$ is a vector of all ones of size $N_{gS}$.
    Thus,
    \begin{align} 
     \hspace*{-0.3cm}\gamma_{\max}\left|T_{SR}\tran \bm h_{\mathcal{S},i}(\theta,\zeta_{\mathcal{S}})+T_{RR}\tran \bm h_{\mathcal{R},i}(\theta,\zeta_{\mathcal{R}})\right| \preceq \gamma_{\max} \one_{N_{gR}} \label{finally}
    \end{align}
    for all $i\geq0$. This fact leads to the desired conclusion (\ref{bigOresult}). In this way, equality (\ref{Ineq}) implies:
    \begin{align}
    \bmu_{\mathcal{R},i}(\theta) &\preceq T_{RR}\tran\bmu_{\mathcal{R},i-1} (\theta)+ T_{SR}\tran\bmu_{\mathcal{S},i-1}(\theta)+ \gamma_{\max}\one_{N_{gR}} \nn \\
    \bmu_{\mathcal{R},i}(\theta) &\succeq T_{RR}\tran\bmu_{\mathcal{R},i-1} (\theta)+ T_{SR}\tran\bmu_{\mathcal{S},i-1}(\theta)- \gamma_{\max}\one_{N_{gR}}\nn \\
    \label{Ineq2}
    \end{align}
   We have $\rho(T_{RR}\tran)<1$ and $\lim_{i\to\infty}\bmu_{\mathcal{S},i}(\theta)$ exists since agents of sending sub-networks can learn asymptotically the truth. Then,
   \begin{align}
    \hspace*{-0.3cm}\limsup_{i\to\infty}\bmu_{\mathcal{R},i}(\theta) &\preceq T_{RR}\tran\left(\limsup_{i\to\infty}\bmu_{\mathcal{R},i-1} (\theta)\right)\nn \\
   &{}+ T_{SR}\tran\left(\lim_{i\to\infty}\bmu_{\mathcal{S},i-1}(\theta)\right)+ \gamma_{\max}\one_{N_{gR}} \nn \\
   \hspace*{-0.5cm}\liminf_{i\to\infty}\bmu_{\mathcal{R},i}(\theta) &\succeq T_{RR}\tran\left(\liminf_{i\to\infty}\bmu_{\mathcal{R},i-1} (\theta)\right) \nn \\
   &{}+T_{SR}\tran\left(\lim_{i\to\infty}\bmu_{\mathcal{S},i-1}(\theta)\right)- \gamma_{\max}\one_{N_{gR}}  \label{Ineq2Limits} 
   \end{align}
   It follows that
   \begin{align}
     \limsup_{i\to\infty}\bmu_{\mathcal{R},i}(\theta) &\preceq W\tran\left(\lim_{i\to\infty}\bmu_{\mathcal{S},i}(\theta)\right)+ \gamma_{\max}C\one_{N_{gR}} \nn \\
     \liminf_{i\to\infty}\bmu_{\mathcal{R},i}(\theta) &\succeq  W\tran\left(\lim_{i\to\infty}\bmu_{\mathcal{S},i}(\theta)\right)- \gamma_{\max}C\one_{N_{gR}}  \label{Ineq3Limits2}
     \end{align}
     where $C=(I-T_{RR}\tran)^{-1}$. \qd
    \bibliographystyle{IEEEbib}

    \newpage
    \begin{IEEEbiographynophoto}{Hawraa Salami}
    	(S'16) received her B.S. and M.S. degrees from American University of Beirut (AUB) and University of California, Los Angeles (UCLA) in 2013 and 2014, respectively. She is currently working towards the PhD degree in Electrical Engineering at UCLA. Her research interests include social learning, social network modelling, multi-agent network processing and statistical signal processing.
    \end{IEEEbiographynophoto}
    \vspace{-5in}
    \begin{IEEEbiographynophoto}{Bicheng Ying}
    	(S'15) received his B.S. and M.S. degrees from Shanghai Jiao Tong University (SJTU) and University of California, Los Angeles (UCLA) in 2013 and 2014, respectively. He is currently working towards the PhD degree in Electrical Engineering at UCLA. His research interests include multi-agent network processing, large-scale machine learning, distributed optimization, and statistical signal processing.
    \end{IEEEbiographynophoto}
      \vspace{-5in} 
    \begin{IEEEbiographynophoto}{Ali H. Sayed}
    	 (S'90-M'92-SM'99-F'01) is distinguished professor and former chairman of electrical engineering at the University of California, Los Angeles, where he directs the UCLA Adaptive Systems Laboratory. An author of over 480 scholarly publications and six books, his research involves several areas including adaptation and learning, statistical signal processing, distributed processing, network science, and biologically-inspired designs. His work has been recognized with several awards including the 2014 Athanasios Papoulis Award from the European Association for Signal Processing, the 2015 Education Award, the 2013 Meritorious Service Award, and the 2012 Technical Achievement Award from the IEEE Signal Processing Society, the 2005 Terman Award from the American Society for Engineering Education, the 2003 Kuwait Prize, and the 1996 IEEE Donald G. Fink Prize. He served as Distinguished Lecturer for the IEEE Signal Processing Society in 2005 and as Editor-in Chief of the IEEE TRANSACTIONS ON SIGNAL PROCESSING (2003-2005). His articles received several Best Paper Awards from the IEEE Signal Processing Society in 2002, 2005, 2012, and 2014. He is a Fellow of both the IEEE and the American Association for the Advancement of Science (AAAS). He is recognized as a Highly Cited Researcher by Thomson Reuters.
    \end{IEEEbiographynophoto}
 \end{document}